\newtheorem{theorem}{Theorem}
\newtheorem{conjecture}{Conjecture}
\newtheorem{lemma}{Lemma}
\theoremstyle{remark}
\newtheorem*{remark}{Remark}
\newcommand{\complexi}{\mathrm{i}}
\newcommand\numberthis{\addtocounter{equation}{1}\tag{\theequation}}
\newcommand{\w}{\mathbf{w}}
\newcommand{\x}{\mathbf{x}}
\newcommand{\y}{\mathbf{y}}
\newcommand{\z}{\mathbf{z}}
\DeclareMathOperator{\Tr}{Tr}
\DeclareMathOperator{\wt}{wt}
\begin{document}

\title{The hardness of quantum spin dynamics}

\author{Chae-Yeun Park}
\email{chae-yeun@xanadu.ai}
\affiliation{Xanadu, Toronto, ON, M5G 2C8, Canada}

\author{Pablo A M Casares}
\affiliation{Xanadu, Toronto, ON, M5G 2C8, Canada}

\author{Juan Miguel Arrazola}
\affiliation{Xanadu, Toronto, ON, M5G 2C8, Canada}

\author{Joonsuk Huh}
\email{joonsukhuh@skku.edu}
\affiliation{Xanadu, Toronto, ON, M5G 2C8, Canada}
\affiliation{Department of Chemistry, Sungkyunkwan University, Suwon 16419, Korea}
\affiliation{SKKU Advanced Institute of Nanotechnology (SAINT), Sungkyunkwan University, Suwon 16419, Korea}
\affiliation{Institute of Quantum Biophysics, Sungkyunkwan University, Suwon 16419, Korea}

\date{\today}

\begin{abstract}
Recent experiments demonstrated quantum computational advantage in random circuit sampling and Gaussian boson sampling.
However, it is unclear whether these experiments can lead to practical applications even after considerable research effort.
On the other hand, simulating the quantum coherent dynamics of interacting spins has been considered as a potential first useful application of quantum computers, providing a possible quantum advantage.
Despite evidence that simulating the dynamics of hundreds of interacting spins is challenging for classical computers, concrete proof is yet to emerge. 
We address this problem by proving that sampling from the output distribution generated by a wide class of quantum spin Hamiltonians is a hard problem for classical computers.
Our proof is based on the Taylor series of the output probability, which contains the permanent of a matrix as a coefficient when bipartite spin interactions are considered.
We devise a classical algorithm that extracts the coefficient using an oracle estimating the output probability.
Since calculating the permanent is \#P-hard, such an oracle does not exist unless the polynomial hierarchy collapses.
With an anticoncentration conjecture, the hardness of the sampling task is also proven.
Based on our proof, we estimate that an instance involving about 200 spins will be challenging for classical devices but feasible for intermediate-scale quantum computers with fault-tolerant qubits.  
\end{abstract}

\maketitle

\section{Introduction}

There is great interest in demonstrating that quantum computers can outperform powerful supercomputers. The most successful paradigm so far involves showing that sampling from the output distribution of certain quantum circuits is classically difficult~\cite{hangleiter2023computational}.
Three prominent examples of quantum sampling problems are \textsf{BosonSampling}~\cite{Aaronson2011}, random circuit sampling (\textsf{RCS})~\cite{boixo2018characterizing}, and instantaneous quantum polynomial-time circuits (\textsf{IQP})~\cite{bremner2016average}.
Theoretical works assert that a classical computer can only efficiently simulate the same tasks if the polynomial hierarchy (\textsf{PH}) collapses to the third level.
Hence, under the widely believed conjecture that \textsf{PH} is infinite, those sampling tasks are classically intractable.

Based on these theoretical suggestions, several experiments have realized \textsf{RCS} and a Gaussian variant of \textsf{BosonSampling} using noisy qubits~\cite{arute2019quantum} or photonic modes~\cite{zhong2020quantum,madsen2022quantum}.
However, the main drawback of these quantum advantage results is that they have little to do with practical applications.
These circuits are designed to generate nearly random bitstrings, and despite several efforts~\cite{bromley2020applications}, we have yet to find convincing practical applications.

On the other hand, solving quantum many-body problems is expected to be the first useful application of quantum computers~\cite{bravyi2022future,beverland2022assessing}.
Quantum many-body Hamiltonians describing interacting quantum particles are essential in physics, serving as models for various ranges of systems, from nuclei to materials~\cite{fetter1971quantum}.
As we are often interested in low-temperature physics, finding the properties of the lowest energy state, i.e., the ground state, of quantum many-body Hamiltonians has been the central problem over decades~\cite{kohn1965self}.
Indeed, many important questions in chemistry and materials science, such as the electronic, magnetic, and mechanical properties of molecules and materials, are answered by attributes of the ground state of these systems~\cite{ashcroft1976solid,levine2014quantum}.

Due to its importance, significant effort has been devoted to computing properties of the ground state, such as the ground-state energy~\cite{avella2012strongly}.
However, it is well agreed that solving the ground-state energy problem of a large quantum many-body Hamiltonian with hundreds of interacting particles is highly challenging to a classical computer~\cite{elfving2020will}. 
This fact is further supported by computational complexity theory, where it has been shown that finding the ground-state energy is \textsf{QMA-hard}~\cite{kitaev2002classical} --- a quantum version of \textsf{NP-hard}. 
Similar to \textsf{NP-hard} problems that are regarded as difficult for classical computers, it is widely believed that \textsf{QMA-hard} problems are challenging even for quantum computers~\cite{Bookatz14QMA}.
Hence, we do not expect an efficient classical algorithm for finding the ground-state energy of an arbitrary Hamiltonian to exist.

Recently, the quantum coherent dynamics of many-body systems have also gained considerable attention as they are related to experiments with engineered quantum systems~\cite{gross2017quantum} and also provide an avenue to understand quantum chaos~\cite{srednicki1994chaos}.
The dynamical problem can, in principle, be more complicated than the ground-state problem because it involves a manifold containing many eigenstates, including the ground state.
Despite such evidence, the hardness of quantum dynamics has only been proven for a limited class of Hamiltonians, such as Ising variants with commuting terms~\cite{gao2017quantum,bermejo2018architectures,haferkamp2020closing,Fefferman2017}, or a spin model that can be translated into interacting bosons~\cite{Peropadre2017}. 
The hardness proof of quantum dynamics for a broader class of Hamiltonians would be significant given that simulating the dynamics of quantum systems is a potential first practical application of quantum computing hardware, both for noisy~\cite{clinton2022towards} and early fault-tolerant quantum computers~\cite{beverland2022assessing}.
A hardness proof can also guide the type of experiment that should be tackled to claim quantum computational advantage.
A recent experiment~\cite{kim2023evidence} and follow-up classical simulation results~\cite{tindall2023efficient,kechedzhi2023effective,beguvsic2023fast} demonstrate that lots of subtleties exist in proving quantum computational advantage and why a concrete theoretical guide is vital.

In this work, we prove the classical hardness of simulating quantum spin dynamics. Our target Hamiltonians include the Ising, XX, and Heisenberg models, some of the most widely studied many-particle spin-$1/2$ Hamiltonians. 
While some variants of \textsf{IQP} circuits already imply that computing the output distribution at a certain time is classically hard for some Ising-type Hamiltonians~\cite{gao2017quantum,Fefferman2017,bermejo2018architectures,haferkamp2020closing}, these results are still far from physically interesting scenarios as the complexity is only proved at a single time point~\cite{gao2017quantum,bermejo2018architectures,haferkamp2020closing} or for an exponentially short time~\cite{Fefferman2017}.
The dynamics of those Hamiltonians are also rather simple due to their commuting structure.

On the other hand, Peropadre et al.~\cite{Peropadre2017} studied the classical hardness of sampling from the output distributions of the XY model, whose terms are non-commuting. 
Nevertheless, their hardness result is only valid in the limit where spins approximate hard-core bosons.
This implies that their protocol would require hundreds of thousands of spins to claim the quantum computational advantage.
In contrast, we prove that approximating the output distribution is classically hard for a wide class of non-commuting spin Hamiltonians without taking such a limit.
We also argue that our protocol only requires about 200 spins to demonstrate the advantage, which is more experimentally feasible.

We utilize recently developed techniques for bounding errors from truncated Taylor series~\cite{bouland2019complexity,movassagh2023hardness,bouland2022noise,krovi2022average} to prove hardness.
Precisely, we show that an efficient classical algorithm estimating a polynomial function within a small additive error can also estimate any of its high-order derivatives. 
This result is particularly interesting for a certain class of Hamiltonians where the high-order derivative of the output distribution is given by the permanent of a matrix.
Under the well-established hardness result that computing permanent of random Gaussian matrices is \textsf{\#P-hard}~\cite{Aaronson2011}, we prove that estimating the output distribution for these Hamiltonians is also classically intractable.

With an anticoncentration conjecture, we also prove that an efficient classical algorithm that can sample from the output distribution does not exist unless the \textsf{PH} collapses.
The main technical challenge here is that, unlike other known sampling hardness results, our problem does not have the hiding property.
Roughly speaking, the hiding property refers to the fact that the output bitstring can be hidden by randomizing the circuit instances, i.e., a property of the output distribution averaged over random circuit instances does not depend on the output bitstrings~\cite{Aaronson2011,bremner2016average,boixo2018characterizing}. 
We accommodate this problem by proving an average-case hardness result that works even when the output distribution anticoncentrates only for a small fraction of outcomes, at the expense of a smaller $L_1$ distance between the true distribution and the distribution we sample from.
We provide numerical evidence that our models can show such a relaxed anticoncentration property, completing arguments for sampling hardness.

We further discuss an experimental implementation of our protocol based on fault-tolerant intermediate-scale quantum (ISQ) hardware~\cite{ISQblog}. 
By implementing the time-evolution operator using the Trotter decomposition, we roughly estimate that a circuit with around two hundred logical qubits and less than a billion gates is sufficient to demonstrate quantum advantage using our protocol, which could be feasible for the first-generation of fault-tolerant quantum computers~\cite{bourassa2021blueprint,beverland2022assessing,bravyi2022future,campbell2021early,daley2022practical}.

\section{Preliminaries}
\begin{figure*}[htb]
    \centering
    \includegraphics[width=0.9\linewidth]{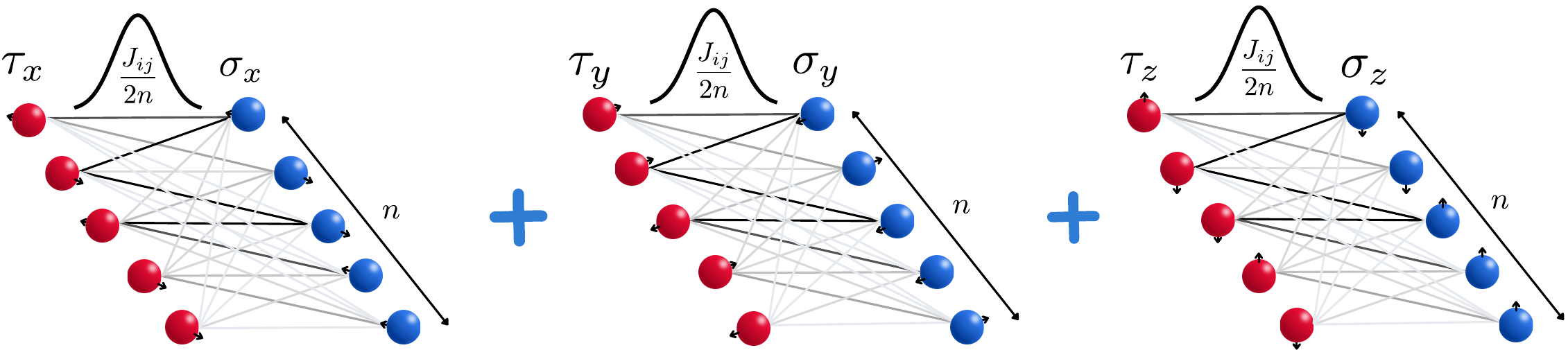}
    \caption{ A pictorial description of the Heisenberg model Hamiltonian ($H_{4}$) in Eq.~\eqref{eq:Hamiltonian4} with bipartite networks between spin groups $\sigma$ and $\tau$ composed of $n$ spins. $J_{ij}$ denotes a random coupling strength between $\sigma^{(i)}$ and $\tau^{(j)}$ drawn from a normal distribution. Coupling strengths are denoted by the thickness of the corresponding edges.
    }
    \label{fig:Heisenberg_model}
\end{figure*}
Throughout the paper, we consider spin-$1/2$ models defined over a bipartite graph, i.e., a graph with two sets of vertices $U$ and $V$ such that there is no edge between the vertices within each set. We further restrict $U$ and $V$ to have the same number of vertices, i.e., $|U|=|V|=n$. Thus, our system comprises a total of $2n$ spins.
We denote the spin variables, i.e., Pauli operators, for the $i$-th ($j$-th) spin of $U$ ($V$) by $\sigma^{(i)}_{x,y,z}$ ($\tau^{(j)}_{x,y,z}$), where the subscript $x,y,z$ refers to the type of Pauli operator. We consider four types of Hamiltonians defined as follows:
\begin{align}
    H_{1} &= \sum_{i,j=1}^{n} \frac{J_{ij}}{n} \sigma^{(i)}_x \tau^{(j)}_x  ,\label{eq:Hamiltonian1}\\
    H_{2} &= \sum_{i,j=1}^{n} \frac{J_{ij}}{n} \bigl[ \sigma^{(i)}_x \tau^{(j)}_x + \sigma^{(i)}_z \tau^{(j)}_z \bigr] ,\label{eq:Hamiltonian2}\\
    H_{3} &= \sum_{i,j=1}^{n} \frac{J_{ij}}{2n} \bigl[ \sigma^{(i)}_x \tau^{(j)}_x + \sigma^{(i)}_y \tau^{(j)}_y \bigr] ,\label{eq:Hamiltonian3}\\
    H_{4} &= \sum_{i,j=1}^{n} \frac{J_{ij}}{2n} \bigl[ \sigma^{(i)}_x \tau^{(j)}_x + \sigma^{(i)}_y \tau^{(j)}_y + \sigma^{(i)}_z \tau^{(j)}_z \bigr].
    \label{eq:Hamiltonian4}
\end{align}
Hamiltonians $H_1$, $H_3$, and $H_4$ are often called the Ising, the XX, and the Heisenberg models, respectively. See Fig.~\ref{fig:Heisenberg_model} for an illustration of the Heisenberg model Hamiltonian.
We will show that computing the outcome probability distributions from the unitary evolution generated by these Hamiltonians is classically hard.

We additionally define the following notations for binary vectors for convenience:
\begin{itemize}
    \item We use bold letters to denote binary vectors in $\mathbb{Z}_2^{2n}$. For example, $\mathbf{x} = \{x_1, \cdots, x_{2n}\}$ is a vector and $\ket{\mathbf{x}}=\ket{x_1}\cdots \ket{x_{2n}}$ is a product state in the computational basis.
    \item We denote the first and last $n$ bits of $\mathbf{x} \in \mathbb{Z}_2^{2n}$ by $\mathbf{x}^{\sigma} := \{x_1, \cdots x_n\}$ and $\mathbf{x}^{\tau} := \{x_{n+1}, \cdots x_{2n}\}$, respectively.
    \item The Hamming weight of a vector $\wt(\mathbf{x})$ is the number of 1s in the vector.
    \item We denote a set of binary vectors with $\wt(\mathbf{x}^{\sigma})=m$ and $\wt(\mathbf{x}^{\tau}) = n-m$ by $X_{m}$, i.e., $X_{m}=\{\mathbf{x} \in \mathbb{Z}_2^{2n}: \wt(\mathbf{x}^{\sigma})=m \text{ and } \wt(\mathbf{x}^{\tau})=n-m \}$.
    \item $X:=\{\mathbf{x}\in \mathbb{Z}_2^{2n}: \wt(\mathbf{x}) = n\}$ is a set of binary vectors with Hamming weight $n$. Note that $X=\cup_{i=0}^n X_i$.
\end{itemize}
We also utilize big $O$ and related notations, as summarized in Appendix~\ref{app:big_o_notation}.

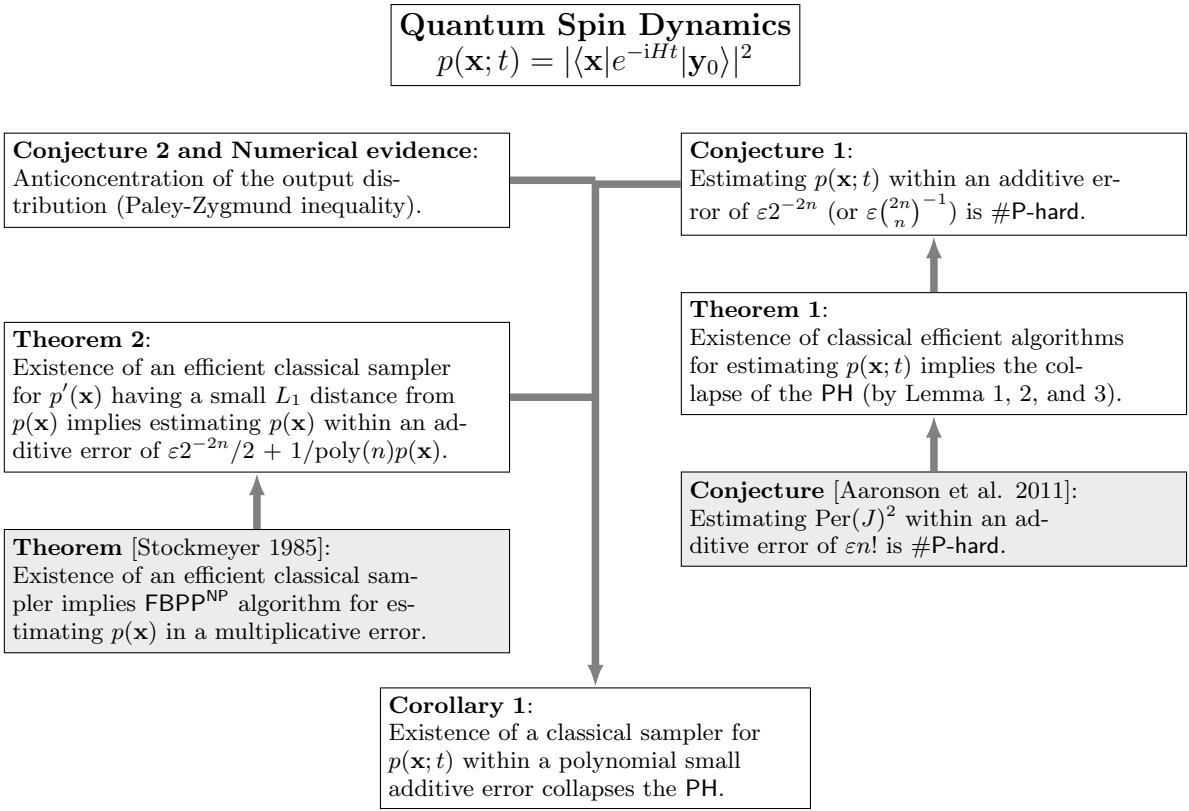
\begin{figure*}[htb]
\centering
\begin{tikzpicture}
	\node[draw] (1) at (10.0, 20.0) [align=center, text centered] {\large \textbf{Quantum Spin Dynamics} \\ \large $p(\mathbf{x};t) = |\langle \mathbf{x} | e^{-\mathrm{i}Ht}| \mathbf{y}_0 \rangle|^2$};

	\node[draw] (conj1) at (14.5, 18.5) [text width=6.5cm, anchor=base] {\textbf{Conjecture 1}: \\ Estimating $p(\mathbf{x};t)$ within an additive error of $\varepsilon 2^{-2n}$ (or $\varepsilon {2n \choose n}^{-1}$) is \textsf{\#P-hard}.};
 
	\node[draw] [below=0.75 of conj1] (thm1) [text width=6.5cm] {\textbf{Theorem 1}: \\ Existence of classical efficient algorithms for estimating $p(\mathbf{x};t)$ implies the collapse of the \textsf{PH} (by Lemma 1, 2, and 3).};
 
	\node[draw] [below=0.75cm of thm1] (4) [fill=gray!15,text width=6.5cm] {\textbf{Conjecture} [Aaronson et al.  2011]: \\ Estimating $\mathrm{Per}(J)^2$ within an additive error of $\varepsilon n!$ is \textsf{\#P-hard}.};

	\node[draw] (conj3) at (5.5, 18.5) [text width=6.5cm, anchor=base] {\textbf{Conjecture 2 and Numerical evidence}: \\ Anticoncentration of the output distribution (Paley-Zygmund inequality).};

	\node[draw] [below=1.25cm of conj3] (thm3) [text width=6.5cm] {\textbf{Theorem 2}: \\ Existence of an efficient classical sampler for $p'(\mathbf{x})$ having a small $L_1$ distance from $p(\mathbf{x})$ implies estimating $p(\mathbf{x})$ within an additive error of $\varepsilon 2^{-2n}/2 + 1/\mathrm{poly}(n) p(\mathbf{x})$.};
	\node[draw] [below=0.75cm of thm3] (stockmeyer) [fill=gray!15,text width=6.5cm] {\textbf{Theorem} [Stockmeyer 1985]: \\ Existence of an efficient classical sampler implies $\mathsf{FBPP}^\mathsf{NP}$ algorithm for estimating $p(\mathbf{x})$ in a multiplicative error.};

	\node[draw] [below=8cm of 1] (cor1) [text width=5.5cm] {\textbf{Corollary 1}: \\ Existence of a classical sampler for $p(\mathbf{x};t)$ within a polynomial small additive error collapses the \textsf{PH}.};

	\draw[gray, line width=1mm, arrows = {-Latex[length=0pt 3 0, angle'=40]}] let \p1 = (thm1.north), \p2 = (conj1.south), \n1 = {(\y1+\y2)/2} in (thm1.north) -- ($(\x1, \n1)$) --  ($(\x2, \n1)$) -- (conj1.south);
	\draw[gray, line width=1mm, arrows = {-Latex[length=0pt 3 0, angle'=40]}] (4.north) -- (thm1.south);
	
	\draw[gray, line width=1mm, arrows = {-Latex[length=0pt 3 0, angle'=40]}] (stockmeyer.north) -- (thm3.south);

	\draw[gray, line width=1mm] (conj3.east) -- (1 |- conj3);
	\draw[gray, line width=1mm] (thm3.east) -- (1 |- thm3);


	\draw[gray, line width=1mm, arrows = {-Latex[length=0pt 3 0, angle'=40]}] (conj1.west) -- (1 |- conj1) -- (cor1);

\end{tikzpicture}
\caption{Diagrammatic overview of relations between conjectures and theorems. Boxes filled in gray indicate statements not developed in this study.}
\label{fig:QSDdiagram}
\end{figure*}

\section{Theoretical results and techniques}
In this section, we introduce the classical hardness of approximately estimating the output probability of quantum spin dynamics generated by four types of Hamiltonians, $H_{1,2,3,4}$, (Theorem~\ref{thm:classical_hardness}) and of sampling from the corresponding output probability distribution (Corollary~\ref{col:complexity_small_tvd_sampling}).

The central observation for our hardness proof is that the leading coefficient of the Taylor expansion of the output probability contains the permanent of a matrix.
By utilizing recently developed \textit{robust polynomial regression} techniques~\cite{bouland2019complexity,movassagh2023hardness,bouland2022noise,krovi2022average}, we show that the efficient average-case estimation of the output probabilities implies the estimation of the permanent, which is \textsf{\#P-hard}.
In addition, with the aid of an anticoncentration conjecture, we argue that the corresponding sampling task is also classically challenging. The main conjectures and theorems (see Fig.~\ref{fig:QSDdiagram}) are followed by evidence and proofs using the listed mathematical tools.

\subsection{Classical hardness of computing the output distribution}

We introduce a key conjecture below, which states that estimating the output probability of quantum spin dynamics within an exponentially small additive error is \textsf{\#P-hard}. 
Many previous studies have conjectured that approximating the output probability is \textsf{\#P-hard} on average for classes of quantum circuits~\cite{Aaronson2011,bremner2016average,boixo2018characterizing,bermejo2018architectures}. 
Our conjecture extends these arguments to quantum spin dynamics.
This conjecture is supported by one of our main results, Theorem~\ref{thm:classical_hardness}, presented later in this section.
\begin{conjecture} \label{conj:hardness_output_probs}
    Let $p_{1,2,3,4}(\mathbf{x};J;t):=|\braket{\mathbf{x}|e^{-\complexi H_{1,2,3,4}t}|\mathbf{y}_0}|^2$ where $\ket{\mathbf{y}_0} = \ket{\mathbf{0}}^\sigma \ket{\mathbf{1}}^\tau$, i.e., we can write $\mathbf{y}_0 = \mathbf{0}^{\sigma}\mathbf{1}^{\tau}=0\cdots01\cdots1$.
    For $\mathbf{x} \in X_m$ with $m  \geq \sqrt{n}$ and $t_0=\Theta(\log n)$, the following statements hold.  
    \begin{itemize}
        \item \emph{ \textsf{Problem I}:} It is \emph{\textsf{\#P-hard}} to approximate $p_1(\mathbf{x};J;t_0)$ or $p_2(\mathbf{x};J;t_0)$ within an additive error of  $\varepsilon 2^{-2n}$ with any constant probability of $p > 0$ over $J\sim \mathcal{N}(0,1)^{n \times n}$.
        \item \emph{\textsf{Problem II}:} It is \emph{\textsf{\#P-hard}} to approximate $p_3(\mathbf{x};J;t_0)$ or $p_4(\mathbf{x};J;t_0)$ within an additive error of $\varepsilon {2n \choose n}^{-1}$ with any constant probability of $p > 0$ over $J\sim \mathcal{N}(0,1)^{n \times n}$.
    \end{itemize}
\end{conjecture}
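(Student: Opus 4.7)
The plan is to reduce the conjectured \textsf{\#P}-hardness (Aaronson--Arkhipov) of estimating $|\mathrm{Per}(J)|^2$ within additive error $\varepsilon\, n!$ for $J \sim \mathcal{N}(0,1)^{n\times n}$ to the problem at hand. Concretely, I would exhibit a polynomial-time classical procedure that uses any average-case approximator for $p_{1,2,3,4}(\mathbf{x};J;t_0)$, of the stated precision, to approximate $|\mathrm{Per}(J)|^2$ to within the Aaronson--Arkhipov precision with nontrivial probability over $J$.

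First I would make the permanent appear explicitly. Expanding the transition amplitude as
\begin{equation*}
\langle \mathbf{x}|e^{-\complexi H_1 t}|\mathbf{y}_0\rangle = \sum_{k \geq m,\, k \equiv m \,(\mathrm{mod}\,2)} \frac{(-\complexi t)^k}{k!}\, \langle \mathbf{x}| H_1^k |\mathbf{y}_0\rangle,
\end{equation*}
and observing that each term of $H_1$ flips one $\sigma$- and one $\tau$-spin, so that reaching $\mathbf{x}\in X_m$ requires at least $m$ applications, a short combinatorial count of matchings yields $\langle \mathbf{x}| H_1^m |\mathbf{y}_0\rangle = (m!/n^m)\,\mathrm{Per}(J[S_\sigma;S_\tau])$, where $S_\sigma$ and $S_\tau$ index the flipped sites. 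Squaring the amplitude, the coefficient of $t^{2m}$ in $p_1(\mathbf{x};J;t)$ is exactly $|\mathrm{Per}(J[S_\sigma;S_\tau])|^2/n^{2m}$. The identical Pauli bookkeeping, restricted for $H_{3,4}$ to the total-magnetization-preserving sector of dimension $\binom{2n}{n}$ and accommodating for $H_2$ the diagonal $\sigma_z^{(i)}\tau_z^{(j)}$ terms that contribute only at subleading orders in $t$, yields the analogous statement for the other three Hamiltonians; this also motivates the two different error targets in Problems I and II, since the typical magnitude of $p$ is set by the dimension of the relevant Hilbert space.

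Second, I would invoke the robust polynomial regression toolkit of~\cite{bouland2019complexity,movassagh2023hardness,bouland2022noise,krovi2022average}. The probability is a convergent power series in $t$; truncating at degree $D=\mathrm{poly}(\log n)$ leaves a tail bounded by $(\|H\|\,t_0)^{D}/D!$, which sits comfortably below $\varepsilon 2^{-2n}$ for $t_0=\Theta(\log n)$ with a suitable constant. Given $\mathrm{poly}(n)$ approximate evaluations of $p$ at well-chosen Chebyshev nodes in $[0,t_0]$, a Lagrange-inversion argument extracts the $t^{2m}$ coefficient within additive error $\mathrm{poly}(n)\cdot \varepsilon 2^{-2n}/t_0^{2m}$. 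Multiplying by the $n^{2m}$ normalization and invoking Stirling, the induced additive error on $|\mathrm{Per}|^2$ lies below $\varepsilon\, m!$ precisely when $\sqrt{n} \leq m = o(n)$, which is the regime the conjecture targets. Finally, I would use the orthogonal invariance of the Gaussian ensemble, together with permutation symmetry of the $\sigma$- and $\tau$-labels, to hide an adversarial matrix uniformly inside the distribution the oracle accepts.

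I expect the hiding step to be the principal obstacle. Unlike boson-sampling or random-circuit reductions, the bitstring $\mathbf{x}$ and initial state $\mathbf{y}_0$ are fixed and only $J$ is averaged over; one must verify that the induced distribution on the relevant $m\times m$ submatrix $J[S_\sigma;S_\tau]$ is close enough to Gaussian that the Aaronson--Arkhipov worst-to-average-case reduction survives. A secondary nuisance is making the Taylor-tail bound, the interpolation conditioning, and the target permanent precision all consistent at a single $t_0 = \Theta(\log n)$; closing out the tighter $\binom{2n}{n}^{-1}$ error bound for $H_{3,4}$ appears to require an additional model-specific argument exploiting the confinement of the dynamics to the magnetization-zero sector.
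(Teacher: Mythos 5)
The statement you are trying to prove is, in the paper, a \emph{conjecture}: the paper does not prove it, and its Theorem~\ref{thm:classical_hardness} (which your proposal essentially reconstructs) only establishes a much weaker version --- hardness for an oracle with additive error $n^{-c_1 n^3\log n}$ and success probability $\geq 3/4$ over $J$, not $\varepsilon 2^{-2n}$ (resp.\ $\varepsilon {2n \choose n}^{-1}$) with an arbitrary constant probability $p>0$. Your outline follows the same route (Taylor coefficient at $t^{2m}$ equals $\mathrm{Per}(J_{\mathcal{ST}})^2/n^{2m}$, then robust polynomial regression to extract it), but the quantitative step where you claim the conjectured error tolerance suffices is where the argument breaks. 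First, the truncation degree cannot be $\mathrm{poly}(\log n)$: since $\Vert H\Vert = \Theta(n)$ and $t_0=\Theta(\log n)$, the Taylor tail $(2\Vert H\Vert t_0)^{K+1}/(K+1)!$ is only small once $K = \Omega(n\log n)$. Second, and decisively, extracting the $t^{2m}$ coefficient from noisy evaluations amplifies the oracle error by a factor of order $(4/\Delta)^{K}{K \choose 2m}$ (Lemma~\ref{lm:p_poly_coefficient_bound} / Lemma~\ref{lm:bpp_poly_coefficient_bound}), which is $n^{\Omega(n)}$ even with constant $\Delta$. An oracle error of $\varepsilon 2^{-2n}$ times $n^{\Omega(n)}$ diverges, so the recovered permanent estimate is nowhere near the $\varepsilon\, m!$ target; this is exactly why the paper must demand oracle error $n^{-c_0 n}$ (Theorem~\ref{thm:fp_algorithm}) or $n^{-c_1 n^3\log n}$ (Theorem~\ref{thm:classical_hardness}), and why it lists closing the gap to $\varepsilon 2^{-2n}$ as an open problem requiring genuinely new regression techniques.

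Two further points in your sketch are inconsistent with the conjecture as stated. The conjectured oracle evaluates $p(\mathbf{x};J;t)$ only at the single time $t_0$, whereas your interpolation needs many nodes; the paper bridges this via Lemma~\ref{lm:tvd_gaussian} (rescaling $J$ and a Gaussian total-variation bound), which costs success probability $\tfrac{3}{2}n\sqrt{|(t/t_0)^2-1|}$ and forces $\Delta = O(1/n^2)$, further inflating the blow-up to $n^{\Theta(n^3\log n)}$ and requiring the oracle's success probability to start at a constant like $3/4$ --- so ``any constant probability $p>0$'' is also not achievable by this method. Your concluding worry about ``hiding'' is a real issue for improving the result, but it is not the main obstacle here: even granting perfect Gaussian hiding of the submatrix $J_{\mathcal{ST}}$, the error-amplification arithmetic above already prevents the reduction from reaching the conjectured error scale. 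In short, your proposal, carried out carefully, proves (a version of) Theorem~\ref{thm:classical_hardness}, not Conjecture~\ref{conj:hardness_output_probs}.
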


We provide the following three remarks in order.
First, the Hamming weight $m$ must be large for $\mathbf{x} \in X_m$.
An intuitive way to understand this condition is that a large value of $m$ ensures that $p(\mathbf{x};J;t)$ cannot be efficiently computed using a low-degree expansion at $t=0$.
This is because the overlap, $\braket{\mathbf{x} | H^k| \mathbf{y}_0}$, is non-zero only when $k \geq m$.
Thus, if one wants to compute $p(\mathbf{x};J;t)$ using the polynomial expansion in $t$, e.g., using the Krylov subspace method~\cite{liesen2013krylov}, $p(\mathbf{x};J;t)$ must be expanded to an order that is sufficiently larger than $m$; this is classically difficult.

Second, the difference between the required errors $\varepsilon 2^{-2n}$ and $\varepsilon {2n \choose n}^{-1}$ originates from the fact that $H_{3}$ and $H_{4}$ have a global $U(1)$ symmetry, but $H_{1}$ and $H_{2}$ do not.
Given that there exist $2n$ spins, and the initial state resides in the eigenspace of $J_z:=\sum_i \sigma^{(i)}_{z}+\sum_j \tau^{(j)}_{z}$ with a corresponding eigenvalue of zero, we know that the final state is also within this subspace since $[J_z, H_{3,4}] = 0$. 
As the dimension of the subspace is $D_{II}:=|X|={2n \choose n}$, we have $\mathbb{E}_{\mathbf{x} \in X}[p(\mathbf{x};J;t)] = {2n \choose n}^{-1}$.
In contrast, the dimension $D_I$ of the allowed space is $2^{2n}$ for $H_{1}$ and $H_{2}$, which do not have $U(1)$ symmetry.
In short, the conjecture states that approximating $p(\mathbf{x};J;t)$ within a small factor $\varepsilon$, on an average over $\mathbf{x}$, is classically hard.

Lastly, we require $t=\Theta(\log n)$ as the output distribution anticoncentrates for such $t$ (throughout the paper, `$\log$' stands for the natural logarithm, unless otherwise noted).
The anticoncentration property is characterized by the fact that the output distribution $p(\mathbf{x};J;t)$ only fluctuates near (up to a constant factor) its averaged value, i.e., $2^{-2n}$ and ${2n \choose n}^{-1}$ for $H_{1,2}$ and $H_{3,4}$, respectively.
While the anticoncentration by itself is not a sufficient condition for classical hardness, it supports the average-case hardness in the sense that a trivial estimation of $p(\mathbf{x};J;t) = 0$ must be incorrect~\cite{hangleiter2023computational}. 
Numerical evidence for anticoncentration is provided in Sec.~\ref{sec:numerics_anticon}.

We now present our main theorem supporting Conjecture~\ref{conj:hardness_output_probs}. 
Most importantly, we show the classical hardness results given that the estimation of permanents of Gaussian matrices is \textsf{\#P-hard} as in \textsf{BosonSampling}~\cite{Aaronson2011}.

\begin{restatable}{theorem}{ThmClassicalHardness} \label{thm:classical_hardness}
    Suppose that $J\sim \mathcal{N}(0,1)^{n \times n}$ is a randomly sampled Gaussian matrix, $H$ is one of $H_{1,2,3,4}$, and $p(\mathbf{x};J;t):=|\braket{\mathbf{x}| e^{-\complexi Ht}|\mathbf{y}_0}|^2$.
    Then for a given $\mathbf{x} \in X_m$ with $m \geq \sqrt{n}$, $t_0 = O(\log n)$, and a constant $0 < \Delta < 1$, there exists a positive constant $c_1$ such that the following statement holds: If there exists an efficient classical algorithm that estimates $p(\mathbf{x};J;t_0)$ within an additive error of $n^{-c_1 n^3 \log n }$ with a probability of $\eta \geq 3/4$ over $J$, then $\mathsf{FBPP}=\mathsf{\#P}$.
\end{restatable}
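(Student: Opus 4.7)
The plan is to prove the theorem by a worst-to-average-case reduction from computing the permanent of a Gaussian matrix to approximating $p(\mathbf{x};J;t_0)$. Two ingredients drive the argument: (i) the Taylor expansion of the amplitude $\braket{\mathbf{x}|e^{-\complexi H t}|\mathbf{y}_0}$ in $t$ has, for $\mathbf{x}\in X_m$, a leading term of order $t^m$ whose coefficient is essentially the permanent of an $m\times m$ submatrix of $J$; and (ii) a robust polynomial regression extracts this coefficient from noisy queries to $p$ provided the additive noise is suitably small.

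For the first ingredient, I would use that each power of $H_1$ simultaneously flips one $\sigma$-spin and one $\tau$-spin through the $\sigma^{(i)}_x\tau^{(j)}_x$ interaction, so $\braket{\mathbf{x}|H_1^k|\mathbf{y}_0}$ vanishes for $k<m$ whenever $\mathbf{x}\in X_m$. At $k=m$, the sum over sequences of flips that realize $\mathbf{y}_0 \to \mathbf{x}$ factors as $m!\,\mathrm{Per}(J_{I,K})/n^m$, where $I$ and $K$ are the $\sigma$- and $\tau$-positions in which $\mathbf{x}$ and $\mathbf{y}_0$ disagree. Consequently,
\begin{equation}
p(\mathbf{x};J;t) = \frac{t^{2m}}{n^{2m}}|\mathrm{Per}(J_{I,K})|^2 + O(t^{2m+2}).
\end{equation}
A parallel analysis for $H_{2,3,4}$ shows that the additional $\sigma_z\tau_z$ or $\sigma_y\tau_y$ terms either act diagonally on the relevant basis states or reproduce the same permanent structure with altered prefactors, while the $U(1)$ symmetry of $H_{3,4}$ restricts the dynamics to a $\binom{2n}{n}$-dimensional subspace and accounts for the different normalization appearing in Conjecture~\ref{conj:hardness_output_probs}.

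For the second ingredient, I would rescale $J\mapsto sJ$ and regard $p(\mathbf{x};sJ;t_0)$ as a power series in $s$. Truncating at a carefully chosen order $K=\mathrm{poly}(n)$, with the tail controlled through the spectral norm of $H$ and the choice $t_0=\Theta(\log n)$, yields an effective polynomial in $s$ of degree $2K$ whose $s^{2m}$ coefficient equals $t_0^{2m}|\mathrm{Per}(J_{I,K})|^2/n^{2m}$. Querying the hypothetical oracle at $s_1,\ldots,s_L$ chosen close enough to $s=1$ that every $s_\ell J$ remains typical for the Gaussian ensemble preserves the $3/4$-fraction guarantee via a union bound, and the robust polynomial regression techniques of Bouland et al., Movassagh, and Krovi then recover the target coefficient with an error equal to the oracle error inflated by the condition number of the associated Vandermonde system. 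Balancing Taylor truncation, extrapolation range, and interpolation conditioning is precisely what forces the oracle precision to scale like $n^{-c_1 n^3 \log n}$ as stated.

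Once $|\mathrm{Per}(J_{I,K})|^2$ is recovered at doubly-exponentially small additive error, a Lipton-style self-correction over a sufficiently fine lattice of matrix entries, together with the worst-case $\mathsf{\#P}$-hardness of the permanent for $m\geq \sqrt{n}$, promotes the average-case reconstruction to a worst-case $\mathsf{FBPP}$ algorithm for the exact permanent, yielding $\mathsf{FBPP}=\mathsf{\#P}$. The main obstacle I anticipate is the error accounting in the extrapolation step: the target coefficient is already tiny compared to the oracle's natural reference scale $2^{-2n}$ or $\binom{2n}{n}^{-1}$, and the Lagrange-interpolation condition number grows rapidly with the ratio of extrapolation to sampling range. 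Choosing $K$ large enough that the Taylor remainder is negligible yet small enough that interpolation does not blow up the oracle error past the $n^{-c_1 n^3 \log n}$ budget is the core quantitative task, and verifying that consistent parameter choices exist for all four Hamiltonians $H_{1,2,3,4}$ — with their slightly different leading-order permanent structures — is what the explicit constant $c_1$ and the cubic-in-$n$ factor are ultimately designed to accommodate.
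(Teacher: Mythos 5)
Your overall route matches the paper's: the leading Taylor coefficient of $p(\mathbf{x};J;t)$ at order $t^{2m}$ is $\mathrm{Per}(J_{\mathcal{ST}})^2/n^{2m}$, rescaling $J\mapsto sJ$ is equivalent to rescaling $t$ (so an oracle at $t_0$ for $J\sim\mathcal{N}(0,1)^{n\times n}$ remains useful at nearby arguments because $\mathcal{N}(0,s^2)$ is close in total variation to $\mathcal{N}(0,1)$ when $s\approx 1$), and the conclusion follows from the average-case hardness of the Gaussian permanent obtained by a Lipton-style worst-to-average reduction. However, there is a genuine gap in your coefficient-extraction step. You propose to query the oracle at $L$ chosen points $s_1,\dots,s_L$ and ``preserve the $3/4$-fraction guarantee via a union bound,'' with the error then inflated by the condition number of a Vandermonde system. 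This is exactly the deterministic Lagrange-interpolation strategy that the paper shows cannot be made to work (Appendix~\ref{app:difficulty_fp_algo_rakh}): interpolation through fixed nodes needs \emph{all} $L\geq d+1$ values to be correct, where here $d=K=\Theta(n^3\log n)$ (and in any case $d\geq 2m\geq 2\sqrt{n}$), while each query is correct only with probability about $3/4$ minus a total-variation penalty. The union bound gives failure probability at least $L\cdot(1/4+\mathrm{TV})$, which exceeds $1$ already for $L\geq 4$; and one cannot repair this by shrinking the interval, since the per-point success probability is capped at the oracle's fixed $\eta\geq 3/4$ and shrinking $\Delta$ only blows up the $(4/\Delta)^{d}$ amplification factor.

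The paper's proof avoids this by replacing deterministic interpolation with the robust regression of Kane et al.\ (its Lemma~\ref{lm:bpp_poly_coefficient_bound}): one draws $O(d^2)$ sample points uniformly at random from $[t_0(1-\Delta),t_0(1+\Delta)]$ with $\Delta\leq\beta/n^2$, requires only that each individual sample be correct with probability strictly above $1/2$ (which Lemma~\ref{lm:tvd_gaussian} guarantees from $\eta\geq 3/4$ for this $\Delta$), and fits a polynomial tolerating a constant fraction of outliers; the resulting coefficient error is $O[\delta\, t_0^{-2m}(4/\Delta)^{K}\binom{K}{2m}]$, and the algorithm is inherently randomized --- which is precisely why the theorem concludes $\mathsf{FBPP}=\mathsf{\#P}$ rather than $\mathsf{FP}=\mathsf{\#P}$. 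To repair your proposal you should substitute this randomized, outlier-tolerant regression for the union bound; the rest of your parameter-balancing discussion (truncation order $K=\Theta(n^3\log n)$, window width $\Delta=O(1/n^2)$, oracle precision $n^{-c_1 n^3\log n}$) then goes through as in the paper, and your final Lipton-style self-correction step coincides with the paper's Appendix~\ref{app:approximate_gaussian_permanent}, which is what lets the theorem avoid assuming Conjecture~\ref{conj:gaussian_permanent}.
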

Theorem~\ref{thm:classical_hardness} proves a weaker version of Conjecture~\ref{conj:hardness_output_probs}.
An oracle formulation of Conjecture~\ref{conj:hardness_output_probs} shows the difference between the two.
For example, \textsf{Problem I} of Conjecture~\ref{conj:hardness_output_probs} is equivalent to the following statement: ``If there exists an efficient classical algorithm that estimates $p(\mathbf{x};J;t_0)$ within an additive error of $\varepsilon 2^{-2n}$, \textsf{FP=\#P}.''
Compared to Conjecture~\ref{conj:hardness_output_probs}, the theorem demands a stronger oracle for $p(\mathbf{x};J;t_0)$, in the sense that it requires a smaller error $n^{-c_1 n^3 \log n}$ and a larger probability ($\geq 3/4$) over $J$. 
It also requires a slightly stronger relation between the complexity classes for the hardness that $\mathsf{FBPP} \neq \mathsf{\#P}$, which is satisfied when the \textsf{PH} does not collapse to the second level~\cite{arora2009computational} (still widely believed to be true).

For the proof of Theorem~\ref{thm:classical_hardness}, we utilize a polynomial expansion of $p(\mathbf{x};t) = |\braket{\mathbf{x}| e^{-\complexi Ht}|\mathbf{y}_0}|^2$ for $\mathbf{x} \in X_m$ (where we exclude $J$ from the arguments when they are obvious). The expansion contains the permanent of $\tilde{J}$, a $m \times m$ submatrix of $J$.
The permanent of a matrix $A=(a_{ij})$ is defined as $\mathrm{Per}(A):=\sum_{s \in S_n} \prod_{i=1}^n a_{i, s(i)}$ where $S_n$ is the symmetric group with $n$ elements.
Namely, the polynomial expansion of $p(\mathbf{x};t)$ is given by 
\begin{align}
    p(\mathbf{x};t) = \frac{\mathrm{Per}(\tilde{J})^{2}}{n^{2m}} t^{2m}+a_{2m+1} t^{2m+1} + \cdots .
\end{align}
We provide a detailed derivation of this property in Appendix~\ref{app:complexity_output_dist_proof}.
Given that estimating the permanent of a Gaussian random matrix is \textsf{\#P-hard},
estimating the $2m$-th coefficient of the polynomial expansion (i.e., the coefficient for $t^{2m}$) is a classically hard problem for $m \geq \sqrt{n}$.

We then devise an \textsf{FBPP} algorithm that estimates the coefficient for $t^{2m}$ when an oracle estimating $p(x;J;t)$ near $t=t_0$ is given.
Our algorithm is based on recently developed \textit{robust polynomial regression} techniques~\cite{sherstov2012making,movassagh2023hardness,kane2017robust,bouland2019complexity,bouland2022noise,krovi2022average,kondo2022quantum}.
We combine our algorithm with the fact that an oracle that estimates $p(\mathbf{x};J;t_0)$ with probability $\eta$ over $J \sim \mathcal{N}(0,1)^{n \times n}$ can also estimate $p(\mathbf{x};J;t)$ with probability $\eta - (3/2)n\sqrt{|(t/t_0)^2 - 1|}$ to complete the proof (see Appendix~\ref{app:complexity_output_dist_proof} for details).

While the proof is constructed around Gaussian random matrices, our model is flexible enough to accommodate encoding any matrix through the couplings $J$. This allows for more versatility in proof techniques, that could be lifted in future work to further strengthen hardness proofs.

\subsection{Classical hardness of sampling from the distribution} \label{sec:hardness_sampling}
Now, let us discuss the corresponding sampling problem.
Following usual arguments for sampling hardness in Refs.~\cite{Aaronson2011,bremner2016average,boixo2018characterizing}, we combine Stockmeyer's theorem~\cite{Stockmeyer1985} and an anticoncentration conjecture (see also Appendix~\ref{app:hardness_sampling_proof}).
In these arguments, Stockmeyer's theorem provides an $\mathsf{FBPP}^\mathsf{NP}$ algorithm for estimating the output probability within an inverse polynomial multiplicative error, 
which is further reduced to an additive error utilizing the anticoncentration property.
Anticoncentration is characterized by the fact that, for any $\mathbf{x}$, the output probability $p_C(\mathbf{x})$ for a circuit $C$ with $n$ qubits fluctuates only near $2^{-n}$ over random circuit instances $C$.

However, these techniques cannot be directly applied to our case, as our problem does not possess the hiding property, unlike those examples.
The hiding property holds when, for a given circuit $C(\theta)$ with a parameter $\theta$ sampled from a random distribution, the output probability for a given outcome string $\mathbf{x}$ is the same as that for $\mathbf{0}$ but with a different parameter, i.e., $|\braket{\mathbf{x}|C(\theta)|\mathbf{0}}|^2 = \braket{\mathbf{0}|C(\theta')|\mathbf{0}}|^2$.
As a result, properties of the output distribution averaged over the circuit parameter $\theta$ do not depend on the outcome bitstring $\mathbf{x}$.
For example, the average-case hardness of $|\braket{\mathbf{0}|C(\theta)|\mathbf{0}}|^2$ over $\theta$ holds for $|\braket{\mathbf{x}|C(\theta)|\mathbf{0}}|^2$ with any $\mathbf{x}$, and the anticoncentration property of $|\braket{\mathbf{0}|C(\theta)|\mathbf{0}}|^2$ implies that $|\braket{\mathbf{x}|C(\theta)|\mathbf{0}}|^2$ also anticoncentrates for any $\mathbf{x}$.

Compared to those cases, our complexity results (Theorem~\ref{thm:classical_hardness}) are only satisfied for $\mathbf{x} \in X_{m}$ with sufficiently large $m$.
The main result obtained in this section is that such restricted complexity results are sufficient to prove the difficulty of sampling when the error between the distribution we sample from and the true distribution $p(\mathbf{x};J;t)$ is inverse polynomially small, provided that the output distribution anticoncentrates at least for a constant fraction of $\mathbf{x} \in X_{n/2}$.

Specifically, we obtain the following theorem by applying Stockmeyer's theorem~\cite{Stockmeyer1985} to $p(\mathbf{x})$ with $\mathbf{x} \in X_{n/2}$ (see Appendix~\ref{app:hardness_sampling_proof} for details).

\begin{restatable}{theorem}{applystockmeyer}
\label{thm:apply_stockmeyer}
    Suppose that there exists a classically efficient oracle, $\mathcal{O}$, that can sample from a probability distribution $p'(\mathbf{x})$ which approximates $p(\mathbf{x})$ within an additive error of $\nu$ in $L_1$ norm.
    We further assume that $p(\mathbf{x})$ and $p'(\mathbf{x})$ have the same domain, i.e., $p'(\mathbf{x}) = 0$ for all $\mathbf{x} \notin X$ when $H= H_{3,4}$.
    Then for $\gamma \in (0, 1)$, there is an $\mathsf{FBBP}^\mathsf{NP}$ machine that estimates $p(\mathbf{x})$ within an additive error of
    \begin{itemize}
        \item $\epsilon_{I}=(1+g)\gamma^{-1} \nu (2n) 2^{-2n} + g p(\mathbf{x})$ ,
        \item $\epsilon_{II}=(1+g) (\sqrt{\pi} \gamma/2)^{-1} \nu n^{1/2} {2n \choose n}^{-1} + g p (\mathbf{x})$  ,
    \end{itemize}
    for $H_{1,2}$ and $H_{3,4}$, respectively, 
    with a success probability of $\geq 1 - \gamma$ for $\mathbf{x} \in X_{n/2}$, where $g=1/\mathrm{poly}(n)$.
\end{restatable}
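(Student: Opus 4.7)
The plan is to chain together Stockmeyer's approximate counting theorem applied to the sampler $\mathcal{O}$ with a Markov-type averaging argument on the $L_1$ bound $\nu$. Because the problem lacks the hiding property, the $L_1$ bound does not immediately translate into pointwise control of $|p'(\mathbf{x}) - p(\mathbf{x})|$; the resolution is to accept a $\gamma$ failure probability over the choice of $\mathbf{x}$, which is exactly the $1-\gamma$ guarantee stated in the theorem.

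First I invoke Stockmeyer's theorem on the classical sampler $\mathcal{O}$. This yields an $\mathsf{FBPP}^{\mathsf{NP}}$ machine that, on input $\mathbf{x}$, outputs an estimate $\tilde q(\mathbf{x})$ obeying $|\tilde q(\mathbf{x}) - p'(\mathbf{x})| \le g\, p'(\mathbf{x})$ with $g = 1/\mathrm{poly}(n)$. I then convert this multiplicative guarantee on $p'$ into a guarantee on the physical probability $p$ via the triangle inequality,
\begin{equation}
|\tilde q(\mathbf{x}) - p(\mathbf{x})| \le g\, p(\mathbf{x}) + (1+g)\, |p'(\mathbf{x}) - p(\mathbf{x})| ,
\end{equation}
which already isolates the $g\, p(\mathbf{x})$ contribution that appears in both $\epsilon_I$ and $\epsilon_{II}$. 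It remains to bound the residual $|p'(\mathbf{x}) - p(\mathbf{x})|$.

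For the residual I apply Markov's inequality to the non-negative random variable $|p'(\mathbf{x}) - p(\mathbf{x})|$ under the uniform distribution on $X_{n/2}$. Since $\|p' - p\|_1 \le \nu$ (and, in the $H_{3,4}$ case, the support-preservation hypothesis $p'(\mathbf{x})=0$ for $\mathbf{x}\notin X$ keeps all the $L_1$ mass inside $X$), the mean satisfies $\mathbb{E}_{\mathbf{x} \sim X_{n/2}}[|p'(\mathbf{x}) - p(\mathbf{x})|] \le \nu / |X_{n/2}|$, so with probability at least $1 - \gamma$ over $\mathbf{x} \in X_{n/2}$,
\begin{equation}
|p'(\mathbf{x}) - p(\mathbf{x})| \le \frac{\nu}{\gamma\, |X_{n/2}|} .
\end{equation}

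Finally I estimate $|X_{n/2}| = \binom{n}{n/2}^2$ with Stirling's formula. The elementary bound $\binom{n}{n/2}^2 \ge 2^{2n}/(2n)$ yields $\epsilon_I$ in the $H_{1,2}$ case, while the asymptotic $\binom{n}{n/2}^2 \ge (\sqrt{\pi}/2)\, n^{-1/2} \binom{2n}{n}$ yields $\epsilon_{II}$ in the $H_{3,4}$ case; substituting into the triangle-inequality bound completes the argument. The conceptual obstacle, and the real novelty of the step, is the averaging trick: ordinary sampling hardness proofs use hiding plus anticoncentration to lower-bound $p(\mathbf{x})$ pointwise, whereas here I replace that ingredient with a Markov bound over a \emph{uniformly random} outcome in $X_{n/2}$, which only gives per-$\mathbf{x}$ control with constant probability but suffices for the downstream average-case hardness. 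The quantitative obstacle is simply to match the Stirling constants to the exact form of $\epsilon_{I,II}$ stated in the theorem.
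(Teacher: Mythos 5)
Your proposal is correct and follows essentially the same route as the paper's proof: Stockmeyer's theorem gives a multiplicative estimate of $p'(\mathbf{x})$, the triangle inequality converts it to the $g\,p(\mathbf{x}) + (1+g)|p'-p|$ form, and Markov's inequality plus the bounds $\binom{n}{n/2}^2 \geq 2^{2n}/(2n)$ and $\binom{n}{n/2}^2 \geq (\sqrt{\pi}/2)\,n^{-1/2}\binom{2n}{n}$ yield $\epsilon_I$ and $\epsilon_{II}$. The only cosmetic difference is that you apply Markov directly to the uniform distribution on $X_{n/2}$, whereas the paper applies it over $\mathbb{Z}_2^{2n}$ (resp.\ $X$) and then accounts for the relative size of $X_{n/2}$; the resulting constants are identical.
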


We also need the following anticoncentration conjecture for the hardness of sampling.

\begin{restatable}[Anticoncentration]{conjecture}{anticoncentration}
\label{conj:anticoncentration}
Let $p(\mathbf{x};J;t)=|\braket{\mathbf{x} | e^{-\complexi Ht} | \mathbf{y}_0}|^2$ for $\mathbf{y}_0$ given in Conjecture~\ref{conj:hardness_output_probs}.
Then, there exists $t_1 = \Theta(\log n)$ such that for all $t > t_1$, the following holds for a constant fraction, $\zeta$, of $\mathbf{x} \in X_{n/2}$:
\begin{itemize}
    \item There exist universal constants $0 < \alpha, \beta <1$ such that
    \begin{align}
        \underset{J \sim \mathcal{N}(0,1)^{n \times n}}{\mathrm{Pr}} \Bigl[ p(\mathbf{x};J;t) > \alpha 2^{-2n} \Bigr] \geq \beta ,
    \end{align}
    \item There exist universal constants $0 < \alpha, \beta < 1$ such that
    \begin{align}
        \underset{J \sim \mathcal{N}(0,1)^{n \times n}}{\mathrm{Pr}} \Bigl[ p(\mathbf{x};J;t) > \alpha {2n \choose n}^{-1} \Bigr] \geq \beta ,
    \end{align}
\end{itemize} 
for $H_{1,2}$ and $H_{3,4}$, respectively.
\end{restatable}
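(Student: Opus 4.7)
The plan is to apply the Paley--Zygmund second-moment inequality, the exact tool flagged in Fig.~\ref{fig:QSDdiagram}. Fixing $\mathbf{x} \in X_{n/2}$ and writing $Z_{\mathbf{x}} := p(\mathbf{x};J;t)$ as a random variable in $J \sim \mathcal{N}(0,1)^{n \times n}$, Paley--Zygmund gives, for any $\alpha \in (0,1)$,
\begin{equation}
    \Pr_J\bigl[Z_{\mathbf{x}} > \alpha\,\mathbb{E}_J[Z_{\mathbf{x}}]\bigr] \geq (1-\alpha)^2 \frac{\mathbb{E}_J[Z_{\mathbf{x}}]^2}{\mathbb{E}_J[Z_{\mathbf{x}}^2]} ,
\end{equation}
so the conjecture reduces to establishing, for a constant fraction $\zeta$ of $\mathbf{x} \in X_{n/2}$, the two bounds $\mathbb{E}_J[Z_{\mathbf{x}}] \geq \Omega(1/D)$ and $\mathbb{E}_J[Z_{\mathbf{x}}^2] \leq O(1/D^2)$, with $D = 2^{2n}$ for $H_{1,2}$ and $D = {2n \choose n}$ for $H_{3,4}$.

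The first moment is the easy step. The Gaussian ensemble is invariant under independent permutations of rows and columns of $J$, which correspond to relabelling the $\sigma$- and $\tau$-sites, and under the sign flip $J \mapsto -J$; together with $\sum_{\mathbf{x}} \mathbb{E}_J[Z_{\mathbf{x}}] = 1$ over the allowed sector, these symmetries force $\mathbb{E}_J[Z_{\mathbf{x}}]$ to depend on $\mathbf{x}$ only through $(\wt(\mathbf{x}^\sigma), \wt(\mathbf{x}^\tau))$, so it is a single value $q_{n/2}$ across all $\mathbf{x} \in X_{n/2}$. A late-time ergodicity argument for $t = \Theta(\log n)$ — essentially that the probability mass has spread roughly uniformly over the symmetry sector — then pins $q_{n/2}$ at $\Theta(1/D)$.

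The main obstacle is the second moment. I would Taylor-expand $e^{-\complexi H t}$ and apply Wick's theorem on $J$, organising the Gaussian contractions as connected versus disconnected diagrams; the disconnected pieces reproduce $\mathbb{E}_J[Z_{\mathbf{x}}]^2$, and the task is to show that the connected diagrams pick up an additional $O(1/D)$ suppression at $t_1 = \Theta(\log n)$. An equivalent route is to argue that $e^{-\complexi H t}$ approximates a unitary 2-design on the relevant symmetry sector after a logarithmic mixing time, in which case a Haar computation gives $\mathbb{E}\,|\braket{\mathbf{x}|U|\mathbf{y}_0}|^4 = 2/[D(D+1)]$ and the desired $\alpha, \beta$ follow directly. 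Neither route is presently rigorous for the non-commuting $H_{3,4}$, which is precisely why the statement appears as a conjecture backed by the numerics in Sec.~\ref{sec:numerics_anticon}.

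Finally, the ``constant fraction $\zeta$'' clause lets a Markov-style step absorb any residual $\mathbf{x}$-dependence: even if the bound $\mathbb{E}_J[Z_{\mathbf{x}}^2] \leq C/D^2$ holds only on average over $\mathbf{x} \in X_{n/2}$, Markov's inequality immediately supplies a fraction $\zeta \geq 1/2$ of $\mathbf{x}$ for which it holds pointwise up to a factor of two. Combined with the first-moment lower bound, this closes the Paley--Zygmund argument and delivers the universal constants $\alpha, \beta$ appearing in the conjecture.
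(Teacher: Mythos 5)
The statement you are addressing is a conjecture: the paper never proves it, and your proposal does not either, so the honest assessment is that there is a genuine gap --- one you yourself flag. Your reduction via Paley--Zygmund to the two moment conditions $\mathbb{E}_J[Z_{\mathbf{x}}] = \Omega(1/D)$ and $\mathbb{E}_J[Z_{\mathbf{x}}^2] = O(1/D^2)$ is exactly the route the paper takes in its numerical section (Sec.~\ref{sec:numerics_anticon}, with $K=1/2$, $\Lambda=4$ giving $\alpha=1/4$, $\beta=1/64$), but the paper supplies those moment bounds only as finite-size numerics for $n\le 10$, plus a citation to Mark et al.\ for Porter--Thomas behaviour at large $t$, plus an exact infinite-time average for $H_1$ (Appendix~\ref{app:anticon_ising}). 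Your two proposed substitutes for the hard step --- Wick-contraction diagrammatics with an $O(1/D)$ suppression of connected terms, or an approximate 2-design property of $e^{-\complexi Ht}$ on the symmetry sector at $t=\Theta(\log n)$ --- are precisely the statements that nobody knows how to prove for these non-commuting $H$, and asserting them is logically equivalent to assuming the conjecture. Likewise, the ``late-time ergodicity'' pinning of the first moment at $\Theta(1/D)$ is unproven; note that the paper's own Appendix~\ref{app:anticon_ising} shows the intuition ``mass spreads roughly uniformly over the sector'' is false in detail for $H_1$: the time-averaged probability is $2^{-2n}\bigl[1+(-1)^{n+\wt(\mathbf{x})}\bigr]$, i.e.\ it vanishes on half of all bitstrings and equals $2\times 2^{-2n}$ on $X$, so only a parity-resolved version of your claim survives (fortunately the right one for $\mathbf{x}\in X_{n/2}$).

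Two smaller points of comparison. Your permutation-symmetry observation (invariance of $\mathcal{N}(0,1)^{n\times n}$ under row/column permutations, together with the permutation invariance of $\ket{\mathbf{y}_0}$) is correct and is not in the paper; it shows $\mathbb{E}_J[Z_{\mathbf{x}}]$ and $\mathbb{E}_J[Z_{\mathbf{x}}^2]$ are genuinely constant over $\mathbf{x}\in X_{n/2}$, the scatter in Fig.~\ref{fig:anticon_h3} being finite-sample noise over $J$. But this same observation makes your closing Markov step superfluous: if the averaged second-moment bound held, the pointwise bound would hold for all of $X_{n/2}$, not merely a fraction $\zeta\ge 1/2$. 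The ``constant fraction'' clause in the conjecture is there to hedge against the moment bounds failing for some strings, not to be recovered by Markov from an average bound you have not established. In short: correct framing, same Paley--Zygmund skeleton as the paper's evidence, but the two load-bearing moment estimates remain conjectural, so this is a sketch of why the conjecture is plausible rather than a proof of it.
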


Here, $t_1 = \Theta(\log n )$ is obtained from the numerical observation presented in the next section. 
As we consider $U=e^{-iHt}$ with $\Vert H \Vert =O(n)$, our condition states that anticoncentration holds when $\Vert H\Vert t = O(n \log n)$.
This is consistent with the condition that the sparse \textsf{IQP} demonstrates anticoncentration~\cite{bremner2017achieving}, the circuit of which can be written as $e^{-i H'}$ with $\Vert H' \Vert = O(n \log n)$.

Mark et al.~\cite{mark2023benchmarking} provides further evidence for Conjecture~\ref{conj:anticoncentration} at least for $H_{2,3,4}$ and large $t$. 
Specifically, the study showed that the output distribution converges to the Porter--Thomas distribution as $t \rightarrow \infty$ when the eigenvalues of the Hamiltonian follow a reasonable condition, which is likely to be satisfied for non-integrable Hamiltonians.
As $H_{2,3,4}$ are non-integrable, their results suggest that the output distribution of $H_{2,3,4}$ anticoncentrates for large $t$ since the Porter--Thomas distribution implies anticoncentration.
Although this result only applies to sufficiently large $t$, given that anticoncentration is weaker than the Porter--Thomas distribution, we expect that the output distribution can anticoncentrate for smaller $t$.
In Appendix~\ref{app:anticon_ising}, we also prove a result supporting anticoncentration for $H_1$.
We particularly show that the time-averaged value of $p(\mathbf{x};J;t)$ must be $2 \times 2^{-2n}$ for all $\mathbf{x} \in X$ regardless of $J$.
Thus, if $t$ is sufficiently large to equilibrate all $p(\mathbf{x};J;t)$, the output distribution anticoncentrates.

By combining Theorem~\ref{thm:apply_stockmeyer} and Conjecture~\ref{conj:anticoncentration}, we obtain the following corollary  (see Appendix~\ref{app:hardness_sampling_proof} for details).

\begin{restatable}{corollary}{ComplexitySmallTVDSampling}
\label{col:complexity_small_tvd_sampling}
    Suppose that there exists $t_0 = \Theta(\log n)$ such that Conjectures~\ref{conj:hardness_output_probs} and \ref{conj:anticoncentration} hold. 
    If there exists a classically efficient algorithm that can sample from a probability distribution which approximates $p_{1,2}(\mathbf{x};J;t_0)$ $(p_{3,4}(\mathbf{x};J;t_0))$ within an additive error of $\varepsilon \zeta n^{-1}/8$ $(\varepsilon \zeta\sqrt{\pi} n^{-1/2}/8)$, then there exists an $\mathsf{FBPP}^{\mathsf{NP}}$ algorithm that can approximate $p(\mathbf{x};J;t_0)$ within an additive error of $\varepsilon 2^{-2n}$ $(\varepsilon {2n \choose n}^{-1})$ with probability greater than $1 - \beta$ over $J$.
    As it is a \emph{\textsf{\#P-hard}} problem according to Conjecture~\ref{conj:hardness_output_probs}, this implies that the \textsf{PH} collapses to the third level.
\end{restatable}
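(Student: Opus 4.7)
The plan is to combine Theorem~\ref{thm:apply_stockmeyer} with Conjecture~\ref{conj:anticoncentration} to construct an $\mathsf{FBPP}^{\mathsf{NP}}$ algorithm matching the hypothesis of Conjecture~\ref{conj:hardness_output_probs}, from which the $\mathsf{PH}$-collapse then follows. Starting from the hypothesised efficient classical sampler with $L_1$-error $\nu$ as in the statement, I feed it into Theorem~\ref{thm:apply_stockmeyer} with a small constant $\gamma$. This produces, for each fixed $\mathbf{x}\in X_{n/2}$, an $\mathsf{FBPP}^{\mathsf{NP}}$ estimator of $p(\mathbf{x};J;t_0)$ within additive error $\epsilon_I = (1+g)\gamma^{-1}\nu(2n)2^{-2n} + g\,p(\mathbf{x})$ (or $\epsilon_{II}$ in the $H_{3,4}$ case), so the task reduces to showing $\epsilon_I \leq \varepsilon 2^{-2n}$ on at least a $(1-\beta)$-fraction of $J$.

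The two pieces of $\epsilon_I$ are handled separately. The deterministic first piece becomes $\leq \varepsilon 2^{-2n}/2$ after taking $\gamma=\Theta(\zeta)$ and plugging in $\nu=\varepsilon\zeta n^{-1}/8$; the prefactors $\zeta/8$ and $\sqrt{\pi}/8$ in the corollary are precisely those that make this arithmetic close. For the second piece $g\,p(\mathbf{x})$ with $g=1/\mathrm{poly}(n)$, I need an upper bound on $p(\mathbf{x};J;t_0)$ that holds with probability $\geq 1-\beta$ over $J$. The plan here is a Markov argument: the normalisation $\sum_\mathbf{x} p = 1$ together with symmetries of the Gaussian ensemble yields $\mathbb{E}_J[p(\mathbf{x};J;t_0)] = O(2^{-2n})$ (respectively $O(\binom{2n}{n}^{-1})$) uniformly in $\mathbf{x}\in X_{n/2}$, so Markov's inequality gives $p(\mathbf{x};J;t_0) \leq O(2^{-2n}/\beta)$ with probability $\geq 1-\beta$. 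On this event $g\,p(\mathbf{x})\leq \varepsilon 2^{-2n}/2$ for all sufficiently large $n$, closing the bound.

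Assembling the two pieces, the $\mathsf{FBPP}^{\mathsf{NP}}$ algorithm achieves additive error $\leq\varepsilon 2^{-2n}$ on a $(1-\beta)$-fraction of $J$, which by Conjecture~\ref{conj:hardness_output_probs} places $\mathsf{\#P}$ inside $\mathsf{FBPP}^{\mathsf{NP}}\subseteq\Sigma_3^{\mathrm{P}}$ via Toda's theorem. The main obstacle, conceptually, is that Conjecture~\ref{conj:anticoncentration} supplies only a \emph{lower} tail bound on $p(\mathbf{x};J;t_0)$, whereas the second piece of $\epsilon_I$ needs an \emph{upper} tail bound; bridging the gap by Markov forces me to control $\mathbb{E}_J[p(\mathbf{x};J;t_0)]$, which is not trivially uniform in $\mathbf{x}$ because bit-flips of $\mathbf{x}$ cannot all be absorbed into symmetries of the Gaussian $J$ alone. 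A secondary subtlety is the bookkeeping of the two independent randomness sources: $\gamma$ for the machine's internal failure and $\beta$ for the $J$-Markov budget, where $\gamma$ must be chosen strictly smaller than the gap $1-\beta$ to avoid eating into the final success probability.
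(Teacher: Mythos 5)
Your overall skeleton matches the paper's: feed the sampler into Theorem~\ref{thm:apply_stockmeyer} with $\gamma$ tied to $\zeta$ (the paper takes $\gamma=\zeta/2$) and $\nu$ as in the statement, so that the first term of $\epsilon_I$ (resp.\ $\epsilon_{II}$) becomes $(1+g)\varepsilon 2^{-2n}/2$ (resp.\ $(1+g)\varepsilon\binom{2n}{n}^{-1}/2$), and then invoke Conjecture~\ref{conj:hardness_output_probs} plus Stockmeyer/Toda for the collapse. But the paper handles the leftover term $g\,p(\mathbf{x})$ entirely within its stated hypotheses: choosing $\gamma=\zeta/2$ guarantees that at least a $\zeta/2$ fraction of $\mathbf{x}\in X_{n/2}$ simultaneously satisfies the Markov-type bound of Theorem~\ref{thm:apply_stockmeyer} \emph{and} appears in Conjecture~\ref{conj:anticoncentration}, and for those $\mathbf{x}$ the term $g\,p(\mathbf{x})$ is controlled by $g\,\alpha\, D^{-1}$ using the constant $\alpha$ from Conjecture~\ref{conj:anticoncentration}, with the probability over $J$ inherited from the $\beta$ there; since $g=1/\mathrm{poly}(n)$, the total error $[\varepsilon(1+g)/2+g\alpha]D^{-1}$ drops below $\varepsilon D^{-1}$ for large $n$. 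Note also that the $\gamma$ in Theorem~\ref{thm:apply_stockmeyer} is a \emph{fraction of outcomes} $\mathbf{x}\in X_{n/2}$ (it arises from Markov's inequality over $\mathbf{x}$ applied to the $L_1$ error), not an internal failure probability of the $\mathsf{FBPP}^{\mathsf{NP}}$ machine, so it should not be budgeted against $\beta$ as a second randomness source over $J$; its role is precisely the intersection argument above, which your write-up omits.

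The genuine gap is your treatment of $g\,p(\mathbf{x})$. You bound it by Markov over $J$, which requires $\mathbb{E}_J[p(\mathbf{x};J;t_0)]=O(2^{-2n})$ (resp.\ $O(\binom{2n}{n}^{-1})$) \emph{uniformly in} $\mathbf{x}\in X_{n/2}$, justified only by ``normalisation together with symmetries of the Gaussian ensemble.'' Normalisation controls the average over $\mathbf{x}$ at fixed $J$, not each individual $\mathbf{x}$, and because these models lack the hiding property (a point the paper emphasizes), Gaussian symmetry of $J$ does not transfer the average to every fixed outcome. The paper only establishes a statement of this kind for $H_1$ in a time-averaged sense (Appendix~\ref{app:anticon_ising}), not at a fixed $t_0=\Theta(\log n)$ and not for $H_{2,3,4}$; for the other models it is supported only numerically. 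So your argument silently imports an extra unproven first-moment/upper-tail assumption that is not among the corollary's hypotheses, while leaving Conjecture~\ref{conj:anticoncentration} essentially unused. You correctly observe the tension that Conjecture~\ref{conj:anticoncentration} is a lower-tail statement whereas the $g\,p(\mathbf{x})$ term needs an upper bound on $p(\mathbf{x})$ --- that observation is sharp --- but the resolution intended by the paper is to restrict to the $\zeta/2$ fraction of $\mathbf{x}$ and the $J$-events furnished by Conjecture~\ref{conj:anticoncentration} and use its constant $\alpha$, not to introduce a new expectation bound. As proposed, your proof does not derive the corollary from Conjectures~\ref{conj:hardness_output_probs} and \ref{conj:anticoncentration} alone.
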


Corollary~\ref{col:complexity_small_tvd_sampling} implies that sampling from the output distribution within an inverse polynomially small error is a classically hard problem unless the \textsf{PH} collapses.
We additionally note that the required $L_1$ distance can be even relaxed to a constant for $H_{3,4}$ if a constant fraction of $\mathbf{x} \in X$ (instead of $\mathbf{x} \in X_{n/2}$) satisfies the anticoncentration condition (see Appendix~\ref{app:hardness_sampling_proof} for details).

\begin{figure*}
    \centering
    \includegraphics[width=0.9\linewidth]{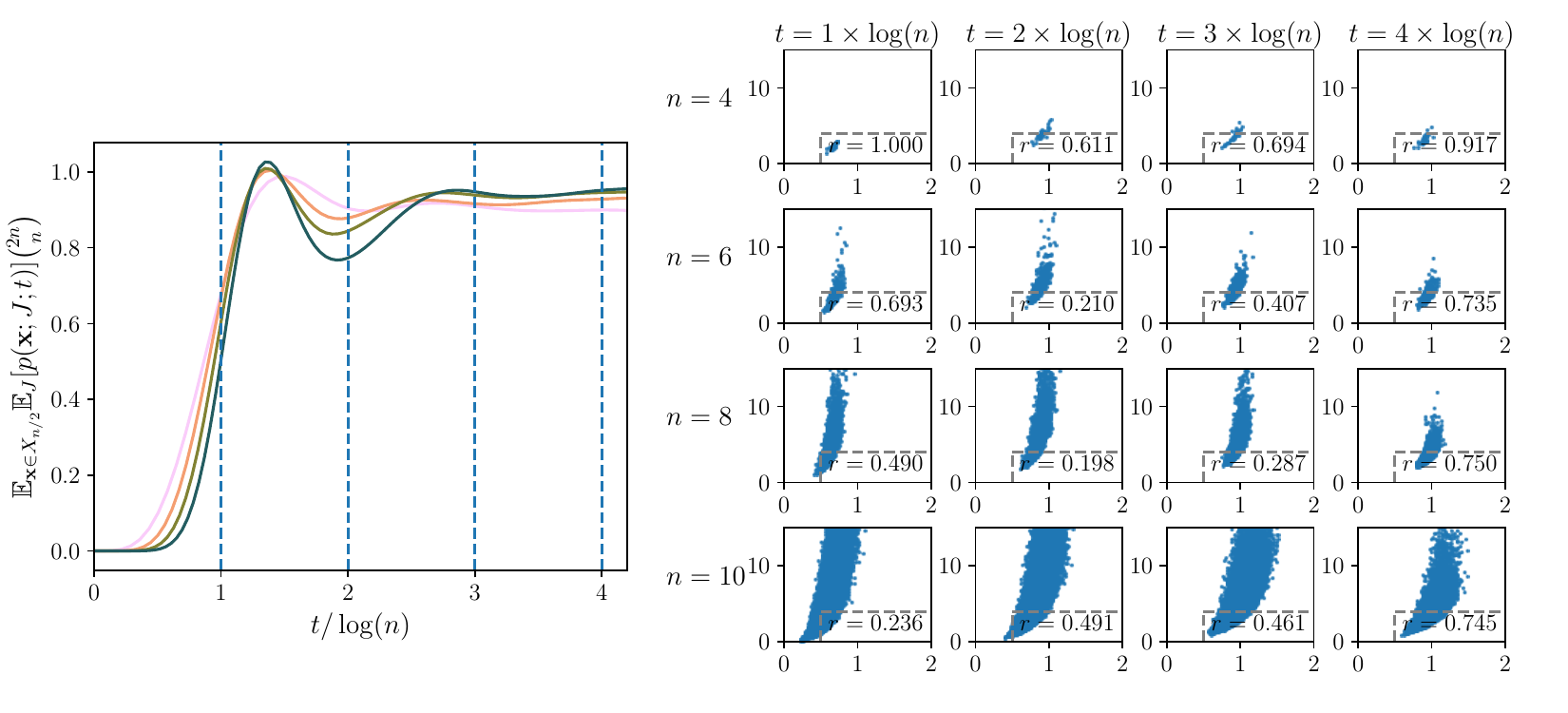}
    \caption{Left: Output probability $p(\mathbf{x};J;t)$ averaged over $J\sim \mathcal{N}(0,1)^{n \times n}$ and $\mathbf{x} \in X_{n/2}$ as a function of $t$.
    Results are plotted for $n=4$ (the lightest) to $n=10$ (the darkest) with a step size of $2$. The plot indicates that time to equilibrate $t_{\rm eq}$ scales with $\log n$. 
    Right: Scatter plots of $\mathbb{E}_J[p(\mathbf{x};J;t)] {2n \choose n}$ (x-axis) versus $\mathbb{E}_J[p(\mathbf{x};J;t)^2] {2n \choose n}^2$ (y-axis).
    For $t/\log n \in [1,2,3,4]$ and $n \in [4,6,8,10]$, we numerically compute $p(\mathbf{x};J;t)$ and $p(\mathbf{x};J;t)^2$ for all $\mathbf{x} \in X_{n/2}$ averaged over $J\sim \mathcal{N}(0,1)^{n \times n}$.
    Each data point represents $\mathbb{E}_J[p(\mathbf{x};J;t)] {2n \choose n}$ and $\mathbb{E}_J[p(\mathbf{x};J;t)^2] {2n \choose n}^2)$ for each $\mathbf{x}$.
    The dashed lines indicate $\mathbb{E}_J[p(\mathbf{x})] {2n \choose n} = 1/2$ and $\mathbb{E}_J[p(\mathbf{x})^2] {2n \choose n}^2 = 4$.
    We also present $r$ which is the ratio of $\mathbf{x} \in X_{n/2}$ satisfying $\mathbb{E}_J[p(\mathbf{x})] {2n \choose n} \geq 1/2$ and $\mathbb{E}_J[p(\mathbf{x})^2] {2n \choose n}^2 \leq 4$ (see main text).  
    Numerical results are obtained by fully diagonalizing the Hamiltonian for $n \in [4, 6, 8]$ and by using the second-order Trotter--Suzuki formula with a time step of $dt=10^{-6}$ for $N=10$.
    We selected such a small time step to ensure that the Trotter error is sufficiently smaller than $p(\mathbf{x};J;t)$ itself, which scales as $2^{-2n}$.
    The results are averaged over $2^{10}$ and $2^8$ random instances of $J\sim \mathcal{N}(0,1)^{n \times n}$ for $n \in [4,6,8]$ and $n=10$, respectively.
    }
    \label{fig:anticon_h3}
\end{figure*}

\section{Numerical verification of anticoncentration}~\label{sec:numerics_anticon}
In this section, we provide numerical evidence of the anticoncentration property given in Conjecture~\ref{conj:anticoncentration}.
We mainly focus on the XX model ($H_3$) here and provide results for the Ising model ($H_1$) in Appendix~\ref{app:anticon_ising}.

To numerically test anticoncentration, it is sufficient to compute
$\mathbb{E}_J[p(\mathbf{x};J;t)]$ and $\mathbb{E}_J[p(\mathbf{x};J;t)^2]$ for all $\mathbf{x} \in X_{n/2}$.
Let us suppose that $\mathbf{x}$ satisfies the following conditions:
\begin{align}
    \mathbb{E}_J[p(\mathbf{x};J;t)] &\geq K {2n \choose n}^{-1},\\
    \mathbb{E}_J[p(\mathbf{x};J;t)^2] &\leq \Lambda {2n \choose n}^{-2}, \label{eq:anticon_pz_eq}
\end{align}
where $K \leq 1$ and $\Lambda \geq 1$ are arbitrary constants.
Applying Paley-Zygmund inequality to $p(\mathbf{x};J;t)$ yields the following
\begin{align}
    &\underset{J \sim \mathcal{N}(0,1)^{n \times n}}{\mathrm{Pr}} \Bigl[ p(\mathbf{x};J;t) > \theta \mathbb{E}_J[p(\mathbf{x};J;t)] \Bigr] \nonumber \\
    &\qquad \geq (1-\theta)^2 \frac{\mathbb{E}_J[p(\mathbf{x};J;t)]^2}{\mathbb{E}_J[p(\mathbf{x};J;t)^2]},
\end{align}
which is satisfied for all $0 \leq \theta \leq 1$.
By entering $\theta=1/2$, we obtain
\begin{align}
    \underset{J \sim \mathcal{N}(0,1)^{n \times n}}{\mathrm{Pr}} \Bigl[ p(\mathbf{x};J;t) \geq \frac{K}{2} {2n \choose n}^{-1} \Bigr] \geq \frac{K^2}{4 \Lambda}.
\end{align}
Comparing the above equation to Conjecture~\ref{conj:anticoncentration}, we obtain $\alpha = K/2$ and $\beta = K^2/(4\Lambda)$.
Therefore, if the ratio of $\mathbf{x} \in X_{n/2}$ satisfying Eq.~\eqref{eq:anticon_pz_eq} converges to a constant as $n$ increases, we numerically confirm Conjecture~\ref{conj:anticoncentration}.
For numerical results, we choose $K=1/2$ and $\Lambda = 4$, yielding $\alpha = 1/4$ and $\beta = 1/64$.

For $n \in [4,6,8,10]$, we plot the output probability $p(\mathbf{x};J;t)$ averaged over $\mathbf{x} \in X_{n/2}$ and $J \sim \mathcal{N}(0,1)^{n \times n}$ in Fig.~\ref{fig:anticon_h3} (Left).
The results show that the averaged output probability reaches a stationary value for $t \gtrsim 3 \log n$ regardless of $n$.
This resembles a typical behavior of quantum equilibration~\cite{gogolin2016equilibration}.
However, compared to most existing literature on the equilibration of local observables, we observe the equilibration of the output probability $p(\mathbf{x};J;t)$.

We also present $p(\mathbf{x};J;t)$ and $p(\mathbf{x};J;t)^2$ averaged over $J$ in Fig.~\ref{fig:anticon_h3} (Right).
For each $n$ and $t$ presented in the plot, we also compute $r$, the ratio of $\mathbf{x} \in X_{n/2}$ which satisfies Eq.~\eqref{eq:anticon_pz_eq}.
Our results suggest that $r$ does not converge to zero as $n \rightarrow \infty$ for any of $t/\log n \in [2,3,4]$.
Specifically, we observe that $r$ oscillates with $n$ for $t/\log n \in [2,3]$, and $r \geq 0.7$ for $t = 4 \log n$ regardless of $n$.
For any value of considered $t$, $r$ is unlikely to converge to zero with $n$, thus suggesting the anticoncentration property by the Paley-Zygmund inequality.

We remark that the anticoncentration results we presented here rule out the possibility that $p(\mathbf{x};J;t)$ is close to $0$ for most of $\mathbf{x}$.
In this sense, anticoncentration is not merely an ingredient for the hardness proof but is also regarded as a signature of the average-case hardness to some extent (see Ref.~\cite{hangleiter2023computational} for further discussion).

\section{Quantum circuit realizations} \label{sec:quantum_circuit_realizations}
So far, we demonstrated that classical algorithms for computing the output distribution or sampling from it do not exist unless the \textsf{PH} collapses.
In this section, we show that a quantum computer can perform the sampling task in polynomial time using a quantum circuit implementation of the spin dynamics (i.e., $U=e^{-\complexi H t}$).

As some experimental platforms such as trapped ions~\cite{blatt2012quantum,monroe2021programmable} and neutral atoms~\cite{weimer2010rydberg,scholl2021quantum} naturally implement spin-$1/2$ Hamiltonians with noisy qubits,
it is tempting to implement our Hamiltonians directly on these systems.
However, an important obstacle is that our Hamiltonians are defined on a bipartite lattice, which makes a direct implementation challenging without adjusting the connectivity of the Hamiltonian.
While such connectivity-adjusted Hamiltonians may maintain classical hardness thanks to some degrees of long-range connectivity in those systems, it is an open question that needs separate proof. 
Noisy qubits are another limitation of using such native systems.
With the recent advances in simulating quantum systems using tensor networks, quantum advantage experiments with noisy qubits, once believed to be classically intractable, 
became amenable to classical simulations~\cite{pan2022solving,oh2023tensor}.

Hence, we discuss an implementation based on fault-tolerant intermediate-scale quantum (ISQ) hardware.
Most importantly, we prove that a circuit with a polynomial depth is sufficient to generate samples with a small $L_1$ distance,
which is classically challenging by Corollary~\ref{col:complexity_small_tvd_sampling}.

As the unitary operator generated by $H_1$, $e^{-\complexi H_1 t}=\prod_{ij} e^{-\complexi (tJ_{ij}/n) \sigma^{(i)}_x \tau^{(j)}_x}$, can be directly implemented in a quantum circuit with $n^2$ logical gates due to its commuting nature, we here focus on $H_{2,3,4}$ with non-commuting terms.
Still, we note that there is an interesting relation between the dynamics generated by $H_1$ and IQP circuits~\cite{bremner2016average,bremner2017achieving,bermejo2018architectures,haferkamp2020closing}.
They both can be implemented using commuting gates, but the underlying hardness problems are different: the permanent for $H_1$ and the universality of measurement-based quantum computation for IQP.
Thus, while we expect that most classical algorithms for IQP circuits can also simulate the dynamics of $H_1$, the required number of qubits to demonstrate quantum advantage counted based on the underlying problems can be much smaller for the dynamics of $H_1$ (see, e.g., Ref.~\cite{dalzell2020many} and our discussion below).

We now prove that a polynomial number of gates is sufficient to demonstrate quantum advantage for Hamiltonians $H_{2,3,4}$.
As a first step, we show that implementing a unitary $\tilde{U}$ that approximates $U=e^{-\complexi H t_0}$ within an inverse polynomially small additive error suffices to demonstrate quantum advantage.
For $\rho=\ket{\mathbf{y}_0} \bra{\mathbf{y}_0}$, the output probability distributions from two unitary operators are given as $p(\mathbf{x}) = \braket{\mathbf{x}|U\rho U^\dagger|\mathbf{x}}$ and $p'(\mathbf{x}) = \braket{\mathbf{x}|\tilde{U} \rho \tilde{U}^\dagger|\mathbf{x}}$.
Then, the $L_1$ distance between the two probability distributions is given by
\begin{align*}
    &\sum_{\mathbf{x}} |p(\mathbf{x}) - p'(\mathbf{x})| \\
    &= \sum_{\mathbf{x}} |\braket{\mathbf{x}|U\rho U^\dagger|\mathbf{x}} - \braket{\mathbf{x}| \tilde{U} \rho \tilde{U}^\dagger|\mathbf{x}}| \\
    &\leq 2 \Tr|U \rho U^\dagger - \tilde{U} \rho \tilde{U}^\dagger| \\
    &\leq 2 \Tr |U \rho (U^\dagger - \tilde{U}^\dagger)| + 2\Tr |(U - \tilde{U}) \rho \tilde{U}^\dagger| \\
    &\leq 4 \Vert U - \tilde{U} \Vert.
\end{align*}
For the first inequality, we used the property of the trace distance $2 \Tr|\rho - \rho'| = \max_{\{\Pi_i\}} \sum_i \Tr [\Pi_i(\rho - \rho')]$, in which the maximization is over all POVM $\{\Pi_i\}$.
The second and last inequalities are obtained from the triangular inequality and from H\"{o}lder's inequality $\Tr|AB| \leq \Vert A \Vert \Tr|B|$ with $\Tr|\rho U| = 1$ which is satisfied for any density matrix $\rho$ and any unitary operator $U$, respectively.

We now consider a quantum circuit implementation of $U=e^{-\complexi Ht_0}=(e^{-\complexi Ht_0/M})^M$ by using the Trotter decomposition~\cite{childs2021theory}.
By applying the $p$-th order Suzuki--Trotter formula, we obtain a quantum circuit implementation of $e^{-\complexi Ht_0/M}$ with an additive error of $O[\Upsilon_p (t/M)^{p+1}]$, where
\begin{align}
    \Upsilon_p = \sum_{i_1,\cdots, i_{p+1}=1}^\Gamma \Vert [h_{i_{p+1}},\cdots,[h_{i_2}, h_{i_1}] \Vert ,
\end{align}
with $H=\sum_{i=1}^\Gamma h_i$.
As we apply $M$ repetition of $e^{-\complexi Ht_0/M}$, the total error becomes $O[M\Upsilon_p (t_0/M)^{p+1}]$

For example, let us compute the error from the first order ($p=1$) Trotter decomposition using $H=H_3$. From the Suzuki--Trotter formula, we obtain
\begin{align*}
    &\exp \biggl[ -i \frac{t_0}{M}\sum_{ij} \frac{J_{ij}}{2n} \Bigl( \sigma^{(i)}_x \tau^{(j)}_x + \sigma^{(i)}_y \tau^{(j)}_y \Bigr) \biggr] = \\
    & \prod_{ij} \exp \biggl[ -i \frac{t_0}{M} \frac{J_{ij}}{2n} \Bigl( \sigma^{(i)}_x \tau^{(j)}_x + \sigma^{(i)}_y \tau^{(j)}_y \Bigr) \biggr] + O(\Upsilon_1 (t_0/M)^{-2}),
\end{align*}
where
\begin{align}
    &\Upsilon_1 = \sum_{i,j,k,l=1}^{n} \frac{|J_{ij}J_{kl}|}{4n^2} \times \nonumber \\
    &\qquad \Bigl\Vert \bigl[ \sigma^{(i)}_x \tau^{(j)}_x + \sigma^{(i)}_y \tau^{(j)}_y , \sigma^{(k)}_x \tau^{(l)}_x + \sigma^{(k)}_y \tau^{(l)}_y \bigr] \Bigr\Vert.
\end{align}
The commutator is nonzero only when either $i=k$ or $j=l$ (but not both); therefore, we have $2(n-1)$ possible choices of $(k,l)$ which contribute to the sum for each $(i,j)$.
As there exists a total of $n^2$ possibilities in selecting $(i,j)$, we have $\Upsilon_1 = O(n)$.
Thus by taking $M=\Theta(n^2 \log^2 n)$, we can bound the error by $O(\Upsilon_1 M^{-1} t_0^{2})=O(n^{-1})$ for $t_0=\Theta(\log n)$, which is sufficient to show the classical hardness by Corollary~\ref{col:complexity_small_tvd_sampling}.
In this case, the total number of gates is $Mn^2=\Theta(n^4 \log^2 n)$.

Similarly, the same type of calculation for the second-order Trotter formula results in $\Upsilon_2 = O(n)$, the total additive error of $O(nM^{-2} t_0^{3})$, and $2Mn^2$ total number of gates.
By choosing $M=\Theta(n \log^{3/2} n)$, we can bound the error by $O(n^{-1})$ using $\Theta(n^3 \log^{3/2} n)$ gates in total.
We can generalize this further by using the $p$-th order formula with arbitrary $p$; however, in any case, we can bound the error by $O(n^{-1})$ with a polynomial number of gates.

Finally, we roughly estimate how many logical qubits are required to demonstrate quantum advantage.
Recall that our problem's hardness relies on the permanent of submatrices, the sizes of which are determined by the output bitstrings.
As the output bitstrings are typically in $X_{m}$ with $m\approx n/2$, and computing the permanent of a dense matrix, the size of which is bigger than $\approx 50 \times 50$, is considered classically challenging (see, e.g., Refs.~\cite{Wu2018,Lundow2022}), we believe that an experiment with around $200$ qubits (i.e., $n \approx 100$) may be sufficient to demonstrate quantum advantage.
In addition, the number of gates to implement the Hamiltonian evolution for $n=100$ and $t_0=5 \times \log(100) \approx 23.03$ is approximately given by $1.2 \times 10^8$, $3.8 \times 10^8$, and $1.2 \times 10^9$, when the target Trotter error is $10^{-1}$, $10^{-2}$, and $10^{-3}$, respectively (see Appendix~\ref{app:estimating_number_of_gates} for details).

However, unlike \textsf{BosonSampling}, where the output probability distribution is proportional to the permanent of a matrix,
our main theorem only states that computing the output probability distribution implies computing the permanent; the converse is not necessarily true.
Thus, the output probability distribution of smaller systems can still be classically intractable for different reasons.
Indeed, 50-100 qubits are already regarded as extremely challenging for usual classical simulation algorithms for quantum spin dynamics (e.g., tensor network methods, Krylov subspace method, and time-dependent variational Monte-Carlo) unless the Hamiltonian has more structures that can be utilized (e.g., geometrically local connectivity).

\section{Conclusion and Outlook}
In this paper, we have proven the classical hardness of quantum spin dynamics.
We showed that the near-exactly computing the output distribution from the unitary operator generated by a wide class of many-body spin Hamiltonians is \textsf{\#P-hard}.
Those Hamiltonians include the Ising, XX, and Heisenberg models, defined on a bipartite lattice.
With the anticoncentration conjecture, according to which the output probability for each bitstring fluctuates only near its averaged value over the random Hamiltonian instances, we also proved that sampling from the output distribution is classically hard.
In contrast, the same task can be completed using a quantum circuit with a polynomial number of gates.

We expect that our techniques can be applied to other spin models.
Straightforward examples include the Ising, XX, and Heisenberg models on a bipartite lattice ($H_{1,2,3,4}$ considered in this study) with local fields in the $z$-direction (see Appendix~\ref{app:permanent_hamiltonian_moments}). 
Adding such local fields does not change $\braket{\mathbf{x}|H^l | \mathbf{y}_0}$ for all $k \leq n$, and the same complexity arguments hold.

Our results mainly focus on the hardness of computing the output distribution and the sampling task.
Sampling is an essential process for computing expectation values of many observables, e.g., the spin-$z$ correlation functions, using a quantum computer.
However, the classical hardness of estimating these observables must be studied separately; our results do not exclude the existence of an efficient algorithm for approximating $\braket{\psi_0 |e^{iHt} M e^{-iHt}|\psi_0}$, where $M$ is a local observable.
For example, some quantum circuits, including the \textsf{IQP}~\cite{ni2013commuting} and 2D short-depth circuits~\cite{bravyi2021classical}, allow efficient classical algorithms for estimating \text{local} observables, even though sampling from the output distribution is computationally hard in general.
We do not expect an efficient classical algorithm for estimating general observables for our Hamiltonians to exist, as our models are non-commuting and have more complex connectivity.
However, we leave it as an open question, given that a formal proof of hardness for computing observables will require different techniques.

Another open question is the verification scheme for the sampling task.
For a small system in which the time evolution can be classically computed, we can use the usual cross-entropy benchmarking~\cite{boixo2018characterizing}.
However, in contrast to instantaneous quantum polynomial circuits or random circuits, whose output distribution becomes the Porter--Thomas distribution, the distribution from our Hamiltonian dynamics does not necessarily follow it. 
Hence, a detailed verification scheme for a large number of qubits must be analyzed.

We provided a preliminary analysis of the requirements for an experiment demonstrating quantum advantage. These estimates are encouraging in suggesting that such experiments may be possible using a few hundred error-corrected qubits with less than a billion gates; these are within expectations of the capabilities of the first generation of fault-tolerant quantum computers. However, we emphasize that a more careful and detailed analysis is required to establish these numbers more confidently.

We close our paper by commenting on the generalization of our theory for going beyond the bipartite graph, which is essential for practical applications in chemistry, biology, and materials science.
We expect to draw similar conclusions for non-bipartite graphs, which generate hafnians or loop-hafnians instead of permanents.
Computing hafnians and loop-hafnians is \textsf{\#P-hard} as well (see, e.g., Ref.~\cite{bjorklund2019}), and we can repeat the proving processes as done in this paper.
The hardness results for general graphs would close the gap between our theory and quantum advantage in practical applications, for example, nuclear magnetic resonance spectroscopy~\cite{Sels2020}, where the complexity of estimating observables must be additionally considered.

\begin{acknowledgments}
CYP thanks Prof. Soonwon Choi for bringing Ref.~\cite{mark2023benchmarking} to our attention.
JH thanks Minhyeok Kang for insightful discussions. 
This work was partly supported by the Basic Science Research Program through the National Research Foundation of Korea (NRF), funded by the Ministry of Education, Science and Technology (NRF-2021M3E4A1038308, NRF-2021M3H3A1038085, NRF-2022M3H3A106307411, NRF-2023M3K5A1094805, and NRF-2023M3K5A109481311) and Institute for Information \& communications Technology Promotion (IITP) grant funded by the Korea government(MSIP) (No. 2019-0-00003, Research and Development of Core technologies for Programming, Running, Implementing and Validating of Fault-Tolerant Quantum Computing System).
This research used resources of the National Energy Research Scientific Computing Center, a DOE Office of Science User Facility supported by the Office of Science of the U.S. Department of Energy under Contract No. DE-AC02-05CH11231 using NERSC award NERSC DDR-ERCAP0025705.
JH acknowledges Xanadu for hosting his sabbatical year visit. 
The source code used for the simulations is available in the GitHub repository: \url{https://github.com/XanaduAI/hardness-spin-dynamics}.
\end{acknowledgments}


\bibliography{reference}

\appendix
\onecolumngrid

\section{Big O and related notations} \label{app:big_o_notation}
In this appendix, we summarize big $O$ and related notations.
Let $f(n)$ and $g(n)$ be functions of $n \in \mathbb{N}$.
Then, each notation is defined as follows:

\begin{itemize}
    \item $f(n) = o(g(n))$ iff for all $k > 0 $, there exists $n_0$ such that $|f(n)| < k g(n)$ for all $n > n_0$.
    \item $f(n) = O(g(n))$ iff there exist $k>0, n_0$ such that $|f(n)| \leq k g(n)$ for all $n > n_0$.
    \item $f(n) = \omega(g(n))$ iff for all $k>0$, there exists $n_0$ such that $f(n) > k g(n)$ for all $n > n_0$.
    \item $f(n) = \Omega(g(n))$ iff there exist $k>0, n_0$ such that $f(n) \geq k g(n)$ for all $n > n_0$.
    \item $f(n) = \Theta(g(n))$ iff $f(n) = O(g(n))$ and $f(n) = \Omega(g(n))$.
\end{itemize}

For example, if $f(n)=1/\log n$, for any given $k > 0$, $f(n) < k$ for all $n > e^k$. Therefore, $f(n) = o(1)$.

\section{Relating permanent of a matrix and moments of Hamiltonian}~\label{app:permanent_hamiltonian_moments}
In this appendix, we prove how the permanent of $J$ and its submatrices are obtained from moments of Hamiltonians.
We will use the results in this appendix extensively in the next appendix to prove the hardness of the output distribution.
First, we show the relation between the permanent and $H_3$.
Recall that $H_3$ is defined as
\begin{align}
    H_3 &= \sum_{i,j = 1}^n \frac{J_{ij}}{2n} \big[ \sigma^{(i)}_{x} \tau^{(j)}_{x} + \sigma^{(i)}_{y} \tau^{(j)}_{y} \big] = \sum_{i,j = 1}^n \frac{J_{ij}}{n} \big[ \sigma^{(i)}_{+} \tau^{(j)}_{-} + \sigma^{(i)}_{-} \tau^{(j)}_{+}\big],
\end{align}
where $\sigma^{(i)}_{\pm} = (\sigma^{(i)}_x \pm i\sigma^{(i)}_y)/2$ (and the same for $\tau^{(j)}_\pm$).

Following the notation in the main text, we consider binary vectors $\mathbf{x}$ and $\mathbf{y}$ of length $2n$,
and their substrings $\mathbf{x}^\sigma = x_1\cdots x_n$, $\mathbf{x}^\tau = x_{n+1}\cdots x_{2n}$ (the same for $\mathbf{y}_{\sigma,\tau}$).
Then, set $X_k$ is defined as $X_k = \{\mathbf{x} \in \mathbb{Z}_2^{2n}: \wt(\mathbf{x^\sigma} ) = k \text{ and } \wt(\mathbf{x^\tau} ) = n-k \}$.
Note that there is only a single vector in $X_0$, which is $\mathbf{y}_0 = \mathbf{0}^{\sigma}\mathbf{1}^{\tau}$.

We apply $H_3$ to $\ket{\mathbf{y}}$ in $\mathbf{y} \in X_k$.
In particular, note that each term of $H_3$ generates either a product state (with some abuse of notation, we use a product state and a bitstring interchangeably) in $X_{k+1}$ or $X_{k-1}$ with $J_{ij} \sigma^{(i)}_{+}\tau^{(j)}_{-}$ or with $J_{ij} \sigma^{(i)}_{-}\tau^{(j)}_{+}$, respectively,  from $\ket{\mathbf{y}}$.
In other words, we can write
\begin{align}
    H_3 \ket{\mathbf{y}} = \sum_{\mathbf{x} \in X_{k+1}|\mathbf{y}}  a_\mathbf{x} \ket{\mathbf{x}} + \sum_{\mathbf{x} \in X_{k-1}|\mathbf{y}}  b_\mathbf{x} \ket{\mathbf{x}} , \label{eq:h3_ket_y}
\end{align}
where $X_{k\pm 1}|\mathbf{y}$ denotes a bitstring in $X_{k\pm 1}$ generated form $\mathbf{y}$ (by applying $J_{ij} \sigma^{(i)}_{\pm}\tau^{(j)}_{\mp}$, respectively).
Given that $\mathbf{y}_0 \in X_0$, Eq.~\eqref{eq:h3_ket_y} implies that 
\begin{align}
    \braket{\mathbf{x} | H_3^l | \mathbf{y}_0} = 0 ,
\end{align}
for all $\mathbf{x} \in X_k$ when $l < k$.

Now, let us focus on $\braket{\mathbf{x} | H_3^k | \mathbf{y}_0}$ for $\mathbf{x} \in X_k$.
For convenience, let us define $\mathcal{S} = \{i \in [1, \cdots, n]: x_i = 1\}$, and $\mathcal{T} = \{i \in [1, \cdots, n]: x_{n+i} = 0\}$.
Thus, for $\mathbf{x} \in X_k$, $|\mathcal{S}|= |\mathcal{T}| = k$.
By explicitly writing $\braket{\mathbf{x} | H_3^k | \mathbf{y}_0}$, we obtain
\begin{align}
    &\braket{\mathbf{x}|H_3^k|\mathbf{y}_0} = 
    \bra{\mathbf{x}} \overbrace{\Bigl[ \sum_{i,j = 1}^n \frac{J_{ij}}{n} \big( \sigma^{(i)}_{+} \tau^{(j)}_{-} + \sigma^{(i)}_{-} \tau^{(j)}_{+}\big) \Bigr] \cdots \Bigl[ \sum_{i,j = 1}^n \frac{J_{ij}}{n} \big( \sigma^{(i)}_{+} \tau^{(j)}_{-} + \sigma^{(i)}_{-} \tau^{(j)}_{+}\big) \Bigr]}^{k \text{ times }} \ket{\mathbf{y}_0}.
\end{align}
Among all possible products, we want to extract terms with $\prod_{i \in S} \prod_{j \in T} \sigma^{(i)}_{+} \tau^{(j)}_{-}$.
As we apply $H_3$ for $k=|\mathcal{S}|= |\mathcal{T}|$ times, terms with only $\sigma^{(i)}_{+} \tau^{(j)}_{-}$ for $i \in \mathcal{S}$ and $j \in \mathcal{T}$ can contribute to this quantity.

In summary, we have
\begin{align}
    \braket{\mathbf{x}|H^k|\mathbf{y}_0} = \frac{1}{n^k} \sum_{p \in S_k} \sum_{p' \in S_k} \prod_{i \in p(\mathcal{S})} \prod_{j \in p'(\mathcal{T})} J_{ij} = \frac{k!}{n^k} \sum_{p \in S_k}\prod_{i=1}^k J_{\mathcal{S}[i], \mathcal{T}[p(i)]} = \frac{k!}{n^k} \mathrm{Per}(J_{\mathcal{ST}}). \label{eq:matrix_power_submatrix_per}
\end{align}
where $S_k$ is the symmetric group with $k$ elements and $p(\mathcal{S})$  for $p \in S_k$ is a permutation of $\mathcal{S}$ considering $\mathcal{S}$ as an ordered set. In addition, $J_{\mathcal{ST}}$ is a submatrix of $J$ only remaining rows in $\mathcal{S}$ and columns in $\mathcal{T}$.

We can extend this logic to $H=H_4$, which is defined as
\begin{align}
    H_4 &= \sum_{i,j = 1}^n \frac{J_{ij}}{n} \big[ \sigma^{(i)}_{+} \tau^{(j)}_{-} + \sigma^{(i)}_{-} \tau^{(j)}_{+} + \sigma^{(i)}_z \sigma^{(j)}_z \big].
\end{align}
The main difference to the previous Hamiltonian is that applying $H_4$ to $\ket{\mathbf{y}}$ for $\mathbf{y} \in X_{k}$ also generates terms in $X_k$, i.e.,
\begin{align}
    H_4 \ket{\mathbf{y}} = \sum_{\mathbf{x} \in X_{k+1}|\mathbf{y}}  a_\mathbf{x} \ket{\mathbf{x}} + \sum_{\mathbf{x} \in X_{k-1}|\mathbf{y}}  b_\mathbf{x} \ket{\mathbf{x}} + \sum_{\mathbf{x} \in X_{k}|\mathbf{y}}  c_\mathbf{x} \ket{\mathbf{x}}.
\end{align}
However, because the last term does not affect $\braket{\mathbf{x} | H_4^l | \mathbf{y}_0}$ for all $l < k$ when $\mathbf{x} \in X_k$, we still must apply $H_4$ at least $k$ times to obtain non-vanishing terms of $\mathbf{x} \in X_k$ from $\mathbf{y}_0 \in X_0$.

Similarly, we can compute $\braket{\mathbf{x} | H_4^k | \mathbf{y}_0}$.
By using the aforementioned argument, only coefficients for $\sigma^{(i)}_{+} \tau^{(j)}_{-}$ with $i \in \mathcal{S}, j \in \mathcal{T}$ contribute to this quantity. Therefore, we also find that
\begin{align}
    \braket{\mathbf{x} | H_4^k | \mathbf{y}_0} = \frac{k!}{n^k} \mathrm{Per}(J_{\mathcal{ST}}).
\end{align}

Let us now consider $H_1$, which can be reformulated as
\begin{align}
    H_1 = \sum_{i,j = 1}^n \frac{J_{ij}}{n} \sigma^{(i)}_{x} \tau^{(j)}_{x} = \sum_{i,j = 1}^n \frac{J_{ij}}{n} \Bigl[ \sigma^{(i)}_{+} \tau^{(j)}_{+} + \sigma^{(i)}_{+} \tau^{(j)}_{-} + \sigma^{(i)}_{-} \tau^{(j)}_{+} + \sigma^{(i)}_{-} \tau^{(j)}_{-} \Bigr].
\end{align}

Unlike previous Hamiltonians, applying $H_1$ to $\mathbf{y} \in X_{m}$ also creates terms not in $X:=\cup_i X_{i}$ but with the Hamming weights of $n+2$ and $n-2$.
Nevertheless, the overall results remain the same because these terms only go back to $X_m$ when another $H_1$ is applied. 
In other words, if we apply $H_1$ twice, we obtain chains, such as $X_k \rightarrow (n+2) \rightarrow X_k$ or $X_k \rightarrow (n+2) \rightarrow (n+4)$ where $(\cdot)$ indicates the Hamming weight of the resulting state.
Similarly, applying $H_1$ by $l$ times only creates terms in $X_{k-l}$ to $X_{k+l}$ and some other terms outside of $X$, i.e., 
\begin{align}
    H_1^l \ket{\mathbf{y}} = \sum_{\mathbf{x} \in X_{k-l} \cup \cdots \cup X_{k+l}|\mathbf{y}} a_{\mathbf{x}} \ket{\mathbf{x}} + \sum_{\mathbf{x} \notin \cup_{i} X_i} b_\mathbf{x} \ket{\mathbf{x}} ,
\end{align}
for $\mathbf{y} \in X_{k}$.
This implies $\braket{\mathbf{x} | H^l | \mathbf{y}_0}$ is still $0$ for all $l < k$ if $\mathbf{x} \in X_k$.

Again, for $\braket{\mathbf{x} | H_1^k | \mathbf{y}_0}$, only terms with $\sigma^{(i)}_+ \tau^{(i)}_{-}$ contribute to the result, which yields 
\begin{align}
    \braket{\mathbf{x} | H_1^k | \mathbf{y}_0} = \frac{k!}{n^k} \mathrm{Per}(J_{\mathcal{ST}}) ,
\end{align}
for $\mathbf{x} \in X_k$ as earlier examples. 
(See also Huh~\cite{Huh2022permanent}; the relation can be tested easily with symbolic computing packages such as Q3~\cite{Choi2022}, written in Wolfram language.)

Finally, we also obtain the same result for $H_2$, as $\sigma^{(i)}_z \tau^{(j)}_z$ terms cannot change any result for $\braket{\mathbf{x}|H^l|\mathbf{y}_0}$ with $l \leq k$ (when $\mathbf{x} \in X_k$).

To understand how the power of Hamiltonian is related to the permanent more easily, we provide an example of a puzzle involving two baskets $\sigma$ and $\tau$, with $n$ red and $n$ blue balls, each labeled from $1$ to $n$.
A constraint is that baskets $\sigma$ and $\tau$ can only contain red and blue balls, respectively.
There are four available operations:
\begin{itemize}
    \item A: Remove a blue ball $j$ from $\tau$ and add a red ball $i$ to $\sigma$ ($\sigma^{(i)}_{+} \tau^{(j)}_{-}$).
    \item B: Remove a red ball $i$ from $\sigma$ and add a blue ball $j$ to $\tau$ ($\sigma^{(i)}_{-} \tau^{(j)}_{+}$).
    \item C: Remove a red ball $i$ and a blue ball $j$ from each basket $\sigma$ and $\tau$ ($\sigma^{(i)}_{-} \tau^{(j)}_{-}$).
    \item D: Add a red ball $i$ and a blue ball $j$ to each basket $\sigma$ and $\tau$ ($\sigma^{(i)}_{+} \tau^{(j)}_{+}$).
\end{itemize}
We start from the initial state that no ball is in $\sigma$ and $n$ balls are in $\tau$.
If we can use exactly $k$ operations to ensure $k$ balls in $\sigma$ and $n-k$ balls in $\tau$, the only possible way is to apply A by $k$ times.
If we further require all balls with label $S$ be in $\sigma$ and all balls with $T$ be removed from $\tau$, the balls for operation A must be selected as $i \in \mathcal{S}$ for $\sigma$ and $j \in \mathcal{T}$ for $\tau$.
For each path from the initial state to the final state, the product of $J_{ij}$ for $(i,j)$ from each application of A contributes to $\braket{\mathbf{x} | H_1^k | \mathbf{y}_0}$.
As all permutations of $\mathcal{S}$ and $\mathcal{T}$ are allowed for each path, their summation is given by Eq.~\eqref{eq:matrix_power_submatrix_per}.

We remark that the same argument is also valid even when local fields in the $z$-direction are present, i.e., for
\begin{align}
    \tilde{H}_{1,2,3,4}=H_{1,2,3,4}+\sum_i h^{(1)}_i \sigma^{i}_z + \sum_j h^{(2)}_j \tau^{(j)}_z,
\end{align}
one can check that 
\begin{align}
    \braket{\mathbf{x} | \tilde{H}_{1,2,3,4}^l | \mathbf{y}_0} = 0 \text{ for all } l < k, \quad \braket{\mathbf{x} | \tilde{H}_{1,2,3,4}^k | \mathbf{y}_0} = \frac{k!}{n^k} \mathrm{Per}(J_{\mathcal{ST}}) ,
\end{align}
when $\mathbf{x} \in X_k$.
Thus, the results in the main text can also be applied to these Hamiltonians, implying that estimating output probabilities for these Hamiltonians is also classically hard (when $h_i^{(1)}=O(1)$ and $h_j^{(2)} = O(1)$).

\section{Complexity of the output distribution} \label{app:complexity_output_dist_proof}

This appendix details the proof of Theorem~\ref{thm:classical_hardness} discussed in the main text with additional exact hardness results.
We mainly consider $\mathbf{x}_0 = \mathbf{1}^{\sigma}\mathbf{0}^{\tau} \in X_{n}$ for hardness results, with some comments for general $\mathbf{x} \in X_m $ in Appendix~\ref{app:other_x_in_xm}.

\subsection{Polynomial expansion of $p(\mathbf{x}_0;J;t)$}
We prove several hardness results of estimating $p(\mathbf{x}_0;J;t)$ throughout this section.
Our main tool is a polynomial expansion of $p(t) = |\braket{\mathbf{x}_0| e^{-\complexi Ht}|\mathbf{y}_0}|^2$ (where we exclude $\mathbf{x}_0$ and $J$ from the arguments when they are obvious), which contains the permanent of $J$, i.e., $p(t)=\mathrm{Per}(J)^{2}t^{2n}/n^{2n}+\cdots$.
Let us define $f(t) = \braket{\mathbf{x}_0| e^{-\complexi Ht}|\mathbf{y}_0}$ the Taylor expansion of which is given by
\begin{align}
    f(t) = \sum_{k=0}^\infty a_k (-\complexi t)^k, \quad a_k = \frac{ \braket{\mathbf{x}_0| H^k|\mathbf{y}_0}}{k!} .
\end{align}

From the results in Appendix~\ref{app:permanent_hamiltonian_moments}, 
we know that $\braket{\mathbf{x}_0| H^k|\mathbf{y}_0} = 0$ for all $k < n$ and $\braket{\mathbf{x}_0| H^n|\mathbf{y}_0} = n! \mathrm{Per}(J)/n^n$ for any $H=H_{1,2,3,4}$ 
Therefore, the polynomial expansion of $p(t)=|f(t)|^2$ is given by
\begin{align}
    p(t) = \sum_{k=0}^{\infty} a_k t^k, \label{eq:p_t_expand}
\end{align}
where $a_k = 0$ for all $k < 2n$ and $a_{2n} = \mathrm{Per}(J)^2/n^{2n}$.
Thus, computing the coefficient for $t^{2n}$ of $p(t)$ (or equivalently, $p^{(2n)}(0)/(2n)!$ where $p^{(2n)}(t)$ is the $2n$-th derivative of $p(t)$) is classically difficult under Conjecture~\ref{conj:gaussian_permanent} introduced in the next subsection.

\subsection{Gaussian matrix permanent conjecture}
We introduce the Gaussian matrix permanent conjecture, which is central to our results.
\begin{conjecture}[Real-valued version of Theorem~7 in Ref.~\cite{Aaronson2011}]\label{conj:gaussian_permanent}
    For constants $\varepsilon,\delta \in (0, 1)$, estimating $\mathrm{Per}(J)^2$ within an additive error of $\varepsilon n!$ with a probability of at least $1-\delta$ over $J \sim \mathcal{N}(0,1)^{n \times n}$ is \emph{\textsf{\#P-Hard}}.
\end{conjecture}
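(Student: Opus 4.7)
The plan is to attempt a worst-case-to-average-case reduction in the style of Lipton's classical argument for the permanent over finite fields, adapted to real Gaussian matrices and to the squared version of the permanent. First I would invoke Valiant's theorem that computing $\mathrm{Per}(A)$ for arbitrary integer (indeed $0/1$) matrices is $\mathsf{\#P}$-hard; passing to $\mathrm{Per}(A)^2$ is essentially free, since knowing $\mathrm{Per}(A)^2$ exactly over $\mathbb{Z}$ determines $|\mathrm{Per}(A)|$ and the sign can be recovered by one additional query on a sign-fixing block extension of $A$.

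The core step would be polynomial interpolation. Given a worst-case matrix $A\in\mathbb{R}^{n\times n}$ and an oracle $\mathcal{O}$ that, on input $M\sim\mathcal{N}(0,1)^{n\times n}$, returns an estimate of $\mathrm{Per}(M)^2$ with additive error at most $\varepsilon n!$ and failure probability at most $\delta$, I would form the one-parameter family
\begin{align}
M(t) = (1-t)A + tX, \qquad X\sim\mathcal{N}(0,1)^{n\times n},
\end{align}
so that $q(t):=\mathrm{Per}(M(t))^2$ is a polynomial in $t$ of degree $2n$. Evaluating $q$ at a collection of points $t_1,\ldots,t_{2n+1}$ close to $t=1$ (where the marginal of $M(t)$ is close in total variation to the Gaussian reference measure used by $\mathcal{O}$) and feeding these matrices to $\mathcal{O}$ would give noisy values of $q(t_i)$, from which a robust polynomial regression scheme—such as the Berlekamp--Welch-style reconstructors used in Bouland--Fefferman--Nirkhe--Vazirani, Movassagh, Krovi, and the noisy-interpolation machinery already invoked in the paper—would recover $q(0)=\mathrm{Per}(A)^2$ exactly. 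Combining this with Valiant would give the desired $\mathsf{\#P}$-hardness.

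The hard part, and the reason this statement has remained a conjecture since Aaronson--Arkhipov, is a tension between two contradictory requirements. For $\mathcal{O}$ to apply at $M(t_i)$, the law of $M(t_i)$ must be close in total variation to $\mathcal{N}(0,1)^{n\times n}$; since the shift is $(1-t)A$ and the variance is $t^2$, this forces $t_i$ to be exponentially near $1$. But the numerical conditioning of polynomial interpolation of a degree-$2n$ polynomial from samples clustered in a window of width $w$ around a single point degrades like $w^{-O(n)}$, which amplifies the oracle's $\varepsilon n!$ error into something far larger than $n!$ long before $t=0$. Any honest attempt at the conjecture must break this impasse, either by exploiting additional algebraic structure of $\mathrm{Per}(\cdot)^2$ as a Gaussian random polynomial (for example, anticoncentration or a moment-matching identity that enables a direct random self-reduction rather than a parametric path), or by finding an interpolation path $M(t)$ whose marginal is \emph{exactly} Gaussian for every $t\in[0,1]$—an analogue of the uniform-measure invariance that makes Lipton's reduction work over $\mathbb{F}_p$ but has no obvious Gaussian counterpart.

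Accordingly, my proposal is to attempt two sub-routes in parallel. Route (a) would try to replace the affine interpolation line by a measure-preserving flow: find a family $M(t)$ such that $M(t)\stackrel{d}{=}\mathcal{N}(0,1)^{n\times n}$ for all $t$, while $\mathrm{Per}(M(t))^2$ remains a low-degree polynomial in $t$ and interpolates between a worst-case target and a generic Gaussian; the Ornstein--Uhlenbeck semigroup is a natural candidate but unfortunately destroys the polynomial structure. Route (b) would weaken the target and prove a conditional version: assuming either a quantitative anticoncentration bound for $\mathrm{Per}(X)^2$ over Gaussians (which the paper invokes elsewhere) or a nearly-optimal robust-interpolation theorem on shrinking intervals, show that $\mathsf{\#P}$-hardness follows for all sufficiently small $\varepsilon,\delta$. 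I would expect the primary deliverable to be Route (b), sharpening the known reductions rather than fully resolving the conjecture, with Route (a) flagged as the essential missing ingredient for an unconditional proof.
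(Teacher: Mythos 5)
Your proposal matches the paper's treatment of this statement: it is left as a conjecture there as well, and the supporting evidence the paper supplies is exactly your Route (b) — Appendix~E proves the weakened, exponentially-small-error version (estimating $\mathrm{Per}(J)^2$ to additive error $n^{-4n(1+\delta)}$ with probability $3/4$ over Gaussian $J$ implies $\mathsf{FBPP}=\mathsf{\#P}$) via Valiant's worst-case hardness, the affine path $tX+(1-t)Y$ with a total-variation bound to the Gaussian measure, and robust polynomial regression, while the anticoncentration-conditional route you mention is the cited Aaronson--Arkhipov equivalence between additive $\varepsilon n!$ and multiplicative error under the permanent anticoncentration conjecture. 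One small correction: the total-variation constraint only forces a polynomially small interpolation window (the paper takes $\Delta=(16n)^{-2}$ since the distance scales like $n\sqrt{t}$), not points exponentially close to the Gaussian endpoint, but the resulting $n^{\Theta(n)}$ error amplification still blocks the constant-$\varepsilon$ statement, exactly as you identify.
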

Aaronson and Arkhipov~\cite{Aaronson2011} provided two main arguments why Conjecture~\ref{conj:gaussian_permanent} is plausible. First, they showed that estimating $\mathrm{Per}(J)^2$ within an additive error of $\varepsilon n!$ is polynomial-time equivalent to estimating $\mathrm{Per}(J)^{2}$ within a multiplicative error of $\varepsilon$ under the permanent anticoncentration conjecture (PAC).
Since estimating the permanent within a small multiplicative error is widely believed to be \textsf{\#P-hard}, and there are multiple pieces of evidence that the PAC is true, estimating the permanent within an additive error of $\varepsilon n!$ is also likely to be \textsf{\#P-hard}.
Second, they proved that estimating the permanent of Gaussian matrices within an additive error of $2^{-\mathrm{poly}(n)}$ with probability $3/4 + 1/\mathrm{poly}(n)$ 
is indeed a \textsf{\#P-hard} problem. See also Appendix~\ref{app:approximate_gaussian_permanent} for further discussion.

\subsection{Hardness of exactly computing the output distribution for exponentially short time}
Conjecture~\ref{conj:gaussian_permanent} readily implies that the exact evaluation of $p(t)$ at $t \leq n^{-cn}$ with any $c > 1/2$ is classically difficult. 
Herein, we prove this fact.

Let us recall the polynomial expansion of $f(t):=\braket{\mathbf{x}_0 | e^{-\complexi Ht} | \mathbf{y}_0}$.
By Taylor's theorem, there exists $0 < \tau_0 < t$ such that
\begin{align}
    f(t) = \frac{\mathrm{Per}(J)}{n^n} (-\complexi t)^{n} + \frac{f^{(n+1)}(\tau_0)}{(n+1)!} t^{n+1},
\end{align}
where
\begin{align}
    f^{(n+1)}(\tau_0) = (-\complexi)^{n+1} \braket{\mathbf{x}_0 | H^{n+1} e^{-\complexi H\tau_0}| \mathbf{y}_0}.
\end{align}

Therefore, $p(t) = |f(t)|^2$ becomes
\begin{align}
    p(t) = \frac{t^{2n}}{n^{2n}}\Bigl| \mathrm{Per}(J) + \xi(t) \Bigr|^2
\end{align}
where
\begin{align}
    \xi(t) = -\complexi\frac{n^n \braket{\mathbf{x}_0 | H^{n+1} e^{-\complexi H\tau_0}| \mathbf{y}_0}}{(n+1)!} t,
\end{align}
which results in
\begin{align}
    |\xi(t)| \leq \frac{\Vert H \Vert^{n+1} n^n}{(n+1)!} t.
\end{align}

Thus, the exact computation of $p(t)$ implies that we can compute
\begin{align}
    \mathrm{Per}(J)^2 + 2\mathcal{R}[\xi(t)] \mathrm{Per}(J) + |\xi(t)|^2,
\end{align}
where $\mathcal{R}[\xi(t)]$ is the real part of $\xi(t)$.
This implies that we can estimate $\mathrm{Per}(J)^2$ within an additive error of $\epsilon n!$ when $|\xi(t)| = o(\sqrt{n!})$ and $|\mathrm{Per}(J)| =O(\sqrt{n!})$.

From Chebyshev's inequality, we obtain
\begin{align}
    \underset{J \sim \mathcal{N}(0,1)^{n \times n}}{\mathrm{Pr}}\Bigl[ |\mathrm{Per}(J)| < k \sqrt{n!} \Bigr] \geq 1 - \frac{1}{k^2},
\end{align}
given that the variance of $\mathrm{Per}(J)$ for Gaussian random matrices is $n!$~\cite{Aaronson2011}.
Therefore, by choosing, e.g., $k=10$, we can estimate the permanent within $o(n!)$ with probability $\geq 0.99$ over random Gaussian matrices $J$ when $|\xi(t)|^{2} = o(n!)$, which is a \textsf{\#P-hard} problem by Conjecture~\ref{conj:gaussian_permanent}.

By Stirling's approximation, for $\Vert H \Vert \leq 3n$, we obtain
\begin{align}
    |\xi(t)| \leq (3 e)^{n+1} n^n t,
\end{align}
which becomes $o(\sqrt{n!})$ when $t = n^{-c n}$ with $c > 1/2$. 
As $\Vert H \Vert \leq 3n$ is satisfied for most of $H$ (see Lemma~\ref{lm:bounding_prob_sum_of_gaussian} in Appendix~\ref{app:proof_details}), it is \textsf{\#P-hard} to exactly compute $p(t)$ for small $t$.

We note that a similar result was obtained for the Ising model in Ref.~\cite{Fefferman2017}.
However, the exact computation considered in this section and Ref.~\cite{Fefferman2017} is impossible for any realistic computational model as a quantity is subject to an error introduced by finite-precision arithmetic~\cite{Minsky1967Computation}.
Moreover, if we consider sampling tasks (see Sec.~\ref{sec:hardness_sampling}), $t$ must be sufficiently large (at least $\Omega(\log n)$) to make the output distribution anticoncentrate.
Thus, we must consider how the error introduced in the estimation of $p(t)$ for larger $t$ affects our estimation of $\mathrm{Per}(J)^2$.

\subsection{Average-case hardness of exactly computing the truncated output distributions}
We now consider the average-case hardness with respect to $t$.
Based on the findings in Ref.~\cite{bouland2019complexity}, we consider a classical oracle that exactly computes a truncated approximation of $p(t) = |\braket{\mathbf{x}_0 | e^{-\complexi Ht} | \mathbf{y}_0}|^2$.
In other words, an oracle returns an exact value of $p_K(t)$ given as 
\begin{align}
    p_K(t) = \sum_{k=0}^{K} \frac{p^{(k)}(0)}{k!} t^k = \frac{\mathrm{Per}(J)^2}{n^{2n}} t^{2n} + \cdots + \frac{p^{(K)}(0)}{K!}t^K ,
\end{align}
with $K \geq 2n$. Here, $p^{(K)}(t)$ is the $K$-th derivative of $p(t)$.
Then, we can prove the hardness of exactly computing $p_K(t)$ by using the Berlekamp--Welch algorithm.
\begin{theorem}[Berlekamp--Welch algorithm~\cite{berlekamp_welch}]
   Let $q(t)$ be a degree $d$ polynomial. Suppose a dataset $\{t_i, y_i\}_{i=1}^L$ with all distinct $\{t_i\}$ with the promise that $y_i = q(t_i)$ at least $\min(d+1, (L+d)/2)$ points. Then, $q(t)$ can be exactly recovered in $\mathrm{poly}(L,d)$ deterministic time.
\end{theorem}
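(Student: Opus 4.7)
The plan is to follow the classical Berlekamp--Welch strategy, which linearizes the otherwise nonlinear task of simultaneously locating the corrupted entries and interpolating the underlying polynomial by introducing an auxiliary \emph{error locator polynomial}. Let $e$ denote the number of indices $i$ with $y_i \neq q(t_i)$; the agreement hypothesis is tuned so that $d + 2e < L$. I would search, via Gaussian elimination, for any pair of polynomials $(N, E)$ with $E$ monic, $\deg E \leq e$, and $\deg N \leq d + e$ satisfying the $L$ linear constraints
\begin{equation*}
N(t_i) = E(t_i)\, y_i, \qquad i = 1, \dots, L.
\end{equation*}
This is a linear system in $O(L + d)$ unknown coefficients and can be solved deterministically in $\mathrm{poly}(L, d)$ time.

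First, I would verify that the system is feasible. The true error locator $E^{*}(t) := \prod_{i:\, y_i \neq q(t_i)} (t - t_i)$ is monic of degree $e$, and $N^{*}(t) := E^{*}(t)\, q(t)$ has degree $d + e$. At every good index, both sides of the constraint evaluate to $E^{*}(t_i)\, q(t_i)$; at every bad index, both sides vanish because $E^{*}(t_i) = 0$. Hence $(N^{*}, E^{*})$ is an explicit solution and the system is non-empty.

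Second, I would show that the rational quotient $N/E$ produced by \emph{any} valid solution coincides with $q$, so that polynomial long division recovers $q$ regardless of which particular $(N, E)$ Gaussian elimination returns. Given two solutions $(N_1, E_1)$ and $(N_2, E_2)$, define the residual $R(t) := N_1(t)\, E_2(t) - N_2(t)\, E_1(t)$, whose degree is at most $d + 2e < L$. The constraints force $R(t_i) = E_1(t_i) E_2(t_i) y_i - E_2(t_i) E_1(t_i) y_i = 0$ for all $L$ distinct $t_i$, so $R$ has strictly more zeros than its degree allows; hence $R \equiv 0$ and $N_1/E_1 = N_2/E_2 = q$. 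Running polynomial long division on the Gaussian-elimination output then reads off the $d+1$ coefficients of $q$.

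The principal subtlety, and the only place where the hypothesis is used essentially, is the degree bookkeeping that yields the strict inequality $d + 2e < L$; this is exactly the unique-decoding radius $e < (L-d)/2$ for Reed--Solomon codes, which the agreement condition $\min(d+1, (L+d)/2)$ is designed to guarantee. Beyond this radius the residual argument fails, unique deterministic recovery is no longer possible, and one would have to resort to list-decoding machinery instead of a single Gaussian-elimination-and-divide pipeline.
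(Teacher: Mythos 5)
The paper does not prove this statement at all --- it is quoted as a black-box result with a citation to Berlekamp--Welch --- so there is no in-paper argument to compare against. Your proof is the standard error-locator argument (feasibility via $E^{*}=\prod_{\text{bad }i}(t-t_i)$, $N^{*}=E^{*}q$; uniqueness of the ratio via the residual $R=N_1E_2-N_2E_1$ of degree at most $d+2e<L$ vanishing at $L$ distinct points; then polynomial division), and it is sound in substance. Two caveats are worth tightening. First, your algorithm fixes the degree bounds using $e$, the \emph{actual} number of corrupted points, which the decoder does not know; the standard fix is to run the linear system with the maximal correctable value $e=\lfloor (L-d-1)/2\rfloor$ (or to try $e=0,1,\dots$ in turn), and your feasibility and uniqueness arguments go through unchanged after padding $E^{*}$ to the required degree. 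Second, your closing claim that the paper's agreement condition guarantees $d+2e<L$ is not literally true: as printed, $\min(d+1,(L+d)/2)$ should be $\max(d+1,(L+d)/2)$ (with the $\min$, $d+1$ correct points out of a large $L$ clearly cannot determine $q$ uniquely), and even with $\max$ the boundary case of agreement exactly $(L+d)/2$ gives only $d+2e\le L$, where unique decoding can genuinely fail (e.g.\ $d=0$, $L=2$, $y_1\neq y_2$); one needs agreement strictly above $(L+d)/2$, i.e.\ $e<(L-d)/2$, which is how the hypothesis is used in the paper's application (where all $L=d+1$ points are correct with high probability, or at least $(L+d)/2$ points with a Chernoff margin). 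These are defects inherited from the cited statement rather than from your argument, but a complete write-up should note them.
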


Now, suppose that there exists an oracle that exactly computes $p_K(t)$ over $t \in [a,b]$ for any $b-a > 0$ with probability $>1/2 + 1/\mathrm{poly}(n)$.
By the Chernoff bound, we can find $L=\mathrm{poly}(n)$ such that at least $(L+d)/2$ points are correct with a success probability of $1-e^{-\Theta(n)}$.
Consequently, we can recover $p_K(t)$ by the theorem.

Given that we recovered $p_K(t)$ exactly, all coefficients of $p_K(t)$ are also known (in fact, the Berlekamp-Welch algorithm returns all the coefficients).
By extracting the coefficient for $t^{2n}$, we can solve the square of the permanent of $J$, which is a \textsf{\#P-hard} problem by Conjecture~\ref{conj:gaussian_permanent}.
This result can be summarized by the following theorem.

\begin{theorem}\label{thm:output_prob_exact_hardness}
    For any $K \geq 2n$, it is \emph{\textsf{\#P-hard}} to exactly compute $p_K(t)$ with probability $1/2 + 1/\mathrm{poly}(n)$ over choices of $t \in [a, b]$ with $a<b$.
\end{theorem}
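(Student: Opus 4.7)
The plan is to reduce the exact evaluation of $\mathrm{Per}(J)^2$ for a Gaussian matrix $J$ to exact evaluations of $p_K(t)$, exploiting the fact established in Appendix~\ref{app:permanent_hamiltonian_moments} that $p_K(t)$ is a univariate polynomial in $t$ of degree $K$ whose $t^{2n}$ coefficient equals $\mathrm{Per}(J)^2/n^{2n}$. Under Conjecture~\ref{conj:gaussian_permanent}, recovering this coefficient is \textsf{\#P-hard}, so converting the hypothetical oracle into an $\mathsf{FBPP}$ machine for $\mathrm{Per}(J)^2$ is sufficient to conclude.

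First, I would sample $L=\mathrm{poly}(n)$ pairwise distinct points $t_1,\ldots,t_L$ independently and uniformly from $[a,b]$ and query the oracle at each. Since the oracle agrees with $p_K$ on a fraction at least $1/2 + 1/\mathrm{poly}(n)$ of $[a,b]$, each query returns the correct value $y_i = p_K(t_i)$ independently with the same probability. A standard Chernoff bound then implies that, with probability $1-e^{-\Theta(n)}$, at least $(L+K)/2$ of the returned values are correct, provided $L$ is chosen as a sufficiently large polynomial in $n$; the inverse-polynomial advantage over $1/2$ is enough to power this concentration with only polynomially many samples.

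Second, I would feed $\{(t_i,y_i)\}_{i=1}^L$ to the Berlekamp--Welch algorithm with target degree $K$. Since at least $(L+K)/2$ of the points are correct, the algorithm deterministically recovers the polynomial $p_K(t)$ in $\mathrm{poly}(L,K) = \mathrm{poly}(n)$ time, yielding all of its coefficients. Reading off the coefficient of $t^{2n}$ and multiplying by $n^{2n}$ produces $\mathrm{Per}(J)^2$ exactly, which in particular estimates it to any additive accuracy $\varepsilon n!$. This contradicts Conjecture~\ref{conj:gaussian_permanent} and establishes the claimed \textsf{\#P-hardness}.

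The only technical point to check is robustness of the sampling step: over a continuous interval the $t_i$ are automatically distinct with probability one, and the independence of query outcomes follows because the oracle's correctness set is a fixed (though unknown) subset of $[a,b]$ against which each uniform sample is tested independently. I do not anticipate a serious obstacle beyond bookkeeping, since the polynomial structure of $p_K$ and the inverse-polynomial advantage of the oracle match the hypotheses of Berlekamp--Welch plus Chernoff cleanly; the only delicate choice is to pick $L$ large enough relative to $K$ and the advantage $1/\mathrm{poly}(n)$ so that the concentration threshold $(L+K)/2$ is met.
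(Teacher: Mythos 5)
Your proposal is correct and follows essentially the same route as the paper: sample polynomially many points in $[a,b]$, use a Chernoff bound to guarantee that at least $(L+K)/2$ oracle answers are exact, recover $p_K(t)$ via the Berlekamp--Welch algorithm, and extract the $t^{2n}$ coefficient $\mathrm{Per}(J)^2/n^{2n}$, which is \textsf{\#P-hard} to compute under Conjecture~\ref{conj:gaussian_permanent}. The only difference is that you spell out the distinctness and independence of the sampled points, which the paper leaves implicit.
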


\subsection{Approximate hardness from robust polynomial regression}
Theorem~\ref{thm:output_prob_exact_hardness} only proves the average-case hardness for an \textit{exact} oracle returning $p_K(t)$.
Given that our function $p(t)$ is not an exact polynomial function, our goal is to find a polynomial that approximates $p(t)$ assuming that data points $\{t_i, p(t_i)\}$ can be obtained from the oracle.
An approach suggested in Refs.~\cite{Aaronson2011,bouland2019complexity} is to combine Paturi's Lemma~\cite{paturi1992degree} and Rakhmanov's result~\cite{rakhmanov2007bounds}.
In this approach, the error from the estimation of $p(t)$ obtains an additional factor, which is exponentially large in the degree of the polynomial. It can be compensated by requiring exponentially small estimation errors.

Parturi's lemma and Rakhmanov's results are presented as follows.
\begin{lemma}[Parturi~\cite{paturi1992degree}]
    Let $q(t)$ be a polynomial of degree $d$, and suppose $|q(t)| \leq \delta$ for all $|t| \leq \Delta$. Then $q(1) \leq \delta \exp(2d (1 + \Delta^{-1}))$.
\end{lemma}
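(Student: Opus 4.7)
The plan is to reduce Paturi's statement to the classical Chebyshev extremal inequality. The case $\Delta \geq 1$ is trivial since $1 \in [-\Delta,\Delta]$ already gives $|q(1)| \leq \delta$, so I would focus on $0 < \Delta < 1$. First, I would rescale by setting $\tilde{q}(x) := q(\Delta x)$, producing a polynomial of degree $d$ with $|\tilde{q}(x)| \leq \delta$ on $[-1,1]$ and $\tilde{q}(1/\Delta) = q(1)$. This reduces the problem to bounding a degree-$d$ polynomial at a point $y = 1/\Delta > 1$, given a uniform bound on $[-1,1]$.

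The key ingredient I would invoke is the Chebyshev extremal bound: among all degree-$d$ polynomials uniformly bounded by $\delta$ on $[-1,1]$, the $d$-th Chebyshev polynomial of the first kind $T_d$ grows the fastest outside this interval, so $|\tilde{q}(y)| \leq \delta\,|T_d(y)|$ for $|y|\geq 1$. A short proof by contradiction suffices: if the inequality failed at some $y_0$, then a suitable linear combination $\tilde{q} - \lambda T_d$ would exhibit $d+1$ sign changes at the Chebyshev alternation points in $[-1,1]$, contradicting its degree being at most $d$.

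The final step is to estimate $T_d(1/\Delta)$. Using the hyperbolic parameterization $T_d(\cosh\theta) = \cosh(d\theta)$, setting $1/\Delta = \cosh\theta_0$ gives $T_d(1/\Delta) \leq e^{d\theta_0}$; substituting $\theta_0 = \ln\bigl(1/\Delta + \sqrt{1/\Delta^2 - 1}\bigr)$ and applying the elementary bound $\ln(2/\Delta) \leq 2(1 + 1/\Delta)$ on $(0,1]$ yields
\begin{equation*}
T_d(1/\Delta) \leq (2/\Delta)^d = \exp\bigl(d\ln(2/\Delta)\bigr) \leq \exp\bigl(2d(1 + 1/\Delta)\bigr) .
\end{equation*}
Combining this with the rescaling step delivers the claimed bound on $|q(1)|$.

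The main obstacle is not conceptual but quantitative: the sharp Chebyshev-based estimate is really $\delta(2/\Delta)^d$, noticeably tighter than the exponential form stated in the lemma. Recovering precisely the constant $2d(1 + 1/\Delta)$ is purely a matter of applying the slack inequality $\ln(2/\Delta) \leq 2(1 + 1/\Delta)$, so no serious work is hidden here; the only other bookkeeping is the degenerate case $\Delta \geq 1$, which is immediate.
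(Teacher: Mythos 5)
Your proposal is correct: the rescaling to $[-1,1]$, the Chebyshev comparison theorem $|\tilde{q}(y)|\leq\delta\,|T_d(y)|$ for $|y|\geq 1$, and the growth estimate $T_d(1/\Delta)\leq(2/\Delta)^d\leq\exp\bigl(2d(1+\Delta^{-1})\bigr)$ together give exactly the stated bound, and the degenerate case $\Delta\geq 1$ is handled correctly. Note that the paper does not prove this lemma at all --- it is imported verbatim from Paturi's work --- and your argument is precisely the standard Chebyshev-extremal proof underlying that cited result, so there is nothing to reconcile beyond the (correct) observation that the sharp bound $\delta(2/\Delta)^d$ is stronger than the exponential form quoted in the lemma.
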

\begin{lemma}[Simplified version of Rakhmanov~\cite{rakhmanov2007bounds}]
    Let $q(t)$ be a polynomial of degree $d$, and suppose that $|q(t_i)| \leq \delta$ for equidistant points $\{t_i\}_{i=1}^L$ in $[-1, 1]$ with $L > d$. Then $|q(t)| \leq O(\epsilon)$ for all $|t| \leq 1$.
\end{lemma}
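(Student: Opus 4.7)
The plan is to prove the simplified Rakhmanov-type lemma by combining Markov's inequality for polynomial derivatives with a pigeonhole argument on the equidistant grid. The intuition is that a polynomial of degree $d$ cannot vary too rapidly on $[-1,1]$, so if $|q|$ were large at some point, it would remain nearly as large throughout a small neighborhood around that point, and the equidistant grid must already hit that neighborhood — contradicting $|q(t_i)| \leq \delta$ at every node.

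First, set $M := \Vert q \Vert_{L^\infty([-1,1])}$ and fix $t^{*} \in [-1,1]$ with $|q(t^{*})| = M$. Markov's inequality for polynomials on $[-1,1]$ gives $\Vert q' \Vert_{L^\infty([-1,1])} \leq d^{2} M$, so by the fundamental theorem of calculus, for any $t$ with $|t - t^{*}| \leq \eta$,
\[
    |q(t)| \geq M - \eta \, \Vert q' \Vert_{\infty} \geq M \bigl(1 - \eta d^{2}\bigr).
\]
Choosing $\eta = 1/(2 d^{2})$ forces $|q(t)| \geq M/2$ throughout the subinterval of radius $1/(2 d^{2})$ about $t^{*}$.

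Next I would verify that the equispaced grid is dense enough to intersect this subinterval. With $t_1=-1$ and $t_L=1$, the spacing is $2/(L-1)$, so every $t \in [-1,1]$ lies within distance $1/(L-1)$ of some $t_i$. Thus whenever $L - 1 \geq 2 d^{2}$, there exists an index $i$ with $|t_i - t^{*}| \leq 1/(2d^{2})$, whence $|q(t_i)| \geq M/2$. Combining with the hypothesis $|q(t_i)| \leq \delta$ yields $M \leq 2 \delta$, which is the claimed bound with an absolute constant in the $O(\cdot)$.

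The main obstacle, and the source of looseness, is the gap between the sampling rate $L = \Omega(d^{2})$ that this Markov-based argument actually requires and the essentially linear scaling $L = \Omega(d)$ that the verbatim statement ``$L > d$'' suggests (the simplified statement is a bit informal here and should be read as $L \gg d$ with an implicit constant hidden in the $O(\cdot)$). Closing this gap is nontrivial because equidistant polynomial interpolation is known to be ill-conditioned (Runge phenomenon), so one must avoid any appeal to the full Lebesgue constant. The sharper route, carried out in Rakhmanov's original paper, replaces Markov's inequality by Bernstein's interior bound $|q'(t)| \leq d\,\Vert q\Vert_{\infty}/\sqrt{1-t^{2}}$ — which is sharper by a factor of $d$ away from the endpoints — and handles the endpoint region separately via a Chebyshev-node change of variables and weighted potential-theoretic extremal polynomial estimates. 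For the downstream application in this paper the crude $L = \Omega(d^{2})$ version stated above is entirely sufficient, since the calling reduction takes both $L$ and $d$ polynomial in $n$.
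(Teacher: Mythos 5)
You should first note that the paper never proves this lemma at all: it is imported as a black box (in ``simplified'' form, with $\epsilon$ evidently a typo for $\delta$) from Rakhmanov's paper and is only used inside the proof of Lemma~\ref{lm:p_poly_coefficient_bound}, so your Markov-plus-pigeonhole argument is by construction a different, self-contained route rather than a reconstruction of anything in the paper. Within its own terms your argument is sound: Markov's inequality gives $\Vert q'\Vert_\infty \leq d^2 M$, so $|q| \geq M/2$ on a neighborhood of radius $1/(2d^2)$ of the maximizer, and a grid of mesh $2/(L-1)$ hits that neighborhood once $L-1 \geq 2d^2$, forcing $M \leq 2\delta$. Your suspicion about the literal statement is also well founded, and in fact stronger than you put it: with $L$ only slightly above $d$ (e.g.\ $L=d+1$), a degree-$d$ polynomial bounded by $\delta$ at equispaced nodes can reach size $\delta\, 2^{\Theta(d)}$ between nodes (equispaced Lebesgue constants; the Coppersmith--Rivlin growth $\exp\bigl(\Theta(d^2/L)\bigr)$ is tight), so a uniform $O(\delta)$ bound on all of $[-1,1]$ genuinely requires $L=\Omega(d^2)$, and Rakhmanov's actual theorem with $d/L$ a fixed ratio only gives a constant bound on an interior subinterval, with exponential growth permitted near the endpoints. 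So the quantitative weakening in your version is not a looseness of your method; it is forced.

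The genuine gap is your closing claim that the $L=\Omega(d^2)$ version is ``entirely sufficient'' for the downstream application. The paper invokes the lemma in Lemma~\ref{lm:p_poly_coefficient_bound} with exactly $L=d+1$ equidistant points, and in Theorem~\ref{thm:fp_algorithm} the oracle is only assumed to succeed with probability $\geq 1-1/n^2$ over $t$, with $d=K=\Theta(n\log n)$; the union bound over oracle calls, which costs $(K+1)/n^2 = o(1)$ for $d+1$ points, costs $\Theta(K^2/n^2)=\Theta(\log^2 n)$ for your $\Omega(d^2)$ points and therefore fails. Substituting your lemma is workable but not free: one must strengthen the oracle hypothesis (success probability $1-o(K^{-2})$ over $t$, or equivalent re-parameterization), i.e.\ modify the statement of Theorem~\ref{thm:fp_algorithm}, even though the error-magnification factor $\delta\, t_0^{-k}(4/\Delta)^d {d \choose k}$ is untouched since it depends only on the degree. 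A complete write-up should either carry out that bookkeeping explicitly or restrict the claim of sufficiency accordingly; as it stands, the proposal proves a correct but weaker lemma and overstates how directly it slots into the paper's argument.
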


By combining those lemmas, it is possible to prove the average-case approximate hardness of estimating the permanent of Gaussian random matrices~\cite{Aaronson2011} (albeit the same technique does not work for RCS; see Appendix~\ref{app:difficulty_fp_algo_rakh}).

This type of technique, which provides error bounds between a polynomial we want to know (ground truth) and the polynomial reconstructed from noisy samples, is called \textit{robust polynomial regression}.
By utilizing recent progress in this field~\cite{movassagh2023hardness,sherstov2012making,kane2017robust,bouland2022noise,krovi2022average,kondo2022quantum}, we can prove a similar result for estimating high-order derivatives as follows.

\begin{lemma}\label{lm:p_poly_coefficient_bound}
    Let $q(t)$ be a degree-$d$ polynomial, and suppose that a dataset $\{(t_i, y_i)\}_{i=1}^L$ is given, in which $\{t_i\}$ are $L=d+1$ equidistant points in $[t_0(1-\Delta), t_0(1+\Delta)]$ and $|y_i-q(t_i)| \leq \delta$ for all $i$, with  
    $0<\Delta<1$.
    Then, the coefficient of $q(t)$ for $t^k$ can be estimated in $\mathrm{poly}(d)$ time deterministically within an error of $O[\delta t_0^{-k} (4/\Delta)^{d} {d \choose k}]$.
\end{lemma}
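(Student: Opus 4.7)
The plan is to take as estimator $\hat{c}_k$ the $k$-th coefficient of the degree-$d$ Lagrange interpolation polynomial through the noisy data, and to bound $|\hat{c}_k - c_k|$ in three steps: rescale to the canonical interval $[-1,1]$, control the interpolation error pointwise on $[-1,1]$ via the Lebesgue constant for equidistant nodes, and then convert this pointwise bound into a monomial-coefficient bound via Chebyshev extremality, before translating back to the $t$-variable by binomial expansion around $t_0$.

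First I would substitute $u = (t-t_0)/(t_0\Delta)$, sending the data nodes $t_i$ to $d+1$ equidistant points $u_i \in [-1,1]$, and set $\hat{q}(u) := q(t_0(1+\Delta u))$, still of degree $\leq d$. Writing $\tilde{Q}(u) := \sum_{i=1}^{d+1} y_i L_i(u)$ for the Lagrange interpolant in the $u$-variable and noting that $\hat{q}$ equals its own Lagrange interpolant $\sum_i \hat{q}(u_i) L_i(u)$, the error polynomial
\begin{align*}
e(u) \;:=\; \tilde{Q}(u) - \hat{q}(u) \;=\; \sum_{i=1}^{d+1} (y_i - \hat{q}(u_i))\, L_i(u)
\end{align*}
has degree $\leq d$ and obeys $\|e\|_{\infty,[-1,1]} \leq \delta\,\Lambda_d$, where $\Lambda_d = \max_{u\in[-1,1]} \sum_i |L_i(u)| \leq 2^{d+1}$ is the Lebesgue constant for $d+1$ equidistant nodes.

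Next I would upgrade this pointwise bound to a monomial-coefficient bound by expanding $e$ in the Chebyshev basis, $e(u)=\sum_{j=0}^d \gamma_j T_j(u)$ with $|\gamma_j| \leq 2\|e\|_\infty$, and using Chebyshev extremality to get $|[e]_m| \leq O(2^d)\,\|e\|_\infty = O(\delta\cdot 4^d)$ for every $m\in\{0,\dots,d\}$. Translating back to the $t$-variable via the binomial identity, with $u = (t/t_0 - 1)/\Delta$,
\begin{align*}
c_k \;=\; t_0^{-k} \sum_{m=k}^d [\hat{q}]_m \binom{m}{k} (-1)^{m-k} \Delta^{-m},
\end{align*}
and defining $\hat{c}_k$ by the same formula with $[\tilde{Q}]_m$ in place of $[\hat{q}]_m$, I obtain
\begin{align*}
|\hat{c}_k - c_k| \;\leq\; t_0^{-k} \sum_{m=k}^d |[e]_m|\binom{m}{k}\Delta^{-m} \;\leq\; O(\delta\cdot 4^d)\cdot t_0^{-k}\binom{d}{k}\sum_{m=k}^d \Delta^{-m},
\end{align*}
using $\binom{m}{k} \leq \binom{d}{k}$. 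The geometric sum equals $\Delta^{-d}(1-\Delta^{d-k+1})/(1-\Delta) = O(\Delta^{-d})$ for $\Delta$ bounded away from $1$, yielding the claimed $O[\delta\, t_0^{-k}\,(4/\Delta)^d\binom{d}{k}]$. Since $\hat{c}_k$ is a fixed linear functional of the data whose weights can be precomputed (e.g.\ by inverting a Vandermonde system), the procedure runs in $\mathrm{poly}(d)$ deterministic time.

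The main obstacle is getting the combinatorial factor $\binom{d}{k}$ right rather than a crude $(C/\Delta)^d$ with no binomial. A naive strategy that controls only $\|e\|_\infty$ on $[-1,1]$ and then extrapolates to $t=0$ via Paturi-type inequalities produces a factor of the form $(C/\Delta)^d$ but conflates "pointwise evaluation at $t=0$" with "extraction of the $k$-th coefficient", which is wasteful for $k$ close to $d$. The $\binom{d}{k}$ emerges only by tracking the $t \leftrightarrow u$ binomial expansion term-by-term, and to make that bookkeeping meaningful one needs a monomial-coefficient bound on $e(u)$ (not merely a pointwise one), which is where Chebyshev extremality is essential. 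Carefully combining the Lebesgue constant $2^{d+1}$, the Chebyshev coefficient bound $2^d$, and the geometric-sum telescoping $\sum_m \Delta^{-m}\binom{m}{k} = O(\Delta^{-d}\binom{d}{k})$ is the heart of the estimate.
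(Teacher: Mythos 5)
Your overall strategy is the same as the paper's: take the $k$-th coefficient of the Lagrange interpolant through the noisy data, rescale to $[-1,1]$, bound the error polynomial there, and translate back via the binomial change of variables, which is exactly where the factor $t_0^{-k}\Delta^{-d}\binom{d}{k}$ arises in both arguments. The two middle ingredients differ: the paper bounds the sup norm of the error polynomial by $O(\delta)$ via its simplified reading of Rakhmanov's theorem and then applies Sherstov's bound $\sum_m |c_m| \le 4^{d}\max_{[-1,1]}|\cdot|$, whereas you use the Lebesgue constant $\Lambda_d \le 2^{d+1}$ for equidistant nodes and then a per-coefficient Chebyshev extremality bound. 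Your Lebesgue-constant step is sound (and arguably on safer ground than invoking Rakhmanov with only $L=d+1$ nodes, where no constant-factor bound on all of $[-1,1]$ is available). The weak point is the claim $|[e]_m| \le O(2^{d})\,\|e\|_\infty$: by V.~A.~Markov's theorem the extremal coefficients are those of Chebyshev polynomials, and these grow like $(1+\sqrt{2})^{d}$ up to polynomial factors, not $2^{d}$ --- for instance $T_{10}(x)=512x^{10}-1280x^{8}+\cdots$ has a coefficient of modulus $1280>2^{10}$, and summing $|[T_j]_m|$ over $j$ only makes this worse. With the correct factor your chain gives an error of order $\delta\, t_0^{-k}\bigl(2(1+\sqrt{2})/\Delta\bigr)^{d}\binom{d}{k}$ rather than the stated $(4/\Delta)^{d}\binom{d}{k}$; this has the same form and is fully adequate for every downstream use in the paper (only some constant base is needed), but it does not reproduce the lemma's constant as written. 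A minor additional point: your geometric sum costs a factor $1/(1-\Delta)$, so you implicitly assume $\Delta$ bounded away from $1$; this is avoided by bounding $\binom{m}{k}\Delta^{-m}\le\binom{d}{k}\Delta^{-d}$ termwise, as the paper does.
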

We prove Lemma~\ref{lm:p_poly_coefficient_bound} in Appendix~\ref{app:proof_details} by constructing a degree-$d$ polynomial using the Lagrange interpolation and by bounding the difference between the constructed polynomial and $q(t)$.

We are ready to prove a primitive version of Theorem~\ref{thm:classical_hardness}.
\begin{theorem}[FP algorithm]\label{thm:fp_algorithm}
    Let $t_0=O(\log n)$ and $0 < \Delta < 1$ be a constant. 
    Under Conjecture~\ref{conj:gaussian_permanent}, there exists a constant $c_0>0$ such that the following statement holds: If there exists an efficient oracle that evaluates $p(\mathbf{x};J;t)$ for $\mathbf{x} \in X_m$ with $m \geq \sqrt{n}$ and $t \in [t_0(1-\Delta), t_0(1+\Delta)]$ within an additive error of $n^{-c_0n}$ with a probability of $\geq 1-1/n^2$ over $t$ and a constant probability of $\eta>0$ over $J \sim \mathcal{N}(0,1)^{n \times n}$, then $\mathsf{FP}=\mathsf{\#P}$.
\end{theorem}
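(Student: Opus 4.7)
My plan is to build, given the assumed oracle, a polynomial-time algorithm that estimates $\mathrm{Per}(\tilde{J})^2$ to additive accuracy $\varepsilon\,m!$ for an $m \times m$ Gaussian submatrix $\tilde{J}$ of $J$; by Conjecture~\ref{conj:gaussian_permanent} this forces $\mathsf{FP} = \mathsf{\#P}$ whenever $m \geq \sqrt{n}$. The starting point is the Taylor series $p(\mathbf{x};J;t) = \sum_{k \geq 2m} a_k t^k$ for $\mathbf{x} \in X_m$, whose leading coefficient $a_{2m} = \mathrm{Per}(\tilde{J})^2/n^{2m}$ is identified in Appendix~\ref{app:permanent_hamiltonian_moments}. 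Reconstructing $a_{2m}$ to accuracy $\varepsilon\,m!/n^{2m}$ is therefore equivalent to the required permanent estimate.

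First I would truncate $p(t)$ to a degree-$d$ polynomial $p_d(t)$ with $d = \alpha n \log n$ for a sufficiently large constant $\alpha$. Writing $p=|f|^2$ with $f(t)=\braket{\mathbf{x}|e^{-\complexi H t}|\mathbf{y}_0}$ and applying a Cauchy product estimate gives $|a_k| \leq (2\Vert H\Vert)^k/k!$; moreover $\Vert H\Vert = O(n)$ holds on a constant-probability set of $J$ by Lemma~\ref{lm:bounding_prob_sum_of_gaussian}. On the window $t \in [t_0(1-\Delta), t_0(1+\Delta)]$ with $t_0 = O(\log n)$, the argument $2\Vert H\Vert t = O(n\log n)$ stays far below $d$, so the tail obeys $|p(t) - p_d(t)| = O\bigl((e \cdot 2\Vert H\Vert t / d)^d\bigr)$, which can be pushed below $n^{-c' n}$ for any prescribed $c'$ by enlarging $\alpha$.

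Next I would evaluate the oracle at $L = d+1$ equidistant points in this window, translating the grid by a uniformly random offset. Integrating the per-$t$ failure probability $1/n^2$ over the offset, Markov's inequality yields that all $L$ points return $n^{-c_0 n}$-accurate estimates of $p(t_i)$ with probability $1 - O(L/n^2) = 1 - o(1)$. Combined with the truncation bound, each oracle value lies within $\delta = O(n^{-c_0 n})$ of $p_d(t_i)$, so Lemma~\ref{lm:p_poly_coefficient_bound} deterministically recovers every coefficient of $p_d$, in particular $a_{2m}$, within additive error $O\bigl[\delta\, t_0^{-2m}(4/\Delta)^d \binom{d}{2m}\bigr]$.

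The main obstacle is the bookkeeping that shows this amplified error stays below $\varepsilon\,m!/n^{2m}$ for a suitable constant $c_0$. With $d = \Theta(n \log n)$ one has $(4/\Delta)^d = n^{O(n)}$ and $\binom{d}{2m} \leq n^{O(m)}(\log n)^{O(m)}$, while Stirling gives $m!/n^{2m} = n^{-(1+o(1))m}$ and $t_0^{-2m} = (\log n)^{-2m}$ contributes only a subpolynomial correction in the exponent. Choosing $c_0$ larger than the sum of these $O(\cdot)$ constants (which depend only on $\alpha$ and $\Delta$) makes the net error meet the Gaussian permanent bound. The overall procedure then succeeds with constant probability over $J$ (the intersection of the $\Vert H\Vert = O(n)$ set and the oracle's own constant-probability set) and with probability $1 - o(1)$ over the grid offset, producing the required efficient algorithm for the Gaussian permanent problem and completing the reduction.
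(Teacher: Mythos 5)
Your proposal follows essentially the same route as the paper's own proof: expand $p(\mathbf{x};J;t)$ in $t$, identify $a_{2m}=\mathrm{Per}(J_{\mathcal{ST}})^2/n^{2m}$, truncate at degree $\Theta(n\log n)$ on the event $\Vert H\Vert = O(n)$ (Lemma~\ref{lm:bounding_prob_sum_of_gaussian}), sample the oracle at $d+1$ equidistant points in $[t_0(1-\Delta),t_0(1+\Delta)]$, recover the coefficient via Lemma~\ref{lm:p_poly_coefficient_bound}, and balance the truncation error against the $(4/\Delta)^d\binom{d}{2m}t_0^{-2m}n^{2m}$ amplification before invoking Conjecture~\ref{conj:gaussian_permanent}. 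The only deviations are cosmetic (a random grid offset to justify the union bound over the $t$-failure probability, where the paper uses a fixed grid, and treating general $m$ directly rather than $m=n$ first), so the argument is correct and matches the paper's proof.
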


We prove Theorem~\ref{thm:fp_algorithm} for $\mathbf{x} = \mathbf{x}_0 \in X_n$ in the rest of this subsection (see Appendix~\ref{app:other_x_in_xm} for other $\mathbf{x} \in X_m$).
We again use $p_K(t)$, a truncated Taylor series of $p(t)$ with degree $K$, to estimate how the error from the estimation of $p(t)$ (where we exclude $\mathbf{x}_0$ and $J$ from the arguments as when they are obvious) affects our estimation of $p^{(2n)}(0)/(2n)!$. 
Let us define $p_K(t)$ such that
\begin{align}
    p_K(t) = \sum_{k=2n}^K a_k t^k,
\end{align}
where $a_{k} = p^{(k)}(0)/k!$ with a special case of $a_{2n}=\mathrm{Per}(J)^2/n^{2n}$.
Using Taylor's theorem, we have
\begin{align}
    |p(t) - p_K(t)| \leq \frac{|p^{(K+1)}(\tau)|}{(K+1)!} t^{K+1} ,
\end{align}
for $0 \leq \tau \leq t$.
With $p(t) = f(t)f(t)^{*}$, where $f(t) = \braket{\mathbf{x}_0| e^{-\complexi Ht}|\mathbf{y}_0}$, we can expand $p^{(K+1)}(\tau)$ as follows:
\begin{align}
    p^{(K+1)}(\tau) = \sum_{l=0}^{K+1} {K+1 \choose l} f^{(l)}(\tau) [f^{(K+1-l)}(\tau)]^*.
\end{align}
Using an upper bound of $f^{(l)}(\tau)$ as given in below 
\begin{align}
    |f^{(l)}(\tau)| = |\braket{\mathbf{x_0}|H^{l} e^{-\complexi H\tau_0}|\mathbf{y}_0}| \leq \Vert H \Vert^l,
\end{align}
where $\Vert H \Vert$ is the operator norm of $H$,
we can bound $p^{(K+1)}(\tau)$ as
\begin{align}
    |p^{(K+1)}(\tau)| \leq \sum_{l=0}^{K+1} {K+1 \choose l} \Vert H \Vert^{K+1} \leq (2 \Vert H \Vert)^{K+1}.
\end{align}
Finally, we have the truncation error bound of probability, denoted $\epsilon_K$ as follows:
\begin{align}
    |p(t) - p_K(t)| \leq \frac{(2 \Vert H \Vert t)^{K+1}}{(K+1)!} := \epsilon_K \label{eq:epsilon_K_def} .
\end{align}
Thus, by estimating $p(t)$ within an additive error of $\epsilon$, we can also estimate $p_K(t)$ within an additive error of $\epsilon' := \epsilon + \epsilon_K$.

In summary, the problem reduces to the estimation of $2n$-th coefficient of $p_K(t)$ (i.e., for $t^{2n}$), wherein noisy data for $p(t)$ near $t=t_0$ are provided.
Then, the theorem is proved as follows:
(1) For a given oracle that approximates $p(\mathbf{x}_0;J;t)$, we call the oracle $K+1$ times for a given $J$ to generate an equidistant dataset.
The oracle successfully returns all data points with probability $\geq 1-(K+1)/n^2$.
(2) We use Lemma~\ref{lm:p_poly_coefficient_bound} to estimate $a_{2n}$, which will be our estimation of $\mathrm{Per}(J)^2/n^{2n}$.
Thus, the final error for $\mathrm{Per}(J)^2$ is given by $O[(nt_0^{-1})^{2n}\epsilon'(4/\Delta)^{K} {K \choose 2n}]$, where $n^{2n}$ comes from the denominator of $a_{2n}$, and by identifying $\delta=\epsilon'$, $k=2n$, $d=K$ in Lemma~\ref{lm:p_poly_coefficient_bound}.
(3) By bounding this error to $o(n!)$, we can prove that the existence of such an oracle solves a \textsf{\#P-hard} problem by Conjecture~\ref{conj:gaussian_permanent}.

For the proof, we restrict $H$ to have $\Vert H \Vert \leq \kappa n$ with $\kappa=3$.
Here, $\kappa$ can be an arbitrary constant that sufficiently reduces the outlier probability, i.e., $\mathrm{Pr}[\Vert H \Vert > \kappa n ] = o(1)$. 
This adds to the failure probability of our algorithm.
In Appendix~\ref{app:proof_details}, we show that $H$ constructed using a random Gaussian matrix $J$ produces a small outlier probability given by $e^{-\Theta(n^2)}$ when $\kappa=3$ for all $H=H_{1,2,3,4}$.

For those $H$, we show that $K=\alpha n \log n$ estimates $\mathrm{Per}(J)^2$ within an additive error of $o(n!)$ when $\epsilon=1/n^{c_0 n}$ for a constant $c_0$.
First, let us bound $\epsilon_K$ [Eq.~\eqref{eq:epsilon_K_def}].
As $t_0 = O(\log n)$, we can find $C>0$ such that $t_0 \leq C \log n$ for a sufficiently large $n$. Then, we obtain
\begin{align}
    &\epsilon_K \leq \frac{[6 n t_0 (1+\Delta)]^{K+1}}{(K+1)!} \leq \frac{(12 n t_0)^{K+1}}{(K+1)!}  \nonumber \\
    &\quad \leq \Bigl( \frac{12 e n t_0}{K+1} \Bigr)^{K+1} \leq \Bigl( \frac{12 e C}{\alpha} \Bigr)^{\alpha n \log n + 1} ,
\end{align}
where we use $0 < \Delta < 1$ for the second inequality and Stirling's approximation for the third inequality. 
In addition, if we choose $c_0 \geq -\alpha \log(12eC/\alpha)$ for $\epsilon=n^{-c_0 n}$, we obtain
\begin{align}
    \epsilon'=\epsilon + \epsilon_K = O\biggl[ \Bigl( \frac{12 e C}{\alpha} \Bigr)^{\alpha n \log n } \biggr] .
\end{align}

Next, we extract the error factor from Lemma~\ref{lm:p_poly_coefficient_bound}, by identifying $q(t)=p_K(t)$ with $d= K = n \log n$ and $k=2n$.
From the entropic bound of the binomial coefficient given by
\begin{align}
    {K \choose 2n} \leq 2^{K \mathcal{H}(2n/K)},
\end{align}
where $\mathcal{H}(p) = -p \log_2(p) - (1-p) \log_2 (1-p)$ is the Shannon entropy, it follows that ${K \choose 2n} \leq 1.5^{K}$ for sufficiently large $n$. 
Thus, the error factor from Lemma~\ref{lm:p_poly_coefficient_bound} becomes
\begin{align}
    O\Bigl[ \epsilon'  t_0^{-2n} \Bigl( \frac{4}{\Delta} \Bigr)^{K} {K \choose 2n} \Bigr] &= O\Bigl[\epsilon'  t_0^{-2n} \Bigl( \frac{6}{\Delta} \Bigr)^{\alpha n \log n}\Bigr] = O\biggl[ \Bigl( \frac{72 e C}{\alpha \Delta} \Bigr)^{\alpha n \log n } \biggr].
\end{align}
By multiplying the remaining factor $n^{2n}$ from the denominator of $a_{2n}$ with the last expression, we obtain the overall error for $\mathrm{Per}(J)^2$ by $O(n^{n[2 + \alpha \log(72eC/\alpha \Delta)]})$.
This can be further bounded by $O(n^{\mu n}) = o(n!)$ with $\mu < 1$ as we can find $\alpha$ such that $\alpha \log (72eC/\alpha \Delta) < -1$.
Our result is summarized as follows: Suppose there exists a classically efficient oracle ($\in \mathsf{FP}$) that can estimate $p(t)$ within an additive error of $n^{-c_0 n}$ for $t \in [t_0(1-\Delta), t_0(1+\Delta)]$ with a constant probability $\eta > 0$ over $J \sim \mathcal{N}(0,1)^{n \times n}$.
Then, we can estimate $\mathrm{Per}(J)^2$ within an additive error of $o(n!)$ by calling the oracle $\mathrm{poly}(n)$ times.
The success probability of the algorithm over $J\sim \mathcal{N}(0,1)^{n \times n}$ is given by $\eta-e^{-\Theta(n^2)} - (\alpha n \log n+1)/n^2 \geq 7\eta/8$ for sufficiently large $n$.
Then, it follows that \textsf{FP=\#P} by Conjecture~\ref{conj:gaussian_permanent}, which also implies \textsf{P=NP}~\cite{arora2009computational}.

However, Theorem~\ref{thm:fp_algorithm} required an oracle that can estimate $p(\mathbf{x};J;t)$ for a range of $t \in [t_0(1-\Delta), t_0(1+\Delta)]$.
Thus, such an oracle is much stronger than we want for Conjecture~\ref{conj:hardness_output_probs}, which estimates $p(\mathbf{x};J;t)$ only at $t=t_0$.
Fortunately, it is also possible to prove the hardness using an oracle for $t=t_0$ if we require smaller additive errors in the estimation and use a more powerful algorithm.

\subsection{Proof of Theorem~\ref{thm:classical_hardness} for $\mathbf{x} \in X_n $}
In this subsection, we prove Theorem~\ref{thm:classical_hardness} for $\mathbf{x} = \mathbf{x}_0 \in X_n$ by providing an \textsf{FBPP} algorithm that estimates $\mathrm{Per}(J)^2$ within a sufficiently small error when an oracle that evaluates $p(\mathbf{x}_0;J;t)$ at $t=t_0$ is provided.
One of the main differences of the algorithm presented in this subsection compared to the one introduced in Theorem~\ref{thm:fp_algorithm} is that the oracle only requires to return the estimated values of $p(t)$ at $t=t_0$ instead of that of the range $[t_0(1-\Delta), t_0(1+\Delta)]$.
This is because we can estimate $p(t)$ near $t_0$ if the estimation of $p(t_0)$ is possible for Gaussian random matrices $J$, which follows from the lemma below.

\begin{lemma} \label{lm:tvd_gaussian}
    An oracle that estimates $p(J;t_0)$ within an additive error of $\epsilon$ with probability $\eta$ over $J\sim \mathcal{N}(0,1)^{n \times n}$ can estimate $p(J;t)$ within the same error with a probability of at least $\eta - (3/2)n\sqrt{|(t/t_0)^2 - 1|}$.
\end{lemma}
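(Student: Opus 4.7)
The plan is to exploit the linearity of the Hamiltonians $H_{1,2,3,4}$ in the coupling matrix $J$. Writing $H(J)$ to emphasize this dependence, one has $H(J)\,t = H\bigl((t/t_0)J\bigr)\,t_0$, so
\begin{equation*}
p(\mathbf{x};J;t) = \bigl|\braket{\mathbf{x}|e^{-\complexi H(J)t}|\mathbf{y}_0}\bigr|^2 = p\bigl(\mathbf{x};(t/t_0)J;t_0\bigr).
\end{equation*}
To estimate $p(\mathbf{x};J;t)$ for $J\sim\mathcal{N}(0,1)^{n\times n}$, I would therefore feed the given oracle the rescaled input $J' = (t/t_0)J$, which is distributed as $\mathcal{N}(0,(t/t_0)^2)^{n\times n}$---close to, but not identical with, the distribution on which the oracle's $\eta$-success is guaranteed.

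To quantify the loss from this distributional shift, I would invoke the elementary bound $|\Pr_{P}[A]-\Pr_{Q}[A]|\leq \Vert P-Q\Vert_{\mathrm{TV}}$ for any event $A$. Taking $P=\mathcal{N}(0,1)^{n\times n}$, $Q=\mathcal{N}(0,(t/t_0)^2)^{n\times n}$, and $A$ the event that the oracle returns a value within $\epsilon$ of $p(\mathbf{x};J';t_0)$, the composed procedure succeeds on $J\sim\mathcal{N}(0,1)^{n\times n}$ with probability at least $\eta - \Vert P-Q\Vert_{\mathrm{TV}}$.

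The remaining step is a quantitative bound on this TVD between two product Gaussians. Writing $\delta = (t/t_0)^2-1$, I would combine Pinsker's inequality with the univariate KL formula $\mathrm{KL}(\mathcal{N}(0,\sigma^2)\Vert\mathcal{N}(0,1)) = \tfrac12(\sigma^2-1-\log\sigma^2)$. A Taylor expansion gives a per-coordinate KL of order $\delta^2$ on the relevant range; tensorizing over the $n^2$ independent entries and applying Pinsker yields $\Vert P-Q\Vert_{\mathrm{TV}} = O(n|\delta|)$. Using $|\delta|\leq\sqrt{|\delta|}$ whenever $|\delta|\leq 1$, this loosens into the claimed form $(3/2)\,n\sqrt{|(t/t_0)^2-1|}$ with room to spare for the constant.

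The crux of the argument is the rescaling identity, which is available only because each $H_{1,2,3,4}$ depends linearly on $J$. The one step requiring genuine care is controlling $\delta-\log(1+\delta)$ for both signs of $\delta$: near $\delta=-1$ the per-coordinate KL blows up, so the estimate must be read with the implicit restriction that $t$ stays within a constant factor of $t_0$, which is precisely the regime in which Lemma~\ref{lm:tvd_gaussian} is invoked inside the proof of Theorem~\ref{thm:classical_hardness}.
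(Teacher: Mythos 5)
Your proposal is correct and follows essentially the same route as the paper: the rescaling identity $p(\mathbf{x};J;t)=p(\mathbf{x};(t/t_0)J;t_0)$ (linearity of $H$ in $J$) reduces the claim to bounding the total variation distance between $\mathcal{N}(0,1)^{n\times n}$ and $\mathcal{N}(0,(t/t_0)^2)^{n\times n}$, exactly as in the paper's proof. The only difference is the last quantitative step, which the paper simply imports from Devroye et al.\ (2018), whereas you derive it via Pinsker's inequality and the Gaussian KL formula; this is fine and in fact gives the tighter $O\bigl(n\,|(t/t_0)^2-1|\bigr)$ bound in the small-deviation regime where the lemma is actually invoked ($\Delta=O(1/n^2)$), while for $|(t/t_0)^2-1|$ of order one the stated bound $\tfrac{3}{2}n\sqrt{|(t/t_0)^2-1|}$ exceeds $1$ and holds trivially, so your cautionary restriction near $t\ll t_0$ costs nothing.
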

\begin{proof}
    Let $\mathcal{O}(J)$ be an oracle that satisfies
    \begin{align}
        \mathrm{Pr}_{J \sim \mathcal{N}(0,1)}[|\mathcal{O}(J) - p(J;t_0)| \leq \epsilon] > \eta .
    \end{align}
    Let $E$ be the events where $|\mathcal{O}(J) - p(J;t_0)| \leq \epsilon$.
    As the distribution of $p(J;t)$ for $J \sim \mathcal{N}(0,1)$ is the same as that of $p(J;t_0)$ with $J \sim \mathcal{N}(0,(t/t_0)^2)$, the difference between the probability of having $E$ is upper bounded by the total variational distance, i.e.,
    \begin{align}
        |\mathrm{Pr}_{J \sim \mathcal{N}(0,(t/t_0)^2)}[E] - \mathrm{Pr}_{J \sim \mathcal{N}(0,1)}[E] | \leq \delta ,
    \end{align}
    where $\delta$ is the total variational distance between two Gaussian distributions. 
    This completes the proof as $\delta \leq (3/2) n \sqrt{|(t/t_0)^2-1|}$~\cite{devroye2018total}.
\end{proof}

The lemma implies that, when $\eta \geq 3/4$ is a constant, by choosing $\Delta \leq \beta / n^2$ with $\beta = 1/64$, the oracle can estimate $p(t)$ within an error $\epsilon$ with probability $> 1/2$ over $t \in [t_0(1-\Delta),t_0(1-\Delta)]$.
Then, we can apply the following lemma to estimate the $2n$-th coefficient of $\tilde{q}(t)$, i.e., $a_{2n}$.

\begin{lemma}\label{lm:bpp_poly_coefficient_bound}
    For a degree-$d$ polynomial $q(t)$, suppose that there exists an efficient oracle that estimates $q(t)$ within error $\delta$ with a constant probability larger than $1/2$ over $t \in [t_0(1-\Delta), t_0(1+\Delta)]$.
    Then there exists an \emph{\textsf{FBPP}} algorithm that returns the coefficient of $q(t)$ for $t^k$, within an additive error of $O[ \delta t_0^{-k} (4/\Delta)^d {d \choose k}]$ with a success probability of at least $2/3$.
\end{lemma}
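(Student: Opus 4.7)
The plan is to reduce Lemma~\ref{lm:bpp_poly_coefficient_bound} to Lemma~\ref{lm:p_poly_coefficient_bound} by first converting the constant-probability oracle into a data set on which a degree-$d$ polynomial can be recovered robustly, and then reading off its $k$-th coefficient.

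\textit{Step 1 (oversampling).} Write the oracle's success probability as $1/2+\gamma$ for a constant $\gamma>0$. Draw $N=\Theta(d/\gamma^{2})$ independent uniform samples $s_{1},\dots ,s_{N}$ from $[t_{0}(1-\Delta),t_{0}(1+\Delta)]$ and query the oracle at each to obtain $y_{i}$. Each index is ``good'' (meaning $|y_{i}-q(s_{i})|\le \delta$) independently with probability $1/2+\gamma$, so Hoeffding's inequality ensures that more than a $(1/2+\gamma/2)$ fraction of samples are good with probability $\ge 5/6$ once $N$ is chosen large enough.

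\textit{Step 2 (robust recovery of the polynomial).} Conditional on the majority event of Step~1, feed $\{(s_{i},y_{i})\}$ into a polynomial-time robust polynomial regression algorithm such as that of Kane--Karmalkar--Price~\cite{kane2017robust}, which returns a degree-$d$ polynomial $\hat q$ with $\sup_{t\in[t_{0}(1-\Delta),t_{0}(1+\Delta)]}|\hat q(t)-q(t)|\le C\delta$ for a universal constant $C$. The intuition is that any degree-$d$ polynomial consistent with a strict majority of samples within noise $\delta$ must agree with the true $q$ on a super-majority of points; two degree-$d$ polynomials that nearly agree on more than $d+1$ points of the interval are forced to be uniformly close throughout.

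\textit{Step 3 (coefficient extraction and error propagation).} Expand $\hat q(t)=\sum_{j=0}^{d}\hat a_{j}t^{j}$ and output $\hat a_{k}$. To bound $|\hat a_{k}-a_{k}|$, apply Lemma~\ref{lm:p_poly_coefficient_bound} to the polynomial $q-\hat q$: evaluating it at the $d+1$ equidistant points inside the interval yields data satisfying the hypothesis of Lemma~\ref{lm:p_poly_coefficient_bound} with noise level $C\delta$, so $|\hat a_{k}-a_{k}|=O[\delta\, t_{0}^{-k}(4/\Delta)^{d}{d \choose k}]$. The overall success probability $\ge 2/3$ comes from Step~1, and the total runtime is polynomial in $d$.

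\textit{Main obstacle.} The delicate part is Step~2: the oracle may return arbitrary values on up to a $1/2-\gamma$ fraction of samples, so this is a worst-case outlier setting rather than mere bounded noise. A Reed--Solomon-style decoder such as Berlekamp--Welch handles exact errors only and is therefore insufficient, so a genuinely noise-tolerant procedure is required. The cleanest route is to invoke an off-the-shelf robust polynomial regression result as above; alternatively, one can enumerate sufficiently many random $(d+1)$-subsets, Lagrange-interpolate each, and take the coordinate-wise median of the resulting coefficient estimates, which also yields an $\mathsf{FBPP}$ algorithm at the cost of a somewhat larger sample complexity.
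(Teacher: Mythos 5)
Your proposal is correct and follows essentially the same route as the paper: the paper likewise invokes the robust polynomial regression result of Kane et al.~\cite{kane2017robust} (its Lemma~\ref{lm:bpp_poly_recon}) to recover a degree-$d$ polynomial $\hat q$ uniformly $O(\delta)$-close to $q$ on $[t_0(1-\Delta),t_0(1+\Delta)]$, and then propagates this sup-norm error to the $k$-th coefficient via the rescaling $s=(t/t_0-1)/\Delta$ and the $4^d$ coefficient-sum bound, which is exactly the content of your appeal to Lemma~\ref{lm:p_poly_coefficient_bound}. The only cosmetic differences are that the paper lets the cited regression result absorb your Hoeffding/oversampling step and carries out the coefficient bound directly with Lemma~\ref{lm:sum_of_coeffs} rather than re-invoking Lemma~\ref{lm:p_poly_coefficient_bound}.
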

See Appendix~\ref{app:proof_details} for the proof. We now prove Theorem~\ref{thm:classical_hardness} for $\mathbf{x} = \mathbf{x}_0$, which we recall as follows.

\ThmClassicalHardness*

We follow the same steps for the proof for Theorem~\ref{thm:fp_algorithm} to find an error bound for $\mathrm{Per}(J)^2$.
Let us bound $\epsilon_K$ for $K= \alpha n^3 \log n$ and $t_0 \leq C \log n$, which is given by
\begin{align}
    &\epsilon_K \leq \Bigl( \frac{12e n t_0}{K+1} \Bigr)^{K+1} \leq  \Bigl( \frac{12e n t_0}{K} \Bigr)^{K+1} \nonumber \\
    &= \Bigl( \frac{12 e C}{\alpha n^2}\Bigr) \Bigl( \frac{12e C}{\alpha} \Bigr)^{\alpha n^3 \log n} n^{-2\alpha n^3 \log n} \nonumber \\
    &\leq \Bigl( \frac{12e C}{\alpha} \Bigr)^{\alpha n^3 \log n} n^{-2\alpha n^3 \log n },
\end{align}
where the last inequality is satisfied for $n \geq (12eC / \alpha)^{1/2}$.
Thus for any $c_1 > 2\alpha $ and $\epsilon = n^{-c_1 n^3 \log n}$, we have $\epsilon = o(\epsilon_K)$, i.e.,
\begin{align}
    \epsilon' = \epsilon + \epsilon_K = O \biggl[ \Bigl( \frac{12e t_0}{\alpha} \Bigr)^{\alpha n^3 \log n} n^{-2\alpha n^3 \log n} \biggr].
\end{align}

We now apply Lemma~\ref{lm:bpp_poly_coefficient_bound} by putting $\delta=\epsilon'$, $\Delta \leq \beta/n^2$, $d=K$, and $k=2n$.
This yields the following additive error bound:
\begin{align}
    \epsilon' t_0^{-2n} \left( \frac{4 n^2 }{ \beta} \right)^{K} {K \choose 2n} \leq \epsilon' t_0^{-2n} \left(\frac{6}{\beta}\right)^{\alpha n^3 \log n} n^{2\alpha n^3 \log n}.
\end{align}
Consequently, the permanent can be estimated within an additive error of
\begin{align}
    O\Biggl[ \Bigl( \frac{n}{t_0} \Bigr)^{2n} \Bigl( \frac{72 et_0}{\alpha \beta} \Bigr)^{\alpha n^3 \log n} \Biggr],
\end{align}
which is $o(n!)$ for $\alpha > (72 et_0)/\beta$.
Hence, the existence of an efficient classical oracle ($\in \mathsf{FP}$) for estimating $p(t_0)$ within an additive error of $n^{-c_1 n^3 \log n}$ with probability $\geq 3/4$ over $J \sim \mathcal{N}(0,1)^{n \times n}$ implies that there exists an \textsf{FBPP} algorithm for estimating the permanent of Gaussian matrices; this also implies that \textsf{FBPP=\#P}.

\subsection{Difficulty of combining Lemmas~\ref{lm:p_poly_coefficient_bound} and \ref{lm:tvd_gaussian}} \label{app:difficulty_fp_algo_rakh}
Given Lemma~\ref{lm:tvd_gaussian}, one might ask whether we can combine Lemmas~\ref{lm:p_poly_coefficient_bound} and \ref{lm:tvd_gaussian} to obtain a \textit{deterministic} algorithm (i.e., not \textsf{FBPP} but \textsf{FP}) for estimating the $k$-th coefficient of $q(t)$ only using an oracle for $p(\mathbf{x};J;t_0)$.
However, the main problem of combining these results is that Lemma~\ref{lm:p_poly_coefficient_bound} requires $L>d$ data points \textit{a priori}.
They are only accessible when the oracle can output $p(t)$ with a probability greater than $1-1/L$ over $t \in [t_0(1-\Delta), t_0(1+\Delta)]$,
and this is the main reason why we cannot easily combine Lemmas~\ref{lm:p_poly_coefficient_bound} and \ref{lm:tvd_gaussian}.

To understand the situation better, let us recall the setting for Theorem~\ref{thm:fp_algorithm}.
The error from the polynomial approximation (the difference between $p_K(t)$ and $p(t)$) was given by $\epsilon_K = ( 12net_0/K )^{K}$
which is magnified by a factor of $t_0^{-2n}(4/\Delta)^K 2^{n \mathcal{H}(2n/K)}\leq t_0^{-2n}(6/\Delta)^K$ for the derivatives.
Thus, the overall error becomes
\begin{align}
    t_0^{-2n} \Bigl( \frac{72 net_0}{K \Delta} \Bigr)^K.\label{eq:poly_approx_factor_magnified}
\end{align}

Now let us assume that there exists an oracle that computes $p(t_0)$ with probability $\eta > 0$ over $J \sim \mathcal{N}(0,1)^{n \times n}$.
By Lemma~\ref{lm:tvd_gaussian}, we can estimate $p(t)$ for $t \in [t_0(1-\Delta),t_0(1+\Delta)]$ with probability $\eta - (3/2)n \sqrt{ [(1+\Delta)^2-1]}$.
However, the condition requiring $L$ data points in advance is only satisfied when the oracle outputs a correct estimation of $p(t)$ for a given $t_i$ with a probability greater than $1-O(1/K)$.
To satisfy this condition, we need $\eta = 1-O(1/K)$ and $\Delta = O(1/(n^2 K^2))$.
By entering this $\Delta$ into Eq.~\eqref{eq:poly_approx_factor_magnified}, we have 
\begin{align}
    O[t_0^{K -2n} (72e n^3 K)^K],
\end{align}
which diverges with $n$ for any $K\geq 2n$.

This problem was the reason why the first average-case hardness proof of the \textsf{RCS}~\cite{bouland2019complexity} did not work.
The problem is solved in later studies by using a fixed-degree rational polynomial~\cite{movassagh2023hardness} or using an algorithm within a higher level of the \textsf{PH}~\cite{kondo2022quantum,bouland2022noise,krovi2022average}.
In the first approach, the author interpolated the worst-case and average-case instances of random circuits using a fixed-degree rational polynomial.
However, our problem is unlikely to admit a similar solution as we are unsure whether such an interpolation involving the permanent of a matrix even exists.
Our approach presented in the main text using Lemma~\ref{lm:bpp_poly_coefficient_bound} resembles the second approach.

\subsection{Proof of Theorem~\ref{thm:classical_hardness} for other $\mathbf{x} \in X_m$} \label{app:other_x_in_xm}

In previous subsections, we only have considered the output probability $p(\mathbf{x};J;t) = |\braket{\mathbf{x} | e^{-\complexi Ht}| \mathbf{y}_0}|^{2} $ for $\mathbf{x} = \mathbf{x}_0 = \mathbf{1}^{\sigma} \mathbf{0}^{\tau} \in X_n$.
In this case, the $2n$-th derivative of $p(\mathbf{x};J;t)$ with respect to $t$ at $t=0$ is given by the permanent of the full matrix $J$.
This subsection briefly shows that Theorem~\ref{thm:classical_hardness} still holds for other output bitstrings.
From Appendix~\ref{app:permanent_hamiltonian_moments}, we know that
\begin{align}
    \braket{\mathbf{x}|H^{l}|\mathbf{y}_0} = 0 \text{ for all } l < k, \quad \braket{\mathbf{x}|H^{m}|\mathbf{y}_0} = \frac{1}{n^m} \mathrm{Per}(J_{\mathcal{ST}}) ,
\end{align}
for $\mathbf{x} \in X_m$, where $J_{\mathcal{ST}}$ is $m \times m$ submatrix of $J$, the rows and columns of which are obtained from $\mathcal{S}=\{i \in [1,\cdots,n]: x_i = 1\}$ and $\mathcal{T}=\{i \in [1,\cdots,n]: x_{n+i} = 0\}$, respectively.
In addition, we require $m=n^{\Omega(1)}$ (e.g., $n^{0.001}$, $\sqrt{n}$, $n/2$, etc.) to make the permanent problem \textsf{\#P-hard} w.r.t. $n$.

Again, we define $p_K(t)$ as a truncated Taylor series of $p(t)=|\braket{\mathbf{x}|e^{-\complexi Ht} | \mathbf{y}_0}|^2$ (where we consider it for fixed $J$ and $\mathbf{x}$), which is given by
\begin{align}
    p_K(t) = \sum_{k=m}^K a_k t^k, \quad a_{2m} = \frac{1}{n^{2m}} \mathrm{Per}(J_{\mathcal{ST}})^2 .
\end{align}
Then, the difference between $p_K(t)$ and true polynomial $p(t)$ is given by
\begin{align}
    |p(t) - p_K(t)| \leq \frac{(2\Vert H \Vert t)^{K+1}}{(K+1)!} := \epsilon_K,
\end{align}
which is the same as before, i.e., the oracle estimates $p_K(t)$ within an additive error of $\epsilon + \epsilon_K$.
Now, we reconstruct $\hat{p}(t)=\sum_{k=2m}^{K} \hat{a}_{k} t^k$ by using a dataset obtained from the oracle that returns an estimated value of $p(t)$, and let our estimation of $\mathrm{Per}(J_{\mathcal{ST}})^{2}$ be $n^{2m} \hat{a}_{2m}$.
Collecting all the error factors, our estimation error from the recovery algorithms is written as:
\begin{align}
    \Bigl( \frac{n}{t_0} \Bigr)^{2m} (\epsilon + \epsilon_K)  \Bigl( \frac{4}{\Delta} \Bigr)^{K} {K \choose 2m}. \label{eq:permanent_additive_error_recovery}
\end{align}

Further, according to the Gaussian permanent conjecture, we know that the estimation of $\mathrm{Per}(J_{\mathcal{ST}})^{2}$ within an additive error of $\varepsilon m!$ is \textsf{\#P-hard} when $m=\mathrm{poly}(n)$.
Thus, we aim to find a condition that the error given in Eq.~\eqref{eq:permanent_additive_error_recovery} becomes smaller than $\varepsilon m!$.

Let us first consider the \textsf{FP} algorithm (Theorem~\ref{thm:fp_algorithm}).
Following the proof for $\mathbf{x}=\mathbf{x}_0$, we use $t_0 \leq C \log n$ for a constant $C$, $K=\alpha n\log n$, and $t \in [t_0(1-\Delta),t_0(1+\Delta)]$ with $\Delta < 1$.
Then, we obtain
\begin{align}
    \epsilon_K \leq \Bigl( \frac{12 e C}{\alpha} \Bigr) n^{ \alpha n \log (12 e C / \alpha)}.
\end{align}
By choosing $c_0 > \alpha \log(12eC/\alpha)$ from $\epsilon = n^{-c_0 n}$, we have $\epsilon_K \leq \epsilon$ for sufficiently large $n$, i.e., $\epsilon + \epsilon_K \leq O[n^{-\alpha n} \log(12eC/\alpha)]$.
Next, we also have
\begin{align}
    {K \choose 2m} \leq {K \choose 2n} \leq 1.5^{K} ,
\end{align}
for sufficiently large $n$.
By combining all these factors, we have
\begin{align}
    \Bigl( \frac{n}{t_0} \Bigr)^{2m} (\epsilon + \epsilon_K)\Bigl( \frac{4}{\Delta} \Bigr)^{K} {K \choose 2m} = O \biggl[ n^{2m} \Bigl( \frac{72 e C}{\alpha \Delta} \Bigr)^{\alpha n \log n} \biggr]. \label{eq:upper_bound_error_xm}
\end{align}
For any $0 \leq m \leq n$, we can find $\alpha$ such that $\alpha \log (72eC/\alpha \Delta) < -1$, which makes the bound Eq.~\eqref{eq:upper_bound_error_xm} be $o(m!)$.
This implies that $\mathrm{Per}(J_{\mathcal{ST}})^{2}$ is estimated within an additive error of $o(m!)$, which is \textsf{\#P-hard} for $m=\mathrm{poly}(n)$.

Next, we follow similar steps for the \textsf{FBPP} algorithm (Theorem~\ref{thm:classical_hardness}).
We use the same parameters for the proof of Theorem~\ref{thm:classical_hardness}, i.e., $\Delta \leq \beta / n^2$, and $K=\alpha n^3 \log n$ for a constant $\alpha,\beta >0$.
We then obtain
\begin{align}
    (\epsilon + \epsilon_K) t_0^{-2m} \Bigl( \frac{4}{\Delta} \Bigr)^K {K \choose 2m}  \leq O\Bigl[ t_0^{-2m} \Bigl( \frac{72 e t_0}{\alpha \beta} \Bigr)^{\alpha n^3 \log n} \Bigr].
\end{align}
Thus, the overall error will be
\begin{align}
    O\Bigl[ \Bigl( \frac{n}{t_0} \Bigr)^{2m} \Bigl( \frac{72 e t_0}{\alpha \beta} \Bigr)^{\alpha n^3 \log n} \Bigr] ,
\end{align}
which is $o(1)$ for any $\alpha > (72 e t_0)/\beta$.

\subsection{Remarks on Theorem~\ref{thm:classical_hardness}}
We provide the following remarks for Theorem~\ref{thm:classical_hardness}. 

\begin{enumerate}
    \item Our proof techniques for Theorem~\ref{thm:classical_hardness} can also be directly used to prove a weaker version of Conjecture~\ref{conj:gaussian_permanent}.
    In particular, one can prove that under the \textsf{BPP} reduction (i.e., such an oracle implying \textsf{FBPP=\#P}) that computing $\mathrm{Per}(J)$ or $\mathrm{Per}(J)^2$ with an additive error of $n^{-O(n)}$ with a probability of $3/4$ over random Gaussian matrices $J$, is \textsf{\#P-hard}.
    Details are provided in Appendix~\ref{app:approximate_gaussian_permanent}.

    \item Our proof for Theorem~\ref{thm:classical_hardness} is robust against the additive error for the permanent given in Conjecture~\ref{conj:gaussian_permanent}. 
    Both the \textsf{FP} (Theorem~\ref{thm:fp_algorithm}) and \textsf{FBPP} (Theorem~\ref{thm:classical_hardness}) algorithms can be feasible even for the $n^{-O(n)}$ (instead of $\varepsilon n!$) additive error estimation of the permanent by adjusting some constants.
    Thus, under the \textsf{BPP} reduction, we can directly use the results in Appendix~\ref{app:approximate_gaussian_permanent}, i.e., we do not require Conjecture~\ref{conj:gaussian_permanent}.

    \item An open question is whether the required error bounds for oracles in Theorem~\ref{thm:classical_hardness} can be relaxed.
    We believe that proving (even a restricted version of) Conjecture~\ref{conj:hardness_output_probs} by relaxing error bounds from $n^{-\alpha n^3 \log n}$ for Theorem~\ref{thm:classical_hardness} to $\varepsilon 2^{-2n}$ could be impactful.
    However, this is challenging under our current setup based on a truncated Taylor series unless significant improvements are made to the robust polynomial regression techniques.
    This is because all related approaches for \textsf{RCS} results in a similar error bound~\cite{kondo2022quantum,bouland2022noise,krovi2022average},
    and the technique we used in Lemma~\ref{lm:bpp_poly_coefficient_bound} (which was introduced in Ref.~\cite{kane2017robust}) achieves the information theoretic bound.

    \item An alternative approach for relaxing the required error bound would involve the search for a closed-form expression of $p(\mathbf{x};J;t)$ written in a combination of a finite number of elementary functions involving the permanent of $J$ or some modified matrices.
    Such an expression would facilitate extracting $\mathrm{Per}(J)^2$ more robustly, thus improving the error bound of the permanent estimations.
    We determined the possibility of writing some high-order terms from the Taylor expansion as a product of the permanent of $J$ and other complex functions of $J_{ij}$.
    Nevertheless, as other high-order terms do not allow simple decompositions, detailed analyses in this direction can be considered in future work.
\end{enumerate}

\section{Proofs of Lemmas used for Theorem~\ref{thm:classical_hardness}} \label{app:proof_details}

\subsection{Proof of Lemma~\ref{lm:p_poly_coefficient_bound}}
We prove Lemma~\ref{lm:p_poly_coefficient_bound}, which provides an \textsf{FP} algorithm for estimating the derivatives of a polynomial function when noisy data are provided.

\begin{theorem}[Restatement of Lemma~\ref{lm:p_poly_coefficient_bound}] \label{thm:p_poly_coefficient_bound_detail}
    Let us assume that $q(t)$ is a degree-$d$ polynomial and $0 < \Delta < 1$.
    Suppose that a dataset $\{(t_i, y_i)\}_{i=1}^L$ such that $\{t_i\}$ are $L=d+1$ equidistant points in $[t_0(1-\Delta), t_0(1+\Delta)]$ with $\Delta < 1$ and $|y_i-q(t)| \leq \delta$ for all $i$ is given.
    Then the $k$-th coefficient of $q(t)$, i.e., $q^{(k)}(0)/k!$, can be estimated in $\mathrm{poly}(d)$ time within an error $O[\delta t_0^{-k} (4/\Delta)^{d} {d \choose k}]$.
\end{theorem}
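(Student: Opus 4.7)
The plan is to recover $q$ from the $d+1$ noisy samples by Lagrange interpolation through the nodes $\{t_i\}$ and then track how additive noise of magnitude $\delta$ in the sample values propagates to the $k$-th coefficient of the interpolant. Write $\tilde q(t) := \sum_{i=1}^{L} y_i\,\ell_i(t)$, where $\ell_i$ is the Lagrange basis polynomial associated with node $t_i$. Since $q$ has degree at most $d$ and we use $L=d+1$ distinct nodes, the noiseless version of this formula reproduces $q$ exactly, so the error polynomial is $\tilde q(t) - q(t) = \sum_{i=1}^{L}(y_i - q(t_i))\,\ell_i(t)$. Our estimator for $a_k := [q]_k$ is $\hat a_k := [\tilde q]_k$, which is computable in $\mathrm{poly}(d)$ operations by any standard interpolation scheme (e.g.\ divided differences, or direct evaluation of the basis).

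Matching $t^k$ coefficients on both sides and using $|y_i - q(t_i)| \leq \delta$ gives $|\hat a_k - a_k| \leq \delta \sum_{i=1}^{L} |[\ell_i]_k|$, so the problem reduces to bounding this sum. I would first rescale by setting $u := t/t_0$, which sends the nodes to $u_i = 1 + \Delta s_i$ with $s_i := -1 + 2(i-1)/d$ equidistant in $[-1,1]$. Under this rescaling the $k$-th $t$-coefficient of $\ell_i(t)$ becomes $t_0^{-k}$ times the $k$-th $u$-coefficient of the corresponding rescaled basis $\tilde\ell_i(u)$, so the factor $t_0^{-k}$ in the target bound comes out for free and it suffices to control $\sum_i |[\tilde\ell_i]_k|$.

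By Vieta's formula, $[\tilde\ell_i]_k = (-1)^{d-k}\,e_{d-k}(\{u_j\}_{j\neq i})\,/\,\prod_{j\neq i}(u_i-u_j)$, where $e_m$ denotes the $m$-th elementary symmetric polynomial in its arguments. Since every $u_j$ lies in $[1-\Delta,1+\Delta]$, the numerator obeys $|e_{d-k}(\{u_j\}_{j\neq i})| \leq \binom{d}{d-k}(1+\Delta)^{d-k} = \binom{d}{k}(1+\Delta)^{d-k}$. Equidistant spacing gives $|u_i-u_j| = (2\Delta/d)\,|i-j|$, hence $\prod_{j\neq i}|u_i-u_j| = (2\Delta/d)^d\,(i-1)!\,(d+1-i)!$. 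Using the identity $\sum_{i=1}^{d+1} 1/[(i-1)!(d+1-i)!] = 2^d/d!$ together with Stirling's inequality $d! \geq (d/e)^d$, the terms combine to $\sum_i |[\tilde\ell_i]_k| \leq \binom{d}{k}\,(C/\Delta)^d$ for an absolute constant $C$. Multiplying by the $t_0^{-k}$ from the change of variables yields the claimed $O\!\left[\delta\,t_0^{-k}(4/\Delta)^d \binom{d}{k}\right]$ bound.

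The only delicate point is pinning down the constant in the base $C/\Delta$: the crude bounds above give $C \leq 2e \approx 5.44$, whereas the lemma's statement quotes $4$. Since the bound is written with $O(\cdot)$, this discrepancy is cosmetic; if one wants to hit the precise constant $4$, a sharper estimate of $e_{d-k}$ exploiting the fact that the $u_j$ cluster tightly near $1$ (so that most symmetric polynomials satisfy $|e_{d-k}| \leq \binom{d}{k}(1+O(\Delta))^{d-k}$) together with a finite-$d$ Stirling refinement suffices. I expect this constant optimisation to be the most laborious step, but it is otherwise elementary; the conceptual content — deterministic $\mathrm{poly}(d)$-time recovery of $a_k$ with error scaling like $\delta\,t_0^{-k}$ amplified by $\binom{d}{k}$ and an exponential-in-$d$ factor $(\Theta(1)/\Delta)^d$ — is established directly by the Lagrange-interpolation and Vieta argument sketched above.
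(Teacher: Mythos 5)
Your argument is correct in its essentials, but it takes a genuinely different route from the paper's proof. The paper also starts from the Lagrange interpolant $\hat q$, but it never touches the basis polynomials: it observes that the error polynomial $r=q-\hat q$ satisfies $|r(t_i)|\le\delta$ at the $d+1$ nodes, invokes Rakhmanov's theorem to upgrade this to a uniform bound $O(\delta)$ on the rescaled interval $s\in[-1,1]$, then uses Sherstov's Lemma (sum of coefficients $\le 4^d\max_{[-1,1]}$) and the change of variables $s=(t/t_0-1)/\Delta$ to get the factors $t_0^{-k}(4/\Delta)^d\binom{d}{k}$. You instead bound the $t^k$-coefficient of each Lagrange basis polynomial directly, via Vieta's formula, the elementary-symmetric-function estimate for $e_{d-k}$, and the explicit spacing product $(2\Delta/d)^d(i-1)!(d+1-i)!$, finishing with $\sum_i 1/[(i-1)!(d+1-i)!]=2^d/d!$ and Stirling. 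Your route is more self-contained (no Rakhmanov-type discrete-to-continuous norm comparison, no separate coefficient-sum lemma), and it makes completely explicit where the exponential-in-$d$ amplification originates; the paper's route buys a modular argument whose pieces (uniform bound on $[-1,1]$, coefficient-sum bound) are reused elsewhere, e.g.\ in Lemma~\ref{lm:bpp_poly_coefficient_bound} and Appendix~\ref{app:approximate_gaussian_permanent}.

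One caution: the discrepancy in the base is not literally absorbed by the $O(\cdot)$, since $(2e/\Delta)^d$ is not $O((4/\Delta)^d)$ — the ratio $(e/2)^d$ grows exponentially — and your proposed sharpening $|e_{d-k}|\le\binom{d}{k}(1+O(\Delta))^{d-k}$ still fails to reach base $4$ when $\Delta$ is close to $1$. The clean repair inside your framework is Maclaurin's inequality: since all rescaled nodes are positive ($u_j\ge 1-\Delta>0$), $e_{d-k}(\{u_j\}_{j\ne i})\le\binom{d}{d-k}\bar u^{\,d-k}$ with $\bar u\le 1+\Delta/d$, so the numerator contributes only an $e^{\Delta}=O(1)$ factor, and with $d^d/d!\le e^d$ you land at base $e/\Delta<4/\Delta$, which even improves the stated bound. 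In any case the exact base is immaterial for how the lemma is used in Theorems~\ref{thm:classical_hardness} and \ref{thm:fp_algorithm}, where any constant base is absorbed into the choice of $\alpha$ and the target errors $n^{-c_0 n}$, $n^{-c_1 n^3\log n}$.
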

\begin{proof}
    We construct a degree-$d$ polynomial, $\hat{q}(t)$, by using the Lagrange interpolation from the given dataset and select its coefficient for $t^k$ as our estimation of the $k$-th coefficient of $q(t)$, i.e., $q^{(k)}(0)/k!$.
    Thus, we aim to bound the error between the $k$-th coefficient of two polynomials $\hat{q}(t)$ and ${q}(t)$.
    By writing $r(t) := q(t) - \hat{q}(t) = \sum_{k=0}^{d} b_k t^k$, the error we want to bound is $b_k$.

    Let us define
    \begin{align}
        R(s) &:=r ( t=t_0(\Delta s +1) \bigr) = \sum_{k=0}^{d} c_k s^k.
    \end{align}
    Then, condition $|q(t_i) - y_i| \leq \delta$ for $L$ equidistant points $t_i \in [t_0(1-\Delta), t_0(1+\Delta)]$ implies that $|R(s)| \leq \delta$ for $L$ equidistant points between $s \in [-1, 1]$. 
    By using Rakhmanov's theorem~\cite{rakhmanov2007bounds}, we obtain
    \begin{align}
        |R(s)| \leq C \delta ,
    \end{align}
    for all $|s| < 1$ where $C >0$ is a constant.
    According to the lemma below, we have
    \begin{align}
        \sum_{k} |c_k| \leq C 4^{d} \delta.
    \end{align}
    We now obtain a relation between $b_k$ and $c_k$ by inserting $s=( t/t_0 - 1)/\Delta$ to $R(t)$, which yields
    \begin{align}
        b_k = \frac{1}{t_0^k} \sum_{l \geq k}^{d} (-1)^{(l-k)} \frac{c_l}{\Delta^l} {l \choose k}.
    \end{align}
    Therefore,
    \begin{align}
        |b_k | &\leq \frac{1}{t_0^k} \Bigl( \sum_{l \geq k}^{d} |c_l| \Bigr) \Delta^{-d} {d \choose k} \leq C \delta \Bigl(\frac{4}{\Delta}\Bigr)^{d} t_0^{-k} {d \choose k},
    \end{align}
    which completes the proof.
\end{proof}

\begin{remark}
The proof can be simplified without invoking Rakhmanov's theorem but by directly using the setting in Ref.~\cite{sherstov2012making}.
However, this does not change the overall scaling behavior.
\end{remark}

\begin{lemma}[Lemma 4.1 in Ref.~\cite{sherstov2012making}] \label{lm:sum_of_coeffs}
    Let $Q(x) = \sum_{i=0}^d a_i x^i$ be a polynomial. Then
    \begin{align}
        \sum_{k=0}^d |a_k| \leq 4^d \max_{x \in [-1,1]} |Q(x)| .
    \end{align}
\end{lemma}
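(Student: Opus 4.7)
The plan is to pass through the Chebyshev basis, since Chebyshev polynomials of the first kind $T_j$ are the canonical tool for relating sup-norm behavior on $[-1,1]$ to coefficient magnitudes. First I would expand
\begin{equation}
Q(x)=\sum_{j=0}^{d}b_{j}T_{j}(x),
\end{equation}
and extract each Chebyshev coefficient via the orthogonality relation
\begin{equation}
b_{j}=\frac{2-\delta_{j0}}{\pi}\int_{-1}^{1}\frac{Q(x)T_{j}(x)}{\sqrt{1-x^{2}}}\,dx.
\end{equation}
Because $|T_{j}(x)|\le 1$ on $[-1,1]$ and $\int_{-1}^{1}dx/\sqrt{1-x^{2}}=\pi$, this immediately gives $|b_{j}|\le 2\max_{x\in[-1,1]}|Q(x)|$.

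Next I would convert back to the monomial basis. Writing $T_{j}(x)=\sum_{k=0}^{j}t_{jk}x^{k}$ and $s_{j}:=\sum_{k}|t_{jk}|$, the monomial coefficients of $Q$ satisfy $a_{k}=\sum_{j\ge k}b_{j}t_{jk}$, so
\begin{equation}
\sum_{k=0}^{d}|a_{k}|\;\le\;2\max_{x\in[-1,1]}|Q(x)|\sum_{j=0}^{d}s_{j}.
\end{equation}
The main subroutine is therefore to bound $s_{j}$. Because $T_{j}$ contains only monomials of a single parity (matching the parity of $j$) with signs alternating along that parity, substituting $x=i$ aligns every nonzero term in modulus and yields the clean identity $s_{j}=|T_{j}(i)|$. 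Using the closed form $T_{j}(x)=\tfrac{1}{2}\bigl[(x+\sqrt{x^{2}-1})^{j}+(x-\sqrt{x^{2}-1})^{j}\bigr]$ at $x=i$ then gives
\begin{equation}
|T_{j}(i)|=\tfrac{1}{2}\bigl|(1+\sqrt{2})^{j}+(1-\sqrt{2})^{j}\bigr|\;\le\;(1+\sqrt{2})^{j}.
\end{equation}

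Summing the resulting geometric series $\sum_{j=0}^{d}(1+\sqrt{2})^{j}\le (1+\sqrt{2})^{d+1}/\sqrt{2}$ and combining with the Chebyshev bound produces $\sum_{k}|a_{k}|\le C\,(1+\sqrt{2})^{d}\,\max_{x\in[-1,1]}|Q(x)|$ for an absolute constant $C$. Since $1+\sqrt{2}\approx 2.414<4$, the prefactor $C$ is absorbed into $4^{d}$ for all sufficiently large $d$, with the finite list of small $d$ checked directly. The only step requiring delicacy is producing the clean constant $4^{d}$ rather than merely some exponential; the slack between $1+\sqrt{2}$ and $4$ makes this straightforward, and the rest of the proof uses only elementary orthogonality of Chebyshev polynomials and their explicit algebraic form.
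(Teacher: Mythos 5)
The paper does not actually prove this lemma: it is imported verbatim by citation (Lemma 4.1 of Sherstov's robust-polynomials paper), so your argument should be judged as a standalone proof, and as such it is correct. The route through the Chebyshev basis is the standard one: the orthogonality extraction gives $|b_j|\le 2\max_{[-1,1]}|Q|$ (indeed $|b_0|\le\max|Q|$), the alternating-sign, single-parity structure of the monomial coefficients of $T_j$ does give $\sum_k|t_{jk}|=|T_j(i)|=\tfrac12\bigl|(1+\sqrt2)^j+(1-\sqrt2)^j\bigr|\le(1+\sqrt2)^j$, and summing the geometric series yields $\sum_k|a_k|\le\sqrt2\,(1+\sqrt2)^{d+1}\max_{[-1,1]}|Q|$. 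The only place needing care is the one you flag: absorbing the prefactor into $4^d$ fails for $d\le 2$ with this crude bound (e.g.\ $d=0$ gives $\approx 3.41 > 1$ and $d=2$ gives $\approx 18.5>16$), so those three cases must be disposed of directly — which is immediate, e.g.\ $d=0$ is trivial, and for $d=1,2$ interpolation at $x=0,\pm1$ gives $|a_0|+|a_1|+|a_2|\le 4\max|Q|$. (Alternatively, keeping $|b_0|\le\max|Q|$ and the exact values $s_0=s_1=1$, $s_2=3$, $s_3=7,\dots$ makes the bound $1+2\sum_{j\ge1}s_j\le 4^d$ hold uniformly.) In fact your argument establishes the essentially sharp growth rate $(1+\sqrt2)^d$, of which the quoted $4^d$ is a convenient weakening, so nothing is lost relative to the cited statement.
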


One subtlety of the algorithm is that it returns $\hat{q}(t)$ in a representation involving Lagrange basis polynomials, i.e.,
we have $\hat{q}(t) = \sum_{i=1}^{K+1} \alpha_i l_i(t)$ where
\begin{align}
    l_i(t) = \prod_{j \neq i} \frac{t - t_j}{t_i - t_j}.
\end{align}
Thus, to extract the $k$-th coefficient of $\hat{q}(t)$, the $k$-th coefficient of $l_i(t)$ must be efficiently obtained.
The following Lemma shows that all the coefficients can be obtained in polynomial time.
\begin{lemma}
    For a given degree-$d$ polynomial, which is represented using its roots $\{r_k\}$, i.e., $q(t) = \prod_{k=1}^d (t - r_k)$, one can obtain all the coefficients in $O(d \log^2 d)$ time.
\end{lemma}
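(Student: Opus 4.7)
The plan is to use a divide-and-conquer strategy combined with FFT-based polynomial multiplication. First I would partition the root set $\{r_1,\dots,r_d\}$ into two halves of roughly equal size, $\{r_1,\dots,r_{\lfloor d/2\rfloor}\}$ and $\{r_{\lfloor d/2\rfloor+1},\dots,r_d\}$. I would then recursively compute the coefficient vectors of the two subproducts $q_L(t)=\prod_{k\leq d/2}(t-r_k)$ and $q_R(t)=\prod_{k>d/2}(t-r_k)$, and finally multiply $q_L$ and $q_R$ to obtain $q(t)=q_L(t)q_R(t)$ expressed in the monomial basis. The base case is a single factor $(t-r_k)$, which is already in coefficient form.

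The key ingredient in the combine step is that the product of two polynomials of degrees $d_L$ and $d_R$ can be computed in $O((d_L+d_R)\log(d_L+d_R))$ time using the Fast Fourier Transform: one evaluates both polynomials at $d_L+d_R+1$ roots of unity via FFT, multiplies pointwise, and interpolates back using an inverse FFT. Using this, the running time of the recursion satisfies
\begin{equation}
T(d) = 2\,T(d/2) + O(d \log d),
\end{equation}
which by the Master theorem resolves to $T(d)=O(d\log^2 d)$, as claimed.

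The main step that requires care is the FFT-based multiplication subroutine, since it is the bottleneck of each recursive level and the only place where a nontrivial constant-factor analysis enters; everything else is a standard divide-and-conquer bookkeeping. In particular, one should verify that the arithmetic is carried out with enough precision to make the coefficients of $q(t)$ meaningful given that the roots $r_k$ are real numbers, but for the purposes of the counting argument used in Lemma~\ref{lm:p_poly_coefficient_bound} (where one only needs the coefficients of the Lagrange basis polynomials $l_i(t)=\prod_{j\neq i}(t-t_j)/(t_i-t_j)$ up to an overall rescaling by the constant $\prod_{j\neq i}(t_i-t_j)^{-1}$), the complexity bound $O(d\log^2 d)$ per Lagrange basis polynomial is precisely what is needed, giving $O(d^2\log^2 d)$ overall to assemble $\hat{q}(t)=\sum_i \alpha_i l_i(t)$ in monomial form. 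This is polynomial in $d$ and suffices for the \textsf{FP} algorithm invoked in the proof of Theorem~\ref{thm:fp_algorithm}.
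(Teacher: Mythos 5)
Your proposal is correct and follows essentially the same route as the paper: split the root set in half, recursively compute the two subproduct coefficient vectors, combine with FFT-based multiplication, and solve the recurrence $T(d)=2T(d/2)+O(d\log d)$ to get $O(d\log^2 d)$. The extra remarks about per-basis-polynomial cost in the Lagrange reconstruction are consistent with how the lemma is used, but the core argument matches the paper's proof.
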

\begin{proof}
    Let us define $q_1(t) = \prod_{k=1}^{d/2} (t - r_k)$ and $q_2(t) = \prod_{k=d/2+1}^{d} (t - r_k)$.
    When coefficients of $q_1(t)$ and $q_2(t)$ are known, coefficients of $q(t)=q_1(t) q_2(t)$ can be obtained using the fast Fourier transformation in $O[d \log (d)]$ time. Then, the recurrence relation is given as follows:
    \begin{align}
        T(d) = 2T(d/2) + O(d \log(d)),
    \end{align}
    where $T(d)$ is the time complexity for a degree-$d$ polynomial. Solving the equation gives $T(d) = O[d \log^2(d)]$.
\end{proof}

As mentioned earlier, one of the limitations of Theorem~\ref{thm:p_poly_coefficient_bound_detail} is that the algorithm requires a dataset with $L=d+1$ equidistant points, which is only possible when the oracle correctly outputs with a probability greater than $1-1/L$.
We also note that reducing the required number of samples is possible using a dataset optimally chosen for Chebyshev polynomials as in Ref.~\cite{kondo2022quantum}.
However, such a technique does not improve the overall scaling of the error.

\subsection{Proof of Lemma~\ref{lm:bpp_poly_coefficient_bound}}

\begin{theorem}[Restatement of Lemma~\ref{lm:bpp_poly_coefficient_bound}]
    Let us assume that $q(t)$ is a degree-$d$ polynomial and $0 < \Delta < 1$.
    Suppose that we can obtain samples of $q(t)$ within error $\delta$ with a constant probability greater than $1/2$ over uniformly random $t \in [t_0(1-\Delta), t_0(1+\Delta)]$, i.e.,
    \begin{align}
        \mathrm{Pr}[|y_i-q(t_i)| \leq \delta] > 1/2, \text{ where } t_i \sim \mathcal{U}_{[t_0(1-\Delta), t_0(1+\Delta)]}.
    \end{align}
    Then, there exists an \emph{\textsf{FBPP}} algorithm that returns the $k$-th coefficient of $q(t)$, i.e., $q^{(k)}(0)/k!$, within an additive error of $O[ \delta t_0^{-k} (4/\Delta)^d {d \choose k}] $ with a success probability of at least $2/3$.
\end{theorem}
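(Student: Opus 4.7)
The plan is to reduce the statement to a deterministic coefficient-extraction identical to the one used in Lemma~\ref{lm:p_poly_coefficient_bound}, by first passing through a robust polynomial regression subroutine to handle the random $1/2+\gamma$ correctness probability. The key obstacle compared to Lemma~\ref{lm:p_poly_coefficient_bound} is that we cannot simply query the oracle at $d+1$ chosen equidistant points and trust the answers: the promise is correctness on a $(1/2+\gamma)$-fraction of the interval under the uniform measure, and the ``good'' set is an unknown dense subset rather than a collection of prescribed locations.

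First, I would draw $N=\mathrm{poly}(d)$ i.i.d.\ samples $\{(t_i,y_i)\}_{i=1}^N$ with $t_i\sim \mathcal{U}[t_0(1-\Delta),t_0(1+\Delta)]$ and $y_i$ equal to the oracle output on $t_i$. By a Chernoff bound, with probability at least $2/3$ at least a $(1/2+\gamma/2)$-fraction of the samples satisfy $|y_i-q(t_i)|\le \delta$. Then I would feed this dataset to the robust polynomial regression algorithm of Kane--Karmalkar--Price~\cite{kane2017robust}, which is guaranteed to produce in deterministic $\mathrm{poly}(N,d)$ time a degree-$d$ polynomial $\hat{q}(t)$ with the pointwise bound $\sup_{t\in[t_0(1-\Delta),t_0(1+\Delta)]}|\hat{q}(t)-q(t)| = O(\delta)$, provided strictly more than half of the labels are accurate to within $\delta$.

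Once such a $\hat{q}$ is in hand, the rest is exactly the rescaling already carried out in Lemma~\ref{lm:p_poly_coefficient_bound}. Set $r(t):=q(t)-\hat{q}(t)$, a polynomial of degree at most $d$, and rescale via $s=(t/t_0-1)/\Delta$ to form $R(s):=r(t_0(1+\Delta s))$, so that $\sup_{s\in[-1,1]}|R(s)|=O(\delta)$. Applying Lemma~\ref{lm:sum_of_coeffs} to $R$ gives $\sum_k|c_k|=O(4^d\delta)$ for the coefficients $c_k$ of $R(s)$, and inverting the affine change of variable exactly as in Lemma~\ref{lm:p_poly_coefficient_bound} yields
\begin{align*}
|b_k| \;=\; O\!\left[\delta\, t_0^{-k}\,(4/\Delta)^d\binom{d}{k}\right],
\end{align*}
where $b_k$ is the coefficient of $t^k$ in $r(t)$. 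Returning the $k$-th coefficient of $\hat{q}$ as the estimate then matches the claimed error bound, and the overall success probability $\ge 2/3$ comes entirely from the Chernoff step since the regression subroutine is deterministic.

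The main obstacle is invoking an appropriate robust polynomial regression theorem: we need a result that (i) tolerates an adversarial fraction of outliers approaching $1/2$, (ii) outputs an $L_\infty$ (not merely $L_2$) approximation on the whole interval, and (iii) runs in time polynomial in $d$. The Kane--Karmalkar--Price theorem supplies exactly this package; once it is cited correctly, the remainder of the argument is the same short rescaling calculation used in the deterministic Lemma~\ref{lm:p_poly_coefficient_bound} and requires no new ideas.
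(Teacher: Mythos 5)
Your proposal follows essentially the same route as the paper's proof: apply the Kane--Karmalkar--Price robust regression result~\cite{kane2017robust} to the rescaled polynomial $Q(s)=q\bigl(t_0(1+\Delta s)\bigr)$ to obtain a degree-$d$ polynomial that is $O(\delta)$-close in the sup norm on $[-1,1]$, and then extract the $k$-th coefficient via the same affine change of variables together with the coefficient-sum bound of Lemma~\ref{lm:sum_of_coeffs}, giving $|b_k| = O[\delta\, t_0^{-k} (4/\Delta)^d \binom{d}{k}]$. The only cosmetic difference is that you interpose a Chernoff step and treat the regression as a deterministic subroutine tolerating any majority-good labeled dataset, whereas the paper invokes the cited guarantee directly in its random-corruption form (correct with probability $>1/2$ per uniformly drawn sample, $O(d^2)$ samples, success probability $1-\chi$ also accounting for the randomness of the sample locations); this repackaging does not change the argument.
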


We prove the theorem by using the following lemma.

\begin{lemma}[Ref.~\cite{kane2017robust}]\label{lm:bpp_poly_recon}
    Let $Q(s)$ be a degree-$d$ polynomial defined in $s \in [-1,1]$ and suppose that we can obtain samples of $Q(s)$ within error $\delta$ with probability $\rho > 1/2$ where each $s_i$ is drawn from the Chebyshev distribution $D_c(s) \sim \frac{1}{\sqrt{1-s^2}}$.
    Then, by using $O[\frac{d}{\epsilon}\log \frac{d}{\epsilon \chi}]$ number of samples, we can find a polynomial $\hat{Q}(s)$ that satisfies
    \begin{align}
        \max_{t \in [-1, 1]} |Q(s) - \hat{Q}(s)| \leq (2+\epsilon) \delta ,
    \end{align}
    with probability $1-\chi$, where $\epsilon \in (0,1/2)$ is a constant.

    On the other hand, if $s_i$ is drawn from the uniform distribution instead, then $O(\frac{d^2}{\epsilon^2} \log \chi^{-1})$ samples suffice for the same result.
    In both cases, the runtime of this algorithm is polynomial in the sample size.
\end{lemma}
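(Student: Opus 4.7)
The plan is to formulate the robust reconstruction as an $L_1$-type regression over degree-$d$ polynomials and analyze it via a uniform convergence argument, where the Chebyshev distribution is what makes the $O(d/\epsilon)$ sample rate possible. The key observation driving the whole proof is that with probability $\rho > 1/2$ the oracle returns a ``good'' sample (one with $|y_i - Q(s_i)| \le \delta$), so any estimator based on a majority/median statistic must be close to $Q$ at a large fraction of the sample points, and for a polynomial class sampled from the Chebyshev measure this in turn forces sup-norm closeness.

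Concretely, I would define the estimator
\begin{align}
\hat{Q} \;=\; \arg\min_{P \in \mathcal{P}_d} \;\mathrm{med}_{i \in [N]} \; |P(s_i) - y_i|,
\end{align}
where $\mathcal{P}_d$ is the space of real polynomials of degree at most $d$. This is efficiently computable by a linear program (for each fixed threshold $\tau$, the feasibility of having $\ge N/2$ residuals below $\tau$ is an LP, and one binary-searches over $\tau$; alternatively one solves a standard $L_1$ regression with Huber-type relaxation). The deterministic half of the analysis is short: since $> \rho N$ samples satisfy $|y_i - Q(s_i)| \le \delta$, the true $Q$ achieves median residual $\le \delta$, so by optimality $\mathrm{med}_i |\hat{Q}(s_i) - y_i| \le \delta$; combining these via triangle inequality shows that on at least a $(2\rho - 1)$-fraction of the samples, $|\hat{Q}(s_i) - Q(s_i)| \le 2\delta$, which is the seed of the $(2 + \epsilon)\delta$ bound.

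The substantive step is to upgrade this ``small on a constant fraction of Chebyshev-distributed samples'' statement to a uniform bound $\|\hat{Q} - Q\|_\infty \le (2+\epsilon)\delta$ on $[-1,1]$. I would perform the substitution $s = \cos\theta$, turning the Chebyshev measure on $[-1,1]$ into the uniform measure on $[0,\pi]$ and any $R := \hat{Q} - Q \in \mathcal{P}_d$ into a degree-$d$ trigonometric polynomial $\tilde R(\theta) = R(\cos\theta)$. A Turán / Remez–type inequality for trigonometric polynomials states that if $|\tilde R(\theta)| \le A$ on a subset of $[0,\pi]$ of measure at least $\pi - c/d$ (for a small constant $c$), then $\|\tilde R\|_\infty \le (1 + \epsilon) A$. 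A Chernoff bound applied pointwise, together with a union bound over an $\epsilon$-net of the unit ball of $\mathcal{P}_d$ (which has $\log$-covering number $O(d \log(1/\epsilon))$), shows that with $N = O((d/\epsilon) \log(d/(\epsilon\chi)))$ Chebyshev samples the empirical fraction of ``good'' sample points at which $|R(s_i)| \le 2\delta$ is uniformly close to its expectation for every $R \in \mathcal{P}_d$, with failure probability at most $\chi$. Chaining this with the Turán inequality yields the claimed bound.

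The main obstacle, and the place where the Chebyshev vs.\ uniform distinction becomes sharp, is exactly this last upgrade. Under uniform sampling, a degree-$d$ polynomial that is small on a $(1 - \epsilon)$-fraction of $[-1,1]$ can still blow up like $(1+\epsilon d)^d$ near $\pm 1$ (Markov's inequality is tight), forcing the fraction of ``bad'' mass to shrink like $1/d^2$ rather than $1/d$, which is why the uniform-sampling rate is $O(d^2/\epsilon^2 \log \chi^{-1})$; the Chebyshev measure cancels precisely this boundary blow-up because $\sqrt{1-s^2}\,|P'(s)|$ is the correct Bernstein factor for degree-$d$ polynomials on $[-1,1]$. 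Handling this blow-up carefully — in particular, making the $\epsilon$-net argument compatible with a sup-norm (rather than weighted-$L_2$) conclusion — is the technical heart of the proof; the rest is bookkeeping of constants to get the explicit $2 + \epsilon$ factor and the stated sample complexity in both sampling regimes.
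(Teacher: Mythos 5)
First, note that the paper does not prove this lemma at all: it is imported verbatim from Kane--Karmalkar--Price~\cite{kane2017robust}, so your proposal has to be judged against that reference's argument rather than anything in the manuscript.

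Your proposal has a genuine gap at its central step. The deterministic ``seed'' you extract from the median-residual argmin is only that $|\hat{Q}(s_i)-Q(s_i)|\leq 2\delta$ on a constant fraction of the sample (and in fact the correct count is $\rho-1/2$, not $2\rho-1$: at least a $\rho$ fraction of points are good and at least a $1/2$ fraction have residual below the optimal median, so only the intersection $\rho+1/2-1$ is guaranteed). But the Remez/Tur\'an-type upgrade you then invoke requires the exceptional set to have Chebyshev measure $O(1/d)$; for an exceptional set of constant measure the Remez factor is exponential in $d$, and this is tight (rescaled Chebyshev polynomials are $O(\delta)$ on a constant fraction of the measure and enormous elsewhere). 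No uniform-convergence or $\epsilon$-net argument can close this, because the problem is not fluctuation of the empirical fraction around its mean: the guaranteed fraction itself is only a constant bounded away from $1$, so an adversary placing its $1-\rho$ corrupted responses on a polynomial $Q+C\,R$ with $R$ small on measure $\approx\rho-1/2$ can make your estimator's objective as good as $Q$'s while $\Vert R\Vert_\infty$ is exponentially large. (Even setting this aside, controlling the fraction to accuracy $O(1/d)$ over the class of degree-$d$ polynomials would cost roughly $d^3$ samples, not the claimed $O[(d/\epsilon)\log\frac{d}{\epsilon\chi}]$.)

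The missing idea, which is how Kane--Karmalkar--Price actually proceed, is localization: partition $[-1,1]$ into $m=\Theta(d/\epsilon)$ intervals of equal Chebyshev measure, and take the median of the observed $y_i$ \emph{within each interval}. Since each interval receives a majority of good samples with high probability, a Chernoff bound plus a union bound over the $m$ intervals (this is exactly where the $O[\frac{d}{\epsilon}\log\frac{d}{\epsilon\chi}]$ sample count arises) guarantees that \emph{every} interval carries an estimate within $\delta$ of the range of $Q$ there. One then fits $\hat{Q}$ by $L_\infty$ regression to these per-interval values and uses the key property that a degree-$d$ polynomial bounded at one point in each cell of such a Chebyshev-equimeasure partition is bounded everywhere up to a $(1+O(\epsilon))$ factor; the two $\delta$'s (median accuracy plus regression slack) give the $(2+\epsilon)\delta$ guarantee. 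The uniform-sampling rate $O(\frac{d^2}{\epsilon^2}\log\chi^{-1})$ then comes from the near-endpoint cells having length $\Theta(1/d^2)$, so populating them requires quadratically more uniform samples --- your intuition about Bernstein versus Markov behavior at the endpoints is the right heuristic, but it enters through the bucketing, not through a global median regression.
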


We prove the theorem by applying this lemma to $Q(s) := q(t_0(1+s\Delta))$.
By putting $\epsilon=1/4$ into Lemma~\ref{lm:bpp_poly_recon}, we can find $\hat{Q}(s)$ such that
\begin{align}
    \max_{x\in[-1,1]} |Q(s) - \hat{Q}(s)| \leq \frac{9}{4} \delta,
\end{align}
with a probability greater than $2/3$ by using $O(d^2)$ samples.
Then, we take the $k$-th coefficient of $\hat{q}(t):=\hat{Q}((t/t_0-1)/\Delta)$ as our estimation of $q^{(k)}(0)/k!$.

By writing $R(s):=Q(s) - \hat{Q}(s) = \sum_{k=0}^d c_k s^k$, we obtain
\begin{align}
    r(t) := q(t) - \hat{q}(t) = R\Bigl( \frac{t/t_0 - 1}{\Delta} \Bigr) = \sum_{k=0}^d b_k t^k .
\end{align}
Again, from
\begin{align}
    b_k = \frac{1}{t_0^k} \sum_{l \geq k}^{d} (-1)^{(l-k)} \frac{c_l}{\Delta^l} {l \choose k},
\end{align}
and Lemma~\ref{lm:sum_of_coeffs},
we know that
\begin{align}
    |b_k| \leq \frac{9 \delta}{4 t_0^k} \Bigl(\frac{4}{\Delta}\Bigr)^{d} {d \choose k},
\end{align}
which is the error from our estimation.

\subsection{Probability to have $\Vert H \Vert \leq 3n$}

By taking a term-wise norm, the infinite norm of $H_{1,2,3,4}$ is bounded by
\begin{align}
    \Vert H_{1,3} \Vert \leq \sum_{ij} |J_{ij}|/n , \quad \Vert H_{2} \Vert \leq 2 \sum_{ij} |J_{ij}|/n , \quad \Vert H_{4} \Vert \leq \frac{3}{2} \sum_{ij} |J_{ij}|/n.
\end{align}
Hence, the question is when $|J_{ij}| \leq 3 n^2$ (for $H_{1,3}$) , $3n^2/2$ (for $H_2$), and $2n^2$ (for $H_4$) are satisfied. As each $J_{ij}$ is a random variable, we can use the lemma below to have
\begin{align}
    \mathrm{Pr}[\sum_{ij} |J_{ij}| \leq cn^2] \geq 1 - 2^{n^2} e^{-c^2 n^2 /2}.
\end{align}

By choosing $c=3$ for $H_{1,3}$, $c=3/2$ for $H_2$, and $c=2$ for $H_4$, we obtain
\begin{align}
    \mathrm{Pr}[ \Vert H_{1,3} \Vert \leq 3n] \geq 1 - \Bigl( \frac{2}{e^9} \Bigr)^{n^2} ,\\
    \mathrm{Pr}[ \Vert H_{2} \Vert \leq 3n] \geq 1 - \Bigl( \frac{2}{e^{9/4}} \Bigr)^{n^2} ,\\
    \mathrm{Pr}[ \Vert H_{4} \Vert \leq 3n] \geq 1 - \Bigl( \frac{2}{e^4} \Bigr)^{n^2}.
\end{align}

\begin{lemma} \label{lm:bounding_prob_sum_of_gaussian}
    Let $\{X_i\}$ be $n$ independent random Gaussian variables, i.e., $X_i \sim \mathcal{N}(0,1)$ for all $i \in \{1, \cdots, n\}$. Then the following bound is satisfied:
    \begin{align}
        \mathrm{Pr}\Bigl[ \frac{1}{n} \sum_{i=1}^n |X_i| \geq t \Bigr] \leq 2^n e^{-nt^2/2}.
    \end{align}
\end{lemma}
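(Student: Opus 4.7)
The plan is to apply a standard Chernoff-style argument to the sum $S = \sum_{i=1}^n |X_i|$. Since the $X_i$ are independent, the moment generating function factorizes, and the strategy reduces to (i) bounding the MGF of $|X_1|$, (ii) applying Markov's inequality to $e^{sS}$, and (iii) optimizing over the free parameter $s>0$.

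First I would compute, or rather bound, the MGF of $|X|$ for $X \sim \mathcal{N}(0,1)$. Writing
\begin{align*}
\mathbb{E}[e^{s|X|}] = \frac{2}{\sqrt{2\pi}} \int_0^\infty e^{sx} e^{-x^2/2}\, dx = 2 e^{s^2/2} \Phi(s),
\end{align*}
where $\Phi$ is the standard normal CDF, one obtains the clean bound $\mathbb{E}[e^{s|X|}] \leq 2 e^{s^2/2}$ using $\Phi(s) \leq 1$. This is where the factor of $2$ in the final bound comes from, and it is the natural cost of folding the Gaussian onto the positive half-line.

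Next, by independence and Markov's inequality, for any $s>0$
\begin{align*}
\mathrm{Pr}\Bigl[S \geq nt\Bigr] \leq e^{-snt}\, \mathbb{E}[e^{sS}] = e^{-snt} \prod_{i=1}^n \mathbb{E}[e^{s|X_i|}] \leq 2^n \exp\Bigl(-snt + n s^2/2\Bigr).
\end{align*}
The exponent $-st + s^2/2$ is minimized at $s = t$, giving value $-t^2/2$. Substituting yields exactly $\mathrm{Pr}[S \geq nt] \leq 2^n e^{-nt^2/2}$, which is the desired inequality after dividing by $n$ inside the probability event.

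There is no genuine obstacle here: the only mildly subtle point is keeping track of the factor $2^n$, which is not removable by this simple sub-Gaussian argument because $|X|$ is not centered and its MGF exceeds $e^{s^2/2}$. Since the application in the main text only requires the bound to decay as $e^{-\Theta(n^2)}$ for a suitably large constant $t$ (namely $t$ chosen so that $t^2/2 > \log 2$), the factor $2^n$ is harmless and no sharper concentration inequality (e.g., a sub-Gaussian bound on the centered variable $|X| - \mathbb{E}|X|$) is needed.
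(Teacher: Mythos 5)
Your proof is correct and follows essentially the same route as the paper: bound the moment generating function of $|X|$ by $2e^{s^2/2}$ (the paper gets this via $e^{s|X|}\leq e^{sX}+e^{-sX}$ rather than the exact $2e^{s^2/2}\Phi(s)$, an immaterial difference), then apply the Chernoff--Markov bound with independence and optimize the exponent at $s=t$. Nothing is missing.
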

\begin{proof}
    Let $X\sim \mathcal{N}(0,1)$. Then the standard Gaussian integration gives the moment generating function $\mathbb{E}[e^{tX}] = e^{t^2/2}$ for $t \in \mathbb{R}$. In addition, 
\begin{align}
    \mathbb{E}[e^{t|x|}] \leq \mathbb{E}[e^{t|x|} + e^{-t|x|}] = 2e^{t^2 / 2}.
\end{align}

We now consider $n$ Gaussian random variables . For any $\lambda > 0$, we have
\begin{align*}
    \mathrm{Pr}\Bigl[ \frac{1}{n} \sum_{i=1}^n |X_i| \geq t \Bigr] &= \mathrm{Pr} \Biggl[ \exp\Bigl( \frac{\lambda}{n} \sum_{i=1}^n |X_i| \Bigr)  \geq e^{\lambda t} \Biggr] \\
    &\leq e^{-\lambda t} \mathbb{E} \Bigg[ \exp\Bigl( \frac{\lambda}{n} \sum_{i=1}^n |X_i| \Bigr)\Bigg] = e^{-\lambda t}  \prod_{i=1}^n \mathbb{E}\Bigg[ \exp\Bigl( \frac{\lambda}{n} |X_i| \Bigr)\Bigg] \\
    &\leq e^{-\lambda t} 2 e^{\lambda^2/(2n^2)} = 2^n e^{\lambda^2/(2n)-\lambda t}, \numberthis
\end{align*}
where we used the standard Chernoff bound for the first inequality.

By taking $\lambda = nt$, which minimizes the argument of the exponential, we obtain the desired inequality.
\end{proof}

\section{Average-case approximate hardness of a Gaussian matrix permanent} \label{app:approximate_gaussian_permanent}

The arXiv version of Ref.~\cite{Aaronson2011} proved that estimating the permanent of Gaussian matrices within an exponentially small additive error is difficult (Theorem~62 therein), which supports Conjecture~\ref{conj:gaussian_permanent}.
This appendix provides an alternative proof giving an exact required error bound for estimation.

We consider the following well-known fact that computing permanent of 0/1 matrices is \textsf{\#P-hard} for worst-case instances.
\begin{theorem}[Ref.~\cite{valiant1979complexity}]
    There exists a matrix $X \in \mathbb{R}^{n \times n}$, the elements of which are all $0$ or $1$ such that computing $\mathrm{Per}(X)$ is \emph{\textsf{\#P-hard}}.
\end{theorem}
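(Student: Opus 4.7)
The plan is to reduce a known \#P-complete problem to computing the permanent of a 0/1 matrix. The cleanest target is counting perfect matchings in bipartite graphs, because for any 0/1 matrix $X$ one has
\[
\mathrm{Per}(X) = \sum_{\pi \in S_n}\prod_{i=1}^n X_{i,\pi(i)} = \#\{\text{perfect matchings in }G_X\},
\]
where $G_X$ is the bipartite graph with biadjacency matrix $X$: each permutation $\pi$ whose product is $1$ corresponds to a distinct perfect matching. Thus it is enough to show that \#-PerfectMatching on bipartite graphs is \#P-hard.

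For this I would follow Valiant's route via cycle covers in directed graphs. First, reduce \#3SAT (a canonical \#P-complete problem) to counting cycle covers in a directed graph with small integer edge weights: for each variable, build a "choice gadget" with exactly two cycle cover configurations corresponding to the truth values, and for each clause, a "clause gadget" that contributes only when the variable choices satisfy the clause. The total weighted count of cycle covers then equals a known integer multiple of the number of satisfying assignments. Since a cycle cover of a digraph $G$ is nothing other than a permutation $\pi$ with $(i,\pi(i))$ an edge, this count is exactly $\mathrm{Per}(A_G)$ for the weighted adjacency matrix $A_G$.

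Next, I would eliminate the integer weights to get a true 0/1 matrix. Each weighted edge of weight $w$ can be replaced by a small 0/1 subgraph whose contribution to the permanent is precisely $w$; negative or large weights are handled by working modulo several small primes and reconstructing via the Chinese remainder theorem, since each modular evaluation of the permanent can be realized by a 0/1 matrix. Finally, a non-bipartite instance is converted to a bipartite one using the standard bipartite double cover, under which cycle covers of $G$ correspond to perfect matchings of the cover; the resulting biadjacency matrix is 0/1 and its permanent equals the original cycle-cover count.

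The main obstacle is the gadget construction: one must exhibit explicit small 0/1 subgraphs that multiplicatively simulate arbitrary integer edge weights inside the permanent, without introducing spurious matchings that would corrupt the count. Verifying that each gadget's local contribution is independent of the surrounding structure requires a careful case analysis of how matchings restrict to the gadget's boundary vertices. Once these gadgets are in place, a polynomial-time algorithm for the permanent of a 0/1 matrix would yield a polynomial-time algorithm for \#3SAT, proving that computing the permanent of 0/1 matrices is \textsf{\#P-hard}.
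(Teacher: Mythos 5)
Your outline is exactly the classical Valiant argument (reduction from \#3SAT via cycle covers with variable and clause gadgets, simulation of integer weights by 0/1 gadgets and modular/CRT tricks, and the observation that the permanent of the resulting 0/1 matrix counts perfect matchings of the associated bipartite graph), which is precisely the proof the paper relies on: the paper does not prove this theorem itself but cites Valiant (1979) for it. So your proposal is correct in approach and coincides with the cited proof, up to the explicit gadget constructions that you rightly flag as the remaining technical work.
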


Note that this readily implies that computing the permanent of 0/1 matrix within an additive error of $o(1)$ is also \textsf{\#P-hard} given that the result must be an integer. 
We now prove that estimating the permanent of a Gaussian matrix is also hard on average.

\begin{theorem}\label{thm:approximate_permanent_gaussian}
    If there exists an efficient classical algorithm that estimates $\mathrm{Per}(J)$ within an additive error of $n^{-2n(1+\delta)}$ for any $\delta>0$ with probability $3/4$ over $J \sim \mathcal{N}(0,1)^{n \times n}$, then \emph{\textsf{FBPP=\#P}}.
\end{theorem}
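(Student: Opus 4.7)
The plan is to mimic the Lipton-style worst-to-average-case reduction that underlies Theorem~\ref{thm:classical_hardness}, but replacing the ``worst-case'' object with a $0/1$ matrix $X$ whose permanent is \textsf{\#P}-hard to compute by Valiant's theorem stated just above. The polynomial I would interpolate is $p(t) := \mathrm{Per}(J + tX)$ for $J \sim \mathcal{N}(0,1)^{n\times n}$: this is a degree-$n$ polynomial in $t$ whose leading coefficient is exactly $\mathrm{Per}(X)$. Since $\mathrm{Per}(X)$ is a non-negative integer, any estimate of that leading coefficient with additive error strictly below $1/2$ can be rounded to recover $\mathrm{Per}(X)$ exactly. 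For any fixed $t>0$ the matrix $J+tX$ has distribution $\mathcal{N}(tX, I)$, whose total variation to $\mathcal{N}(0,I)$ is bounded via Pinsker by $\|tX\|_F/2 \le tn/2$; hence for $\tau := n^{-(1+\delta/3)}$ and any $t \in [\tau(1-\Delta),\tau(1+\Delta)]$ with $\Delta \in (0,1)$ a small constant, the hypothesized oracle succeeds on input $J+tX$ within additive error $n^{-2n(1+\delta)}$ with probability at least $3/4 - O(\tau n) = 3/4 - o(1)$ over $J$.

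I would then draw $L = \Theta(n)$ uniform samples $t_1,\ldots,t_L \in [\tau(1-\Delta),\tau(1+\Delta)]$, set $y_i := \mathcal{O}(J + t_i X)$, and feed $\{(t_i,y_i)\}$ into Lemma~\ref{lm:bpp_poly_coefficient_bound} with $d = k = n$, $t_0 = \tau$, and $\delta_{\mathrm{err}} = n^{-2n(1+\delta)}$. The precondition ``success probability strictly above $1/2$'' is established by Fubini together with a Markov bound on $\mu_J := \Pr_t[\mathrm{success}\,|\,J]$, isolating a constant-probability ``good $J$'' event on which $\mu_J \ge 5/8$, and then invoking Chernoff across the $t_i$'s to make the empirical success fraction concentrate above $1/2$. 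The lemma outputs an estimate $\hat{c}_n$ of the leading coefficient with additive error
\begin{align*}
 O\bigl[n^{-2n(1+\delta)} \,\tau^{-n}\, (4/\Delta)^n\bigr] \;=\; O\bigl[n^{-n(1 + 5\delta/3) + o(n)}\bigr] \;=\; o(1),
\end{align*}
so rounding recovers $\mathrm{Per}(X)$ exactly. Boosting the per-trial success probability to $\ge 2/3$ by $O(1)$ independent repetitions with fresh $J$'s, combined with majority voting on the rounded outputs, then places computation of $\mathrm{Per}$ on arbitrary $0/1$ matrices into $\mathsf{FBPP}$, yielding $\mathsf{FBPP} = \mathsf{\#P}$.

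The main obstacle is the usual tension inherent to Lipton-type Gaussian reductions: the TV bound forces $\tau \to 0$ with $n$, while the extraction of the \emph{leading} coefficient of a degree-$n$ polynomial amplifies the oracle's additive error by a factor of $\tau^{-n}$. The exponent $2n(1+\delta)$ in the hypothesized error is calibrated precisely so that with $\tau = n^{-(1+\delta/3)}$ the $\tau^{-n}$ blow-up is dominated by a margin of $n^{-n\cdot 2\delta/3}$, which in turn absorbs the polynomial-regression factor $(4/\Delta)^n$ for every fixed $\delta > 0$; this is why the theorem requires error beating $n^{-2n}$ rather than merely $2^{-\mathrm{poly}(n)}$. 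A secondary subtlety is that for a single draw of $J$ the oracle's success events at different $t_i$ are correlated through $J$, which is why the reduction must be randomized and uses the Markov-plus-Chernoff two-step above rather than a direct Berlekamp--Welch recovery from uniformly successful samples.
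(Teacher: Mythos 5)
Your proposal is correct in its core and reaches the paper's conclusion, but by a genuinely different route. The paper interpolates along the convex combination $X(t)=tX+(1-t)Y$ and \emph{extrapolates} the degree-$n$ polynomial $q(t)=\mathrm{Per}(X(t))$ to $t=1$ via Lemma~\ref{lm:error_noisy_polynomial_at1}, using the total-variation bound of Lemma~\ref{lm:tvd_gaussian_permanent}, which carries an extra $3n\sqrt{t}$ term (the variance is rescaled by $(1-t)^2$) and therefore forces the tiny window $\Delta=(16n)^{-2}$ and the blow-up $(2^{10}n^2)^n$. You instead shift only the mean, $p_J(t)=\mathrm{Per}(J+tX)$, and extract the \emph{leading coefficient}, which equals $\mathrm{Per}(X)$ exactly, via Lemma~\ref{lm:bpp_poly_coefficient_bound}; the mean-shift-only bound $tn/2$ lets you sit at the polynomially larger scale $t_0=\tau=n^{-(1+\delta/3)}$ with constant relative width, trading the paper's $(4/\Delta)^n$ extrapolation factor for your $\tau^{-n}(4/\Delta)^n$ coefficient-extraction factor; both budgets close under the hypothesized error $n^{-2n(1+\delta)}$, and both finish by rounding to an integer. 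Your Fubini/Markov treatment of the correlation of success events through the single draw of $J$ is in fact more explicit than the paper, which applies its regression lemma as if the per-sample guarantee held conditionally on the Gaussian draw. The one step I would flag is your final amplification: from a bare $3/4$ success rate, Markov only gives the good-$J$ event (on which $\mu_J\ge 5/8$) probability about $1/3$, and no choice of threshold pushes it above $1/2$; hence the per-trial probability of outputting the correct integer need not exceed $1/2$, and plain majority voting over fresh draws of $J$ does not obviously boost to $2/3$, since failed trials could conceivably agree on a wrong value. This wrinkle is shared with (indeed glossed over by) the paper's own proof, so it is not a defect specific to your route, but to make the statement self-contained you should either phrase the conclusion with a constant success probability or note how the margin above $1/2$ per sample is obtained before invoking the regression lemma.
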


Following Aaronson and Arkhipov~\cite{Aaronson2011}, we choose a matrix $X(t):=tX+(1-t)Y$ where $X$ is a 0/1 matrix for the worst-case instance, and $Y$ is a matrix drawn from a random Gaussian distribution.
Based on the definition of the matrix permanent, $q(t) = \mathrm{Per}(X(t))$ is a degree-$n$ polynomial.
Under this setting, the following lemmas will be used to prove the theorem.

\begin{lemma} \label{lm:tvd_gaussian_permanent}
    The total variational distance between $X(t):=tX + (1-t)Y$ for a random Gaussian matrix $Y$ and $X=(x_{ij})$ is bounded above by $n(3 \sqrt{t} + t)$.
\end{lemma}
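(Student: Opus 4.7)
The plan is to bound the total variation distance between two matrix-valued Gaussian distributions on $\mathbb{R}^{n\times n}$: the law of $X(t)$ whose independent entries satisfy $X(t)_{ij}\sim\mathcal{N}(t\,x_{ij},(1-t)^{2})$, versus the law of $Y$ whose entries are $\mathcal{N}(0,1)$. Both distributions are products of $n^{2}$ univariate Gaussians, so the problem reduces to comparing two multivariate Gaussians that differ in mean (by $tX$) and in isotropic variance (by a factor $(1-t)^{2}$).

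My approach would be to apply the triangle inequality to separate the two effects:
\begin{equation*}
d_{TV}(X(t),Y)\;\le\;d_{TV}\!\bigl(X(t),\,(1-t)Y\bigr)+d_{TV}\!\bigl((1-t)Y,\,Y\bigr).
\end{equation*}
The first term compares Gaussians with identical covariance $(1-t)^{2}I_{n^{2}}$ but different means ($tX$ versus $0$); a standard mean-shift bound for equal-covariance Gaussians (via $d_{TV}\le\|\mu_{1}-\mu_{2}\|_{\Sigma^{-1}}/\sqrt{2\pi}$, or equivalently via Pinsker applied to the closed-form KL), combined with $\|X\|_{F}\le n$ for an $n\times n$ $0/1$ matrix $X$, produces a contribution of order $nt/(1-t)$, which supplies the $nt$ summand in the stated bound. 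The second term compares two zero-mean isotropic Gaussians with covariances $(1-t)^{2}I_{n^{2}}$ and $I_{n^{2}}$, and can be controlled by the Devroye et al.~\cite{devroye2018total} estimate already invoked in Lemma~\ref{lm:tvd_gaussian}; propagating the $n^{2}$ ambient dimension through the small-$t$ expansion of the covariance mismatch yields the $3n\sqrt{t}$ summand. Combining these two pieces via the triangle inequality and absorbing numerical constants gives the claimed inequality.

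The main technical obstacle is extracting the $\sqrt{t}$ scaling cleanly from the variance-shift contribution: a naive Pinsker bound on the product KL divergence of the $n^{2}$ entries yields only the weaker $O(nt)$ estimate, since the per-entry KL between $\mathcal{N}(0,1)$ and $\mathcal{N}(0,(1-t)^{2})$ scales like $t^{2}$. The $\sqrt{t}$ factor therefore has to come from the tighter Devroye-type multivariate inequality (or from carefully handling the regime where the covariance perturbation is not infinitesimal), and not from a one-step Pinsker estimate. Once that is in place, assembling the two contributions and collecting constants to match the precise form $n(3\sqrt{t}+t)$ is a matter of bookkeeping.
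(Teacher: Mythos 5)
Your proposal is essentially the paper's proof: a triangle inequality through an intermediate product Gaussian, the Devroye et al.\ bound for the variance mismatch (giving the $3n\sqrt{t}$ term), and a standard mean-shift bound for product Gaussians together with $\Vert X \Vert_F \le n$ for a $0/1$ matrix (giving the $nt$ term). The only difference is your choice of intermediate $(1-t)Y$ rather than the paper's $\prod_{ij}\mathcal{N}(t\,x_{ij},1)$, which makes your mean-shift leg carry variance $(1-t)^2$ and hence a bound of order $nt/(1-t)$ rather than $nt$; this matches the stated form only for bounded $t$ (harmless in the regime $t=O(n^{-2})$ where the lemma is applied), and swapping the order of the two legs as in the paper removes the issue entirely.
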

\begin{proof}
    For a given $t$, we can consider $X(t) \sim \prod_{ij} \mathcal{N}(t x_{ij}, (1-t)^2):=D_t$.
    Based on results in Ref.~\cite{devroye2018total}, for $D_{t}' := \prod_{ij} \mathcal{N}(t x_{ij}, 1^2)$, we have
    \begin{align}
        \Vert D_t - D_t' \Vert \leq n \frac{3}{2} \sqrt{1-(1-t)^2} < 3n \sqrt{t}.
    \end{align}
    Note that Ref.~\cite{Aaronson2011} used a different bound given by $\Vert D_t - D_t' \Vert \leq n^2 t$ (Lemma 48, therein).
    However, as we could not follow that proof, we here use a weaker bound proved in Ref.~\cite{devroye2018total}, which does not change the overall scaling.
    In addition, we have
    \begin{align}
        \Vert D_t' - \mathcal{N}(0,1)^{n \times n} \Vert \leq \sqrt{\sum_{ij} (t x_{ij})^2} \leq nt ,
    \end{align}
    as shown in Ref.~\cite{Aaronson2011}.
    Then,
    \begin{align}
        \Vert D_t - \mathcal{N}(0,1)^{n \times n} \Vert &\leq \Vert D_t - D_t' \Vert + \Vert D_t' - \mathcal{N}(0,1)^{n \times n} \Vert \nonumber \\
        &\leq n(3\sqrt{t} + t).
    \end{align}
\end{proof}

\begin{lemma}[Ref.~\cite{krovi2022average}] \label{lm:error_noisy_polynomial_at1}
    Let us assume that $q(t)$ is a degree-$d$ polynomial and $\Delta \in (0,1)$.
    Suppose that we can obtain samples of $q(t)$ within an additive error of $\delta$ with a constant probability of $\eta > 1/2$ for $t_i$ drawn uniformly from $[-\Delta, \Delta]$.
    Then, there exists an \emph{\textsf{FBPP}} algorithm that returns $q(1)$ within an additive error of $\frac{9\delta}{4} (4/\Delta)^d$.
\end{lemma}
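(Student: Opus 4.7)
\medskip

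\noindent\textbf{Proof proposal.} The plan is to reduce the statement to the robust polynomial regression result of Lemma~\ref{lm:bpp_poly_recon} by rescaling, and then to convert the uniform approximation error on the sampling interval into an extrapolation error at $t=1$ using the coefficient bound of Lemma~\ref{lm:sum_of_coeffs}. The first step is to define $Q(s) := q(\Delta s)$, which is again a degree-$d$ polynomial, but now on the standardized interval $s \in [-1,1]$. A uniform sample $t \sim \mathcal{U}_{[-\Delta,\Delta]}$ corresponds to a uniform sample $s \sim \mathcal{U}_{[-1,1]}$, so the oracle returns an estimate of $Q(s)$ within additive error $\delta$ with probability $\eta > 1/2$ for each query.

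Next, I would apply Lemma~\ref{lm:bpp_poly_recon} in its uniform-sampling version with parameter $\epsilon = 1/4$. Using $O(d^2)$ queries this produces, with success probability at least $2/3$, a degree-$d$ polynomial $\hat Q(s)$ with
\begin{equation}
\max_{s \in [-1,1]} |Q(s) - \hat Q(s)| \le \tfrac{9}{4}\delta.
\end{equation}
Writing the residual as $R(s) := Q(s) - \hat Q(s) = \sum_{k=0}^{d} c_k s^k$, Lemma~\ref{lm:sum_of_coeffs} gives $\sum_{k=0}^d |c_k| \le 4^d \cdot \tfrac{9}{4}\delta$.

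The final step is to output $\hat Q(1/\Delta)$ as the estimate of $q(1) = Q(1/\Delta)$. Since $\Delta < 1$, we have $1/\Delta > 1$, so we are extrapolating outside $[-1,1]$. Bounding the extrapolation error termwise,
\begin{equation}
|R(1/\Delta)| = \Bigl|\sum_{k=0}^d c_k \Delta^{-k}\Bigr| \le \Delta^{-d} \sum_{k=0}^d |c_k| \le \tfrac{9\delta}{4}\bigl(4/\Delta\bigr)^d,
\end{equation}
which is exactly the claimed bound. Evaluating $\hat Q$ at a known point is polynomial time, so the whole procedure remains in \textsf{FBPP}.

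The only nontrivial ingredient is the robust regression lemma itself, which is imported as a black box; everything else is bookkeeping (rescaling and applying the $4^d$ coefficient bound). The one place where care is needed is the success probability: Lemma~\ref{lm:bpp_poly_recon} already produces the estimator with probability $1-\chi$ using $O((d^2/\epsilon^2)\log(1/\chi))$ queries, so by choosing $\chi = 1/3$ (or any desired constant) we obtain the $\geq 2/3$ guarantee required of an \textsf{FBPP} machine without further amplification. No other step contributes randomness or failure probability, so the obstacle is purely notational rather than technical.
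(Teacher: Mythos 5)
Your proposal is correct and follows essentially the same route as the paper's proof: rescale to $Q(s)=q(\Delta s)$ on $[-1,1]$, apply Lemma~\ref{lm:bpp_poly_recon} with $\epsilon=1/4$ to get the $\tfrac{9}{4}\delta$ uniform bound, then use Lemma~\ref{lm:sum_of_coeffs} and termwise extrapolation at $s=1/\Delta$ to obtain $\tfrac{9\delta}{4}(4/\Delta)^d$. No discrepancies to report.
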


\begin{proof}
    Let us define $Q(s) := q(\Delta s)$, a degree-$d$ polynomial in $s$.
    Under the condition used in the lemma, an oracle can estimate $Q(s)$ for $s \in [-1,1]$ within an additive error of $\delta$.
    By using such an oracle, we can construct a polynomial $\hat{Q}(s)$ following Lemma~\ref{lm:bpp_poly_recon}, which satisfies
    \begin{align}
        \max_{s \in [-1,1]} |Q(s) - \hat{Q}(s)| \leq \frac{9}{4}\delta.
    \end{align}
    Then for $\hat{q}(t) = \hat{Q}(t/\Delta)$, our estimation of $q(1)$ will be $\hat{q}(1)$.
    
    We now bound the error between $q(1)$ and $\hat{q}(1)$.
    Let us define $R(s) := Q(s) - \hat{Q}(s) = \sum_{k=0}^d c_k s^k$.
    By Lemma~\ref{lm:sum_of_coeffs}, we obtain
    \begin{align}
         \sum_{k=0}^d |c_k| \leq \frac{9\delta}{4} 4^d .
    \end{align}
    Then, it follows that
    \begin{align}
        |q(1) - \hat{q}(1)| = |R(\Delta^{-1})| \leq \sum_{k=0}^d |c_k| \Delta^{-k} \leq \frac{9 \delta}{4} \Bigl( \frac{4}{\Delta} \Bigr)^d.
    \end{align}
\end{proof}

\textbf{Proof of Theorem~\ref{thm:approximate_permanent_gaussian}}
Let us assume an oracle that estimates $\mathrm{Per}(J)$ within an additive error of $\epsilon$, with a constant probability of $3/4$ over a random Gaussian matrix $J$.
Then from Lemma~\ref{lm:tvd_gaussian_permanent}, such an oracle also estimates $q(t)=\mathrm{Per}(X(t))$ with probability $\eta = 3/4 - n (3\sqrt{\Delta} + \Delta)$ when $t \in [-\Delta, \Delta]$.
By taking $\Delta = (16n)^{-2}$, we could obtain $\eta > 1/2$ for all $n$.

Next, we apply Lemma~\ref{lm:error_noisy_polynomial_at1} to $q(t)=\mathrm{Per}(X(t))$. As the degree of $q(t)$ is $n$, and the error is $\delta=\epsilon$, the error for $q(1)=\mathrm{Per}(J)$ becomes 
\begin{align}
    \frac{9\epsilon}{4} \Bigl( \frac{4}{(16n)^{-2}} \Bigr)^n = \frac{9\epsilon}{4} (2^{10} n^{2})^n.
\end{align}
The error is upper bounded by $o(1)$ when $\epsilon = n^{-2n(1+\delta)}$ for any $\delta > 0$.
This implies that estimating $q(t)$ at $t=0$ within an additive error of $\epsilon$ is \textsf{\#P-hard} under the \textsf{BPP} reduction.

The same argument as above also applies to $\mathrm{Per}(J)^2$ instead of $\mathrm{Per}(J)$.
From Theorem~28 in Ref.~\cite{Aaronson2011}, we know that the worst-case instance $X \in \mathbb{R}^{n \times n}$ exists such that computing $\mathrm{Per}(X)^2$ is \textsf{\#P-hard}.
Then, we consider a polynomial $X(t):= tX + (1-t)Y$ where $Y$ is a random matrix from $\mathcal{N}(0, 1)^{n \times n}$.
As $\mathrm{Per}(X(t))^2$ is a degree-$2n$ polynomial, we can use the same steps to obtain the following theorem:
\begin{theorem}
    If there exists an efficient classical algorithm that estimates $\mathrm{Per}(J)^2$ within an additive error of $n^{-4n(1+\delta)}$ for any $\delta>0$ with probability $3/4$ over $J \sim \mathcal{N}(0,1)^{n \times n}$, then \emph{\textsf{FBPP=\#P}}.    
\end{theorem}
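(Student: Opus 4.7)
The plan is to mirror the proof of Theorem~\ref{thm:approximate_permanent_gaussian} almost verbatim, but to track the degree-doubling that comes from squaring the permanent. The key ingredients are already assembled in the appendix: the worst-case hardness of $\mathrm{Per}(X)^2$ for some $X \in \mathbb{R}^{n \times n}$ (Theorem~28 of Ref.~\cite{Aaronson2011}), the total variation bound of Lemma~\ref{lm:tvd_gaussian_permanent}, and the polynomial extrapolation bound of Lemma~\ref{lm:error_noisy_polynomial_at1}. What changes is only the arithmetic of the error amplification.

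First, I would fix a worst-case instance $X$ for which evaluating $\mathrm{Per}(X)^2$ is \textsf{\#P-hard}, and note that by the integrality/rationality argument used in Ref.~\cite{Aaronson2011}, computing $\mathrm{Per}(X)^2$ within additive error $o(1)$ is already \textsf{\#P-hard}. Next, I would introduce the linear interpolation $X(t) := tX + (1-t)Y$ with $Y \sim \mathcal{N}(0,1)^{n \times n}$, and let
\begin{equation}
q(t) := \mathrm{Per}(X(t))^2.
\end{equation}
Since $\mathrm{Per}(X(t))$ is a polynomial of degree $n$ in $t$, the function $q(t)$ is a polynomial of degree $d = 2n$.

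Then I would invoke Lemma~\ref{lm:tvd_gaussian_permanent}: the distribution of $X(t)$ has total variation distance at most $n(3\sqrt{|t|} + |t|)$ from $\mathcal{N}(0,1)^{n \times n}$. Choosing $\Delta = (16 n)^{-2}$, an oracle estimating $\mathrm{Per}(J)^2$ to additive error $\epsilon$ with probability $3/4$ over $J \sim \mathcal{N}(0,1)^{n \times n}$ therefore also estimates $q(t)$ to error $\epsilon$ with probability $\eta = 3/4 - n(3\sqrt{\Delta} + \Delta) > 1/2$ over $t$ drawn uniformly from $[-\Delta, \Delta]$. Applying Lemma~\ref{lm:error_noisy_polynomial_at1} with $d = 2n$ and error $\delta = \epsilon$ recovers $q(1) = \mathrm{Per}(X)^2$ within additive error
\begin{equation}
\frac{9\epsilon}{4}\Bigl(\frac{4}{\Delta}\Bigr)^{2n} = \frac{9\epsilon}{4}\bigl(2^{10} n^{2}\bigr)^{2n}.
\end{equation}
Taking $\epsilon = n^{-4n(1+\delta)}$ for any constant $\delta > 0$ makes this bound $o(1)$, contradicting the worst-case hardness of $\mathrm{Per}(X)^2$ unless \textsf{FBPP = \#P}.

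I do not expect any real obstacle; the proof is structurally identical to that of Theorem~\ref{thm:approximate_permanent_gaussian}, and the only new thing to verify is the doubling of the polynomial degree from $n$ to $2n$, which is what forces the error tolerance $n^{-2n(1+\delta)}$ to be tightened to $n^{-4n(1+\delta)}$. The one point that merits care is confirming that the worst-case hardness of $\mathrm{Per}(X)^2$ is robust to $o(1)$ additive error for the chosen instance; this follows because the matrix $X$ supplied by Theorem~28 of Ref.~\cite{Aaronson2011} takes values in a bounded rational set whose squared permanent lies on a sufficiently coarse grid, so any $o(1)$ additive approximation determines the exact value.
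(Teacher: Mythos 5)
Your proposal is correct and follows essentially the same route as the paper: the paper also takes the worst-case instance for $\mathrm{Per}(X)^2$ from Theorem~28 of Ref.~\cite{Aaronson2011}, interpolates via $X(t)=tX+(1-t)Y$, and reruns the argument of Theorem~\ref{thm:approximate_permanent_gaussian} with the polynomial degree doubled to $2n$, which is exactly what tightens the error to $n^{-4n(1+\delta)}$. Your closing remark about the robustness of the worst-case hardness to $o(1)$ additive error is a sensible extra check that the paper leaves implicit.
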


Compared to Theorem~62 in Ref.~\cite{Aaronson2011}, 
the required probability of our theorem is $3/4$ (in fact, any constant probability larger than $1/2$ can be used) instead of $3/4 + 1/\mathrm{poly}(n)$ and the required additive error $n^{-4n(1+\delta)}$ does not depend on the probability.

\section{Proof details of classical hardness of the sampling task} \label{app:hardness_sampling_proof}
In Appendix~\ref{app:complexity_output_dist_proof}, we proved that computing the output distribution is a classically hard problem because it belongs to class \textsf{\#P-hard}.
In this section, we further prove that \textit{sampling} from the output distribution is also a classically challenging problem unless the \textsf{PH} collapses.

We follow the usual arguments in Refs.~\cite{Aaronson2011,bremner2016average,boixo2018characterizing}, which combine Stockmeyer's theorem and the anticoncentration property of the output distribution to argue that sampling from the distribution is also classically hard (unless the \textsf{PH} collapses).
In these arguments, Stockmeyer's theorem provides $\mathsf{FBPP}^\mathsf{NP}$ algorithm for estimating the output probability within an inverse polynomial multiplicative error, 
which is further reduced to an additive error utilizing the anticoncentration property.

However, as we mentioned in the main text, these techniques cannot be directly applied to our case, as our complexity results (Theorem~\ref{thm:classical_hardness}) are only satisfied for $\mathbf{x} \in X_{m}$ with sufficiently large $m$.
This is related to the fact that our problem lacks the hiding property.
The main result obtained in this section is that such restricted complexity results are sufficient to prove the difficulty of sampling when the error between the distribution we sample from and the true distribution $p(\mathbf{x};J;t)$ is inverse polynomially small, provided that the output distribution anticoncentrates at least for a constant fraction of $\mathbf{x} \in X_{n/2}$.

Let us start with Stockmeyer's approximate counting theorem.
\begin{theorem}[Stockmeyer~\cite{Stockmeyer1985}] \label{thm:stockmeyer}
    Suppose that there exists an oracle $\mathcal{O}$ that generates a sample from the probability distribution $p(\mathbf{x})$ by taking a random bitstring $\mathbf{r}$, i.e.,
    \begin{align}
        p(\mathbf{x}) = \underset{\mathbf{r} \in \{0,1\}^{\mathrm{poly}(n)} }{\mathrm{Pr}} \bigl[ \mathcal{O}(\mathbf{r}) = \mathbf{x} \bigr].
    \end{align}
    There exists an $\mathsf{FBPP}^{\mathsf{NP}^{\mathcal{O}}}$ machine that, for a given $\mathbf{x} \in \{0,1\}^n$, approximates $p(\mathbf{x})$ within a multiplicative factor of $g = 1/\mathrm{poly}(n)$.
    In other words, for a given $\mathbf{x}$, the machine outputs $\hat{p}(\mathbf{x})$ such that
    \begin{align}
        (1-g)\hat{p}(\mathbf{x}) \leq p(\mathbf{x}) \leq (1+g)\hat{p}(\mathbf{x}).
    \end{align}
\end{theorem}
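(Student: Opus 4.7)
The plan is to prove Stockmeyer's theorem by reducing probability estimation to approximate counting of an $\mathsf{NP}^{\mathcal{O}}$ set, then invoking the classical Sipser--Stockmeyer hash-based counting procedure inside a $\mathsf{BPP}^{\mathsf{NP}^{\mathcal{O}}}$ machine. Since $\mathcal{O}$ is a deterministic function of its random input $\mathbf{r} \in \{0,1\}^{M}$ with $M = \mathrm{poly}(n)$, one has $p(\mathbf{x}) = |A_{\mathbf{x}}|/2^{M}$, where $A_{\mathbf{x}} := \{\mathbf{r} \in \{0,1\}^{M} : \mathcal{O}(\mathbf{r}) = \mathbf{x}\}$. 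Membership in $A_{\mathbf{x}}$ is decidable in polynomial time with access to $\mathcal{O}$, so $A_{\mathbf{x}}$ is an $\mathsf{NP}^{\mathcal{O}}$ predicate with trivial witness $\mathbf{r}$. Approximating $p(\mathbf{x})$ to $(1\pm g)$ multiplicative accuracy is therefore equivalent to approximating the integer $N := |A_{\mathbf{x}}|$ to the same accuracy.

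The core technical step is approximate counting of $N$. The plan is to use a family of pairwise-independent hash functions $\mathcal{H}_{k}$ from $\{0,1\}^{M}$ to $\{0,1\}^{k}$, indexable by polynomial-length random seeds. For integer parameters $k$ and $T$, ask the $\mathsf{NP}^{\mathcal{O}}$ oracle the query ``does there exist $T$ distinct witnesses $\mathbf{r}_{1}, \dots, \mathbf{r}_{T} \in A_{\mathbf{x}}$ with $h(\mathbf{r}_{i}) = \mathbf{0}$ for all $i$?'' for a sampled $h \in \mathcal{H}_{k}$. A standard second-moment calculation (inclusion--exclusion over collisions in the pairwise-independent family) shows that the probability of this event transitions from near one to near zero as $N$ decreases through $\Theta(T 2^{k})$. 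Iterating over $k$ and amplifying over independent seeds therefore locates, with probability at least $1 - 1/\mathrm{poly}(n)$, the quantity $\log_{2}(N)$ within a constant additive window, yielding a constant-factor approximation of $N$.

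To sharpen this to a $(1+g)$ multiplicative estimate with $g = 1/\mathrm{poly}(n)$, I would invoke the constant-factor estimator on the product set $A_{\mathbf{x}}^{\ell} \subseteq \{0,1\}^{\ell M}$ for $\ell = \Theta(1/g)$; membership in $A_{\mathbf{x}}^{\ell}$ is still an $\mathsf{NP}^{\mathcal{O}}$ predicate (the witness is an $\ell$-tuple of witnesses for $A_{\mathbf{x}}$), so the hashing procedure returns $\tilde{N}$ satisfying $\tilde{N}/N^{\ell} \in [c_{1}, c_{2}]$ for absolute constants $c_{1}, c_{2} > 0$. Taking the $\ell$-th root converts this into a $(1 + O(1/\ell)) = (1+g)$ multiplicative approximation of $N$, and hence of $p(\mathbf{x}) = N/2^{M}$. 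A standard median-of-$O(\log n)$ amplification over independent runs of the randomized seed drives the overall failure probability of the $\mathsf{FBPP}^{\mathsf{NP}^{\mathcal{O}}}$ machine below any desired constant.

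The main obstacle is the quantitative book-keeping in this amplification chain: each invocation of the hashing test is itself randomized and imperfect, so one must verify that the product-set construction together with the $\ell$-th root does not cause the multiplicative slack to explode, even when the underlying constant-factor estimator fails with nontrivial probability. The degenerate regime where $N$ is very small (for instance $N \in \{0, 1\}$) must also be handled separately by a preliminary $\mathsf{NP}^{\mathcal{O}}$ query that either declares $A_{\mathbf{x}}$ empty or performs an exact polynomial-time enumeration below a threshold, since the pairwise-independent hashing bounds degrade when there are too few witnesses to hash.
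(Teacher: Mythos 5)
The paper does not prove this statement at all: it is imported verbatim as a known result of Stockmeyer (1985) and used as a black box in Appendix~F, so there is no in-paper proof to compare against. Your proposal is a correct sketch of the standard proof of that cited theorem --- reduce $p(\mathbf{x})$ to counting the witness set $A_{\mathbf{x}}$, get a constant-factor estimate via pairwise-independent hashing with an $\mathsf{NP}^{\mathcal{O}}$ oracle and a second-moment argument, then sharpen to a $(1+g)$ factor with $g=1/\mathrm{poly}(n)$ by running the estimator on the $\ell$-fold product set with $\ell=\Theta(1/g)$ and taking the $\ell$-th root, with the small-$|A_{\mathbf{x}}|$ regime handled by direct threshold queries. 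This matches how the result is standardly established, so no gap to report; just note that the paper only needs the statement, not its proof.
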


Then, the following theorem provides an argument that a classical approximate sampler can estimate $p(\mathbf{x})$ for $\mathbf{x} \in X_{n/2}$.

We now prove Theorem~\ref{thm:apply_stockmeyer} introduced in the main text.
\applystockmeyer*

\begin{proof}
    We first prove the case for $H_{1,2}$. From the definition of $L_1$ distance
    \begin{align}
        \nu = \sum_{\mathbf{x} \in \mathbb{Z}_2^{2n}} |p(\mathbf{x}) - p'(\mathbf{x})|,
    \end{align}
    it follows from Markov's inequality that
    \begin{align}
        \underset{\mathbf{x} \in \mathbb{Z}_2^{2n}}{\mathrm{Pr}} \Bigl[ |p(\mathbf{x}) - p'(\mathbf{x})| \geq \frac{\nu}{\delta} 2^{-2n} \Bigr] \leq \delta.
    \end{align}
    Next, the probability to have $\mathbf{x} \in X_{n/2}$ among all $\mathbf{x} \in \mathbb{Z}_2^{2n}$ is given by
    \begin{align}
        \frac{{n \choose n/2}^2}{2^{2n}} \geq \frac{1}{2n} ,
    \end{align}
    where we have used the inequality
    \begin{align}
        {n \choose k} \geq \sqrt{\frac{n}{8 k(n-k)}}2^{n \mathcal{H}(k/n)} .
    \end{align}
    Then, by taking $\delta = \gamma (2n)^{-1}$, more than $1-\gamma$ fraction of $\mathbf{x} \in X_{n/2}$ satisfies
    \begin{align}
        |p(\mathbf{x}) - p'(\mathbf{x})| \leq \gamma^{-1} \nu (2n) 2^{-2n}. \label{eq:prob_dist_bound_markov_h12}
    \end{align}

    Theorem~\ref{thm:stockmeyer} implies that if there exists an efficient classical sampler for $p'(\mathbf{x})$, then there exists an $\mathsf{FBPP}^{\mathsf{NP}^{\mathcal{O}}}$ algorithm that outputs $\hat{p}(\mathbf{x})$ such that
    \begin{align}
        |\hat{p}(\mathbf{x}) - p'(\mathbf{x})| \leq g p'(\mathbf{x}) ,
    \end{align}
    where $g = 1/\mathrm{poly}(n)$.
    Then the distance between $\hat{p}(\mathbf{x})$ and $p(\mathbf{x})$ for $\mathbf{x}$ satisfying Eq.~\eqref{eq:prob_dist_bound_markov_h12} is given by
    \begin{align*}
        |\hat{p}(\mathbf{x}) - p(\mathbf{x})| &\leq |\hat{p}(\mathbf{x}) - p'(\mathbf{x})| + |p(\mathbf{x}) - p'(\mathbf{x})| \\
        &\leq g p'(\mathbf{x} ) + |p(\mathbf{x}) - p'(\mathbf{x})| \\
        &\leq (1+g) |p(\mathbf{x} ) - p'(\mathbf{x} )| + gp(\mathbf{x} ) \\
        &\leq (1+g) \gamma^{-1} \nu (2n) 2^{-2n} + g p(\mathbf{x}). \numberthis
    \end{align*}

    Proof for $H_{3,4}$ follows the same step as above, besides that we only consider $\mathbf{x} \in X$ instead of $\mathbb{Z}_2^{2n}$.
    Recall that $X= \cup_{i=0}^n X_i$ is the set of all bitstrings that spans the eigenvalue zero subspace of $J_z$.
    From the definition of $L_1$ distance
    \begin{align}
        \nu = \sum_{\mathbf{x} \in X} |p(\mathbf{x}) - p'(\mathbf{x})|,
    \end{align}
    and Markov's inequality, we obtain
    \begin{align}
        \underset{\mathbf{x} \in X}{\mathrm{Pr}} \Bigl[ |p(\mathbf{x}) - p'(\mathbf{x})| \geq \frac{\nu}{\delta} {2n \choose n}^{-1} \Bigr] \leq \delta.
    \end{align}
    Here, ${2n \choose n} =|X|$.
    In addition, the probability to have $\mathbf{x} \in X_{n/2}$ among all allowed $x$ (within the same $J_z$ eigenspace) is given by
    \begin{align}
        \frac{{n \choose n/2}^2}{{2n \choose n}} \geq \frac{\sqrt{\pi}}{2} n^{-1/2},
    \end{align}
    where we have used the inequality
    \begin{align}
        \sqrt{\frac{n}{8 k(n-k)}}2^{n \mathcal{H}(k/n)} \leq {n \choose k} \leq \sqrt{\frac{n}{2 \pi k(n-k)}}2^{n \mathcal{H}(k/n)}.
    \end{align}
    Thus by taking $\delta = \gamma (\sqrt{\pi}/2) n^{-1/2}$, more than $1-\gamma$ fraction of $\mathbf{x} \in X_{n/2}$ satisfies
    \begin{align}
        |p(\mathbf{x}) - p'(\mathbf{x})| \leq (\sqrt{\pi} \gamma/2)^{-1} \nu n^{1/2} {2n \choose n}^{-1}. \label{eq:prob_dist_bound_markov}
    \end{align}
    Using the same steps for $H_{1,2}$, the desired error $\epsilon_{II}$ is obtained.
\end{proof}

One can see that the error $\epsilon_I$ and $\epsilon_{II}$ are smaller than that stated in Conjecture~\ref{conj:hardness_output_probs} denoted by $\varepsilon 2^{-2n}$ and $\varepsilon {2n \choose n}^{-1}$, respectively, when $p(\mathbf{x})$ also scales with the inverse of the dimension, i.e., $2^{-2n}$ and ${2n \choose n}^{-1}$.
A version of this condition, so-called the anticoncentration property~\cite{Aaronson2011,bremner2016average}, is widely known for the sampling problems and can be rigorously proven for the \textsf{IQP}~\cite{bremner2016average,bermejo2018architectures} and the \textsf{RCS}~\cite{brandao2016local}.
We also expect that this property is satisfied for Hamiltonian dynamics, at least for some $\mathbf{x} \in X_{n/2}$, which we conjecture as follows:

\anticoncentration*

When the anticoncentration conjecture holds, we can apply Theorem~\ref{thm:apply_stockmeyer} by choosing $\gamma = \zeta/2$, which makes at least $\zeta/2$ fraction of $\mathbf{x} \in X_{n/2}$ must have both properties. 
Then for such $\mathbf{x}$, by taking $\nu = \varepsilon \gamma /(4n)$ for $H_{1,2}$ (or $\nu = \varepsilon (\sqrt{\pi}\gamma/4) n^{-1/2} $ for $H_{3,4}$),
one can estimate $p(\mathbf{x};J)$ within an additive error of $[\varepsilon(1+g)/2 + g \alpha] D^{-1}$ with a probability greater than $1-\beta$ over $J$.
Here, $D$ is the dimension of the Hilbert space given by $2^{2n}$ and ${2n \choose n}$ for $H_{1,2}$ and $H_{3,4}$, respectively.
In addition, as $g=1/\mathrm{poly}(n)$, $\varepsilon(1+g)/2 + g \alpha \leq \varepsilon$ for sufficiently large $n$.

In summary, we obtained the following result presented in the main text.
\ComplexitySmallTVDSampling*

Note that Corollary~\ref{col:complexity_small_tvd_sampling} is robust in the sense that one can change the required fraction of $\mathbf{x}$ showing the anticoncentration property by adjusting the $L_1$ distance $\nu$.
For example, the proof for $H_{3,4}$ is feasible with $\nu = O(\varepsilon)$ when a constant fraction of $\mathbf{x} \in X = \cup_{i=0}^n X_{i}$ satisfies the anticoncentration condition.
On the other hand, if only $\xi=O(1/n)$ fraction of $\mathbf{x}\in X_{n/2}$ shows the anticoncentration property, the required $L_1$ distance becomes $\nu = O(\varepsilon n^{-2})$  and $\nu = O(\varepsilon n^{-3/2})$ for $H_{1,2}$ and  $H_{3,4}$, respectively.

\begin{figure*}
    \centering
    \includegraphics[width=0.9\linewidth]{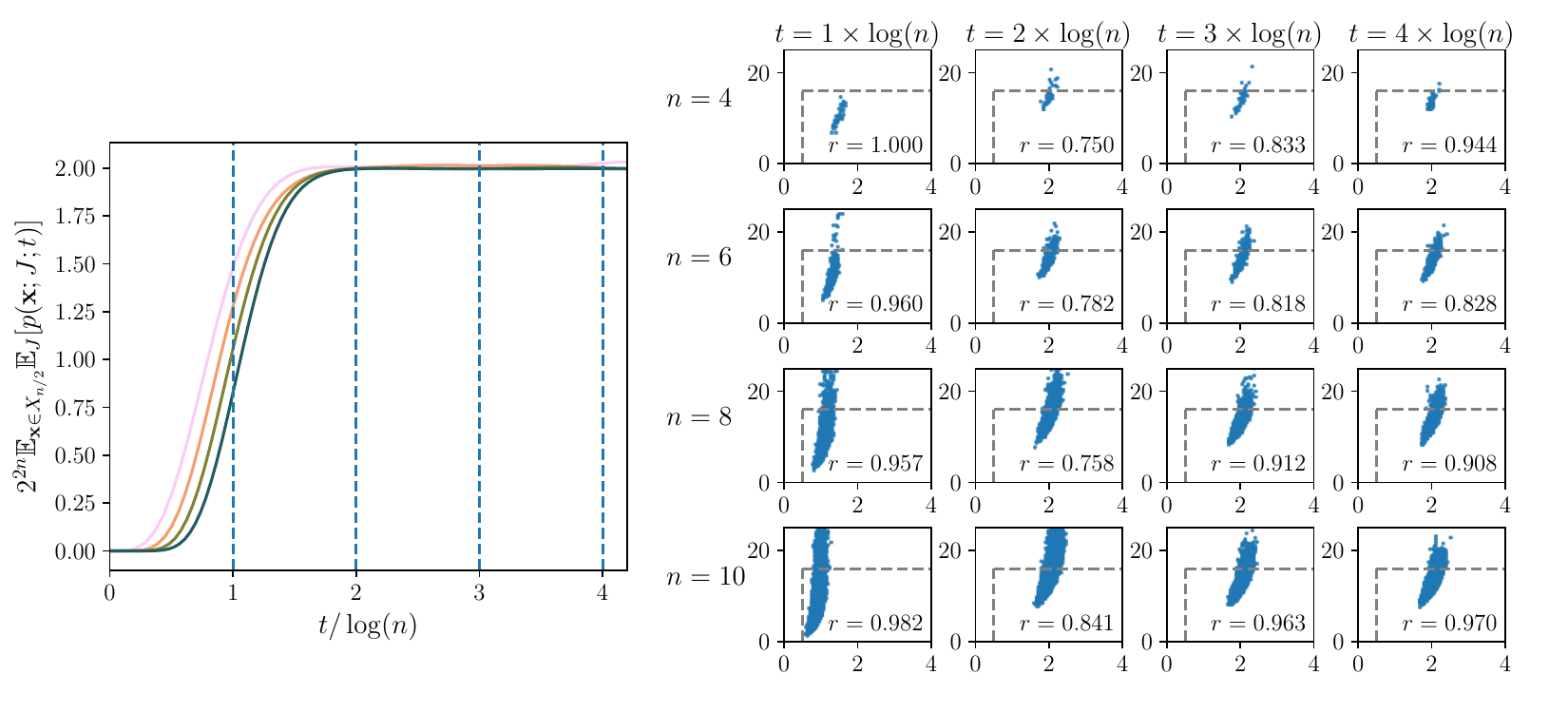}
    \caption{Left: Output probability $p(\mathbf{x};J;t)$ averaged over $J\sim \mathcal{N}(0,1)^{n \times n}$ and $\mathbf{x} \in X_{n/2}$ as a function of $t$.
    We plot results for $n=4$ (the lightest) to $n=10$ (the darkest) with step size $2$. The plot indicates that time to equilibrate $t_{\rm eq}$ scales with $\log n$. 
    Right: Scatter plots of $2^{2n} \mathbb{E}_J[p(\mathbf{x};J;t)]$ (x-axis) versus $2^{4n} \mathbb{E}_J[p(\mathbf{x};J;t)^2]$ (y-axis) for each $\mathbf{x} \in X_{n/2}$.
    For $t/\log n \in [1,2,3,4]$ and $n \in [4,6,8,10]$, we numerically compute $p(\mathbf{x};J;t)$ and $p(\mathbf{x};J;t)^2$  for all $\mathbf{x} \in X_{n/2}$ averaged over $2^{10}$ random instances of $J\sim \mathcal{N}(0,1)^{n \times n}$.
    The dashed lines indicate $2^{2n} \mathbb{E}_J[p(\mathbf{x})] = 1/2$ and $ 2^{4n} \mathbb{E}_J[p(\mathbf{x})^2] = 16$.
    The plot also presents $r$, which is the ratio of $\mathbf{x} \in X_{n/2}$ satisfying $2^{2n} \mathbb{E}_J[p(\mathbf{x})] \geq 1/2$ and $2^{4n} \mathbb{E}_J[p(\mathbf{x})^2] \leq 16$.
    For numerical simulations, we used the expansion $e^{-\complexi H_1 t} = \prod_{i,j=1}^n [e^{-i (J_{ij}/n)\sigma^{(i)}_x \tau^{(j)}_x t}]$.
    }
    \label{fig:anticon_h1}
\end{figure*}

\section{Anticoncentration of the Ising model}~\label{app:anticon_ising}
In the main text, we mainly discussed the anticoncentration of the output distribution for $H_3$.
In this appendix, we study the anticoncentration property for the Ising model ($H_1$).

\subsection{Numerical demonstration of anticoncentration}
As discussed in the main text, we demonstrated anticoncentration of the output distribution for $H_3$ by computing $\mathbb{E}_J[p(\mathbf{x};J;t)]$ and $\mathbb{E}_J[p(\mathbf{x};J;t)^2]$
where $p(\mathbf{x};J;t) = |\braket{\mathbf{x}| e^{-\complexi H_1 t} | \mathbf{y}_0 }|^2$.
Similarly, we also compute the same quantity for $H=H_1$, as plotted in Fig.~\ref{fig:anticon_h1}.

In Fig.~\ref{fig:anticon_h1}(Left), we observe that $\mathbb{E}_{\mathbf{x} \in X_{n/2}}\mathbb{E}_J[p(\mathbf{x};J;t)]$ converges to $\approx 2.0 \times 2^{-2n}$ for large $t$ regardless of $n$.
We analytically prove in the next subsection that this result is related to the Hamming weight of $\mathbf{x}$, which is $n$ for $\mathbf{x} \in X$.
In addition, the plot shows that the time for equilibration scales with $\log n$.

Furthermore, we plot $\mathbb{E}_J[p(\mathbf{x};J;t)]$ and $\mathbb{E}_J[p(\mathbf{x};J;t)^2]$ for each $\mathbf{x} \in X_{n/2}$ in Fig.~\ref{fig:anticon_h1}(Right).
We also present $r$ which is the ratio of $\mathbf{x}$ satisfying
\begin{align}
    \mathbb{E}_J[p(\mathbf{x};J;t)] \geq \frac{1}{2} 2^{-2n} \land
    \mathbb{E}_J[p(\mathbf{x};J;t)^2] \leq 16 \times 2^{-4n}. \label{eq:anticon_ising_eq}
\end{align}
The results show that $r$ converges to a non-zero value as $n$ increases for all $t/\log n \in [1,2,3,4]$.
Then it follows from the Paley-Zygmund inequality that the output probability anticoncentrates.

\subsection{Converged output probability of the Ising model}
In this subsection, we analytically derive the behavior of the output probability $p(\mathbf{x};J;t)$ when $t\rightarrow \infty$.
Our analytic expression shows that the converged value of the output probability depends on the Hamming weight of $\mathbf{x}$.
For convenience, we divide $\mathbf{x}$ into $\x^\sigma = \{x_1,\cdots,x_n\}$ and $\x^\tau = \{x_{n+1},\cdots,x_{2n}\}$.

As the first step, we expand $f(\mathbf{x};J;t):=\braket{\mathbf{x}^{\sigma} \mathbf{x}^{\tau} | e^{-iHt} | \mathbf{y}_0}$, which is given by
\begin{align*}
    f(\mathbf{x};J;t) &= \braket{\mathbf{x}^{\sigma} \mathbf{x}^{\tau} | e^{-iHt} | \mathbf{y}_0} = \braket{\mathbf{x}^{\sigma} \mathbf{x}^{\tau} | e^{-i \sum_{ij}J_{ij} \sigma^{(i)}_x \tau^{(j)}_x t} | \mathbf{y}_0} \\
    &= \braket{\mathbf{x}^{\sigma} \mathbf{x}^{\tau} | H^{\otimes 2n} e^{-i \sum_{ij} J_{ij}\sigma^{(i)}_z \tau^{(j)}_z t/n} H^{\otimes 2n} | \mathbf{y}_0} \\
    &= \frac{1}{\sqrt{2}^{2n}}\langle \mathbf{x}^{\sigma} \mathbf{x}^{\tau} | H^{\otimes 2n} e^{-i \sum_{ij} J_{ij}\sigma^{(i)}_z \tau^{(j)}_z t/n} \sum_{\w^\sigma \in \{0,1\}^{n}} \ket{\w^\sigma} \otimes \sum_{\w^\tau \in \{0,1\}^{n}} (-1)^{ \wt(\w^{\tau}) } \ket{\w^\tau} \\
    &= \frac{1}{\sqrt{2}^{2n}} \sum_{\substack{\w^\sigma \in \{0,1\}^{n}\\ \w^\tau \in \{0,1\}^{n}}} (-1)^{ \wt(\w^{\tau}) }e^{-i \sum_{ij} J_{ij}/n (1 - 2 w^{\sigma}_i)(1-2 w^{\tau}_j) t}  \langle \mathbf{x}^{\sigma} \mathbf{x}^{\tau} | H^{\otimes 2n}  \ket{\w^\sigma \w^\tau} \numberthis .
\end{align*}

From
\begin{align*}
    \langle \mathbf{x}^{\sigma} \mathbf{x}^{\tau} | H^{\otimes 2n}  \ket{\w^\sigma \w^\tau} &= \frac{1}{\sqrt{2}^{2n}} \sum_{\y^{\sigma} \in \{0,1\}^n} \sum_{\y^{\tau} \in \{0,1\}^n} (-1)^{\x^{\sigma} \cdot \y^\sigma + \x^{\tau} \cdot \y^\tau }\delta_{\y^\sigma,\w^\sigma} \delta_{\y^\tau,\w^\tau} \\
    &= \frac{1}{\sqrt{2}^{2n}} (-1)^{\w^\sigma \cdot \x^\sigma + \w^\tau \cdot \x^\tau}, \numberthis
\end{align*}
we obtain
\begin{align*}
    f(\mathbf{x};J;t) &= \frac{1}{2^{2n}} \sum_{\substack{\w^\sigma \in \{0,1\}^{n}\\ \w^\tau \in \{0,1\}^{n}}}  (-1)^{\wt(\w^{\tau})  + \w^\sigma \cdot \x^\sigma + \w^\tau \cdot \x^\tau} e^{-i \sum_{ij} J_{ij}/n (1 - 2 y^{\sigma}_i)(1-2 y^{\tau}_j) t}. \numberthis
\end{align*}

Then, the output probability is given by
\begin{align*}
    p(\mathbf{x};J;t) &= f(\mathbf{x};J;t) f(\mathbf{x};J;t)^* \\
    &= \frac{1}{2^{4n}} \sum_{\substack{\w^\sigma \in \{0,1\}^{n}\\ \w^\tau \in \{0,1\}^{n}}} \sum_{\substack{\z^\sigma \in \{0,1\}^{n}\\ \z^\tau \in \{0,1\}^{n}}} (-1)^{\wt(\w^{\tau})  + \w^\sigma \cdot \x^\sigma + \w^\tau \cdot \x^\tau + \wt(\z^{\tau})  + \z^\sigma \cdot \x^\sigma + \z^\tau \cdot \x^\tau} \\
    &\qquad \qquad \times e^{-i \sum_{ij} J_{ij}/n [ (1 - 2 w^{\sigma}_i)(1-2 w^{\tau}_j) - (1 - 2 z^{\sigma}_i)(1-2 z^{\tau}_j)] t}. \numberthis
\end{align*}

We now assume that $\{J_{ij}\}$ is rationally independent, i.e., 
\begin{align}
    \sum_{ij} J_{ij} a_{ij} = 0 \text{ for } a_{ij} \in \mathbb{Z}, \text{ iff } a_{ij}=0 \text{ for all } i,j.
\end{align}
This condition is satisfied for a randomly chosen $J_{ij}$.

Then, it follows that
\begin{align}
    \lim_{T \rightarrow \infty} \frac{1}{T} \int_{0}^T dt e^{-i \sum_{ij} J_{ij}/n[ (1 - 2 w^{\sigma}_i)(1-2 w^{\tau}_j) - (1 - 2 z^{\sigma}_i)(1-2 z^{\tau}_j)] t} = \begin{cases}
        1 \text{ if } \w=\z \text{ or } \w = \overline{\z} \\
        0 \text{ otherwise}
    \end{cases},
\end{align}
where $\overline{\z} = \{1-z_i\}$ is a vector after taking binary complement for each component of $\z$. 

We now take the time average of $p(\mathbf{x};J;t)$, which yields
\begin{align*}
    \overline{p}(\mathbf{x};J) &= \lim_{T \rightarrow \infty} \frac{1}{T} \int_{0}^T dt  p(\mathbf{x};J;t) \\
    &= \frac{1}{2^{4n}} \Biggl[ \sum_{\substack{\w^\sigma \in \{0,1\}^{n} \\ \w^\tau \in \{0,1\}^{n}}} (-1)^{2\wt(\w^{\tau})  + 2\w^\sigma \cdot \x^\sigma + 2\w^\tau \cdot \x^\tau} + \sum_{\substack{\w^\sigma \in \{0,1\}^{n} \\ \w^\tau \in \{0,1\}^{n}}} (-1)^{\wt(\w^{\tau})  + \w^\sigma \cdot \x^\sigma + \w^\tau \cdot \x^\tau + \wt(\overline{\w}^{\tau})  + \overline{\w}^\sigma \cdot \x^\sigma + \overline{\w}^\tau \cdot \x^\tau} \Biggr] \\
    &= \frac{1}{2^{2n}} \Biggl[ 1 + (-1)^{n + \wt(\x^\sigma) + \wt(\x^\tau)} \Biggr]  = \frac{1}{2^{2n}} \Biggl[ 1 + (-1)^{n + \wt(\x)} \Biggr], \numberthis
\end{align*}
where we have used $\wt(\w^\tau) + \wt(\overline{\w}^\tau) = |\w^{\tau}| = n$ (and the same for $\w^{\sigma}$) and $\w^\tau \cdot \x^\tau + \overline{\w}^\tau \cdot \x^\tau = \mathbf{1}^\tau \cdot \x^\tau = \wt(\x^\tau)$ (and the same for $\x^\sigma$).

The final expression implies that $p(\mathbf{x};J;t)$ converges to $2 \times 2^{-2n}$ or $0$ as $t$ goes to infinity, depending on the Hamming weight of $\x$.
In addition, the converged values do not depend on $J_{ij}$.
Given that $\x \in X$ has the Hamming weight of $n$, we know that $p(\mathbf{x};J;t)$ converges to $2 \times 2^{-2n}$, which is consistent with the numerical observation from the previous subsection.

We finally note that this convergence behavior also supports anticoncentration as it implies that the probability distribution, $p(\mathbf{x};J;t)$, after an equilibration time, $t_{\rm eq}$, only fluctuates near $2 \times 2^{-2n}$ regardless of $J$ and $\mathbf{x}$.

\section{Estimating the number of required gates for the XY model} \label{app:estimating_number_of_gates}
In the main text, we presented the number of gates required to simulate the XY model.
For $n=100$ and $t_0 = 5 \log(n) \approx 23.03$, we estimated that the number of required gates is between around $10^{8}$ to $10^9$ depending on the target Trotter error.
This appendix provides detailed steps on how these estimations are obtained.

Our main tool is the following inequality:
\begin{align*}
    &\Biggl\Vert \exp \biggl[ -i t_0\sum_{ij} \frac{J_{ij}}{2n} \Bigl( \sigma^{(i)}_x \tau^{(j)}_x + \sigma^{(i)}_y \tau^{(j)}_y \Bigr) \biggr] - \mathcal{T}_2^M \biggr] \Biggr\Vert \leq \mathcal{O} \left[ M \Bigl( \sum_{ij} \frac{|J_{ij}|}{n} \frac{t_0}{M} \Bigr)^3 \right],
\end{align*}
where $\mathcal{T}_2$ is the expression obtained by applying the second-order Suzuki--Trotter formula for each Trotter step (for a timestep of $t_0/M$).
Thus, $\mathcal{T}_2$ can be implemented using a quantum circuit with $2n^2$ gates.

When $n$ is sufficiently large, we can change $\sum_{ij} |J_{ij}|$ to its average value given by $n^2 \sqrt{2/\pi}$. The total Trotter error is then approximated by $\epsilon_t \approx P \frac{n^3 t_0^3 }{M^2}$ where $P$ is a pre-factor.
The prefactor $P$ is estimated to $\approx 2.97 \times 10^{-4}$ by computing the distance (measured w.r.t. the operator norm) between the unitary operator, $e^{-iHt}$, and the Trotterized expression, $\mathcal{T}_2^M$, for $n=5$.

Using this value of $P$, results for $n=100$ are extrapolated.
For the target Trotter error $\epsilon_t$, the number of gates is estimated as
\begin{align}
    \#(\text{gates}) = 2n^2M = 2n^2\sqrt{\frac{P n^3 t_0^3}{\epsilon_t}}.
\end{align}
By entering $P\approx 2.97 \times 10^{-4}$, $n=10^2$, and $t_0\approx 23.03$, we obtained the number of gates $1.2 \times 10^{8}$, $3.8 \times 10^{8}$, and $1.2 \times 10^{9}$ for $\epsilon_t = 10^{-1}$, $10^{-2}$, and $10^{-3}$, respectively.

\end{document}